\definecolor{darkgreen}{RGB}{50, 150, 0}
\definecolor{darkred}{RGB}{102, 0, 0}
\colorlet{phfcolor}{red!50!orange}
\colorlet{phfrmcolor}{phfcolor!50!gray!25!white}
\colorlet{phfrmcolorlink}{blue!40!phfrmcolor}
\def\phf{\@ifnextchar[\phf@c\phf@}
\def\phf@{\@ifstar\phf@s\phf@t}
\newcommand\phf@c[1][]{{\color{phfcolor}\fontfamily{cmbr}\fontseries{sb}\selectfont \;[\,{#1}\,]\;}}
\newcommand\phf@t[1]{{\color{phfcolor}{#1}}}
\newcommand\phf@s[1]{{\colorlet{docnotelinkcolor}{phfrmcolorlink}\color{phfrmcolor}{\itshape #1}}}
\def\Ident{\id}
\newcommand{\cH}{\mathcal{H}}
\newcommand{\cB}{\mathcal{B}}
\newcommand{\cX}{\mathcal{X}}
\newcommand{\sC}{\mathscr{C}}
\newcommand{\sP}{\mathscr{P}}
\newcommand{\rmD}{\mathrm{D}}
\newcommand{\bU}{\mathbb{U}}
\newcommand{\ket}[1]{| \hspace{1pt} {#1} \rangle}
\newcommand{\ketbrad}[2]{| \hspace{1pt} {#1} \rangle \langle {#2} \hspace{1pt} |}
\newcommand{\ketbra}[1]{\ketbrad{#1}{#1}}
\newcommand{\braket}[2]{\langle {#1} \hspace{1pt} | \hspace{1pt} {#2} \rangle}
\newcommand{\bra}[1]{\langle {#1} \hspace{1pt} |}
\DeclarePairedDelimiterX\matrixel[3]{\langle}{\rangle}{%
  {#1}\hspace*{0.2ex}\delimsize\vert\hspace*{0.2ex}{#2}%
  \hspace*{0.2ex}\delimsize\vert\hspace*{0.2ex}{#3}%
}
\newcommand{\tr}{\mathrm{tr}}
\newcommand{\diff}{\mathrm{d}}
\DeclarePairedDelimiterX\norm[1]{\lVert}{\rVert}{{#1}} 
\newcommand{\id}{\mathds{1}}
\newcommand{\ip}[2]{\left\langle #1 , #2\right\rangle} 
\newtheorem{theorem}{Theorem}
\newtheorem*{theorem*}{Theorem}
\newtheorem{proposition}{Proposition}
\newcommand{\ChoiH}{\sC(\cH_{AB})}
\DeclareMathOperator{\End}{End}
\newcommand{\ChannelH}{\sC(\cH_A\to\cH_B)}
\begin{document}

\preprint{APS/123-QED}

\title{Practical and reliable error bars for quantum process tomography}

\author{Le Phuc Thinh}
\affiliation{QuTech, Delft University of Technology, Lorentzweg 1, 2628 CJ Delft, The Netherlands}
\affiliation{Centre for Quantum Technologies, National University of Singapore, 3 Science Drive 2, 117543, Singapore}

\author{Philippe Faist}
\affiliation{Institute for Quantum Information and Matter, California Institute of Technology, Pasadena CA, 91125, U.S.A.}

\author{Jonas Helsen}
\affiliation{QuTech, Delft University of Technology, Lorentzweg 1, 2628 CJ Delft, The Netherlands}

\author{David Elkouss}
\affiliation{QuTech, Delft University of Technology, Lorentzweg 1, 2628 CJ Delft, The Netherlands}

\author{Stephanie Wehner}
\affiliation{QuTech, Delft University of Technology, Lorentzweg 1, 2628 CJ Delft, The Netherlands}

\date{\today}

\begin{abstract}
Current techniques in quantum process tomography typically return a single point estimate of an unknown process based on a finite albeit large amount of measurement data. Due to statistical fluctuations, however, other processes close to the point estimate can also produce the observed data with near certainty. Unless appropriate error bars can be constructed, the point estimate does not carry any sound operational interpretation. Here, we provide a solution to this problem by constructing a confidence region estimator for quantum processes. Our method enables reliable estimation of essentially any figure-of-merit for quantum processes on few qubits, including the diamond distance to a specific noise model, the entanglement fidelity, and the worst-case entanglement fidelity, by identifying error regions which contain the true state with high probability.  We also provide a software package---QPtomographer---implementing our estimator for the diamond norm and  the worst-case entanglement fidelity. We illustrate its usage and performance with several simulated examples. Our tools can be used to reliably certify the performance of e.g.\@ error correction codes, implementations of unitary gates or more generally any noise process affecting a quantum system.
\begin{description}
\item[Usage]
Secondary publications and information retrieval purposes.
\item[PACS numbers]
May be entered using the \verb+\pacs{#1}+ command.
\item[Structure]
You may use the \texttt{description} environment to structure your abstract;
use the optional argument of the \verb+\item+ command to give the category of each item. 
\end{description}
\end{abstract}

\pacs{Valid PACS appear here}
\maketitle


\section{\label{sec:intro}Introduction}

  Quantum technologies are improving at an ever faster pace, not only by a
  concentrated academic effort but increasingly via collaborations with
  industry.  Quantum technologies require very precise manipulation and control
  of quantum systems, fueling the development of theoretical tools for precise
  calibration and characterization of quantum devices~\cite{Helstrom1969}. Notably, quantum state tomography and quantum process tomography (also known as quantum process tomography) can infer the quantum state or the quantum process that describes a quantum device, providing a natural ``quantum debugger''~\cite{Paris_book}.

Quantum state tomography aims to reconstruct the unknown state of a system with reference to a set of calibrated measurement apparatuses. Many reconstruction techniques---formally known as \emph{estimators}---and their statistical properties have been developed and understood. These estimators can be roughly categorized into two groups based on the information they return about the unknown state. Point estimators take tomographic data from experiments and return a \emph{single} quantum state, i.e.\@ a density matrix, that best approximates the true  unknown underlying physical state. Examples in this category are linear inversion and maximum likelihood estimators~\cite{Hradil97,ML16,ML17}. By contrast, region estimators return a \emph{set} of quantum states in order to account for the uncertainty associated with the reconstruction. For state tomography many region estimators have been constructed, for instance, confidence regions~\cite{CR12,FR16,BK12_regions} and Bayesian regions~\cite{Bayesian_regions1,Bayesian_regions2}. Good region estimators have the advantage of providing robust statements associated with any chosen failure probability, that is, one can control the level of confidence with which the statement is made.
Moreover, unlike point estimators, region estimators have sound operational interpretation under the influence of statistical fluctuations from finite data. Consequently, region estimators are suitable for the certification of quantum hardware for practical applications.

Many tools for quantum process tomography are adapted from quantum state
  tomography, for instance via the Choi-Jamio\l{}kowski state-process
  correspondence~\cite{processtomo}.  Beyond traditional process
  tomography~\cite{Chuang1997JMO}, there are also more advanced tools such as
  randomized
  benchmarking~\cite{Knill2008PRA_randbench,Chow2009,Kimmel2014,Jonas},
  gate-set tomography~\cite{BlumeKohout2017NComm_threshold} and compressed
  sensing~\cite{Kliesch2017arXiv_processes}, that display
  certain advantages, such as a reduced number of required measurements.  In the
  case of region estimators, some subtleties prevent a straightforward
  application of the corresponding tools for quantum states to quantum
  processes.  Indeed, the set of quantum process is in one-to-one
  correspondence with only a subset of all bipartite states, namely those whose
  reduced state on one system is maximally mixed; this constraint has to be
  incorporated explicitly in the region estimator. In this paper, we
  enrich the statistical toolbox for quantum process tomography by providing a
  confidence region estimator for quantum process inspired by the state tomography method of
  Christandl and Renner~\cite{CR12}.

Often in certifying specific applications, we are not interested in the full knowledge of the quantum process; a property of the unknown channel suffices. For example, in quantum key distribution we are often interested in how close the final state output by the protocol is to the ideal key-state; this is captured for instance by the fidelity or the trace distance of the real state to the ideal state~\cite{QKDSecurity}. Likewise, in quantum computing a relevant figure-of-merit that enables fault-tolerant computation is the error threshold captured by the diamond distance or the worst-case entanglement fidelity of the real implemented gate relative to the ideal gate~\cite{FaultTolerant}. Note, though, that even a single figure of merit may serve as a full characterization of a process: A bound on the diamond distance or the entanglement fidelity to a given fixed channel confines the true channel to a small region in channel space.  For these reasons, and because this significantly simplifies our analysis, we focus on estimators for quantum processes that report confidence intervals for a given figure of merit.

\textbf{Summary of main results:}

Our main contribution is three-fold:
\begin{enumerate}[(i)]
\item A confidence region estimator for channel tomography through the use of the Christandl-Renner-Faist estimator for states and the Choi-Jamiolkowski isomorphism between quantum states and quantum processs. We call this \emph{the biparite-state sampling method}.
\item A new confidence region estimator to \emph{directly} (without first tomographing the Choi state associated with the channel) estimate quantum processs and its proof of correctness. We call this \emph{the channel space sampling method}.
\item A software package called \emph{QPtomographer}~\cite{Our-Code-Github} accompanying our theoretical results for analysing experimental data. Our software returns \emph{quantum error bars} which captures all the information about the unknown channel derivable from the tomographic data and enables the user to construct confidence regions for any confidence level of interest.
\end{enumerate}

By comparing the differences of the two estimators, we obtain a better understanding about the relationship between probability measures on state space and channel space which may be of independent interest. Because the estimators return a confidence region, they will work without any assumption on the prior distribution of the unknown process.

To illustrate how to use our result, we consider the scenario of \emph{certifying a quantum memory} (an example of quantum property testing~\cite{propertytesting}). This corresponds to certifying that a quantum device (approximately) implements the identity channel. We consider three possible figures-of-merit: the diamond distance to the identity channel, the entanglement fidelity and the worst-case entanglement fidelity~\cite{Schumacher1996PRA_sending,Gilchrist2005PRA_processes}. Our method yields a reliable estimation of these figures-of-merit.

The paper is organized as follows. We first demonstrate in~\autoref{sec:workflow} how one can use our method to obtain reliable information in a tomography experiment. The correctness of our tools is justified in~\autoref{sec:main_results} where we present the main results. Then we study the behavior of our numerical implementations in~\autoref{sec:apply} before concluding our paper with future directions (\autoref{sec:conclusion}). We leave the formal statements and detailed derivations of our results to the Appendices.

\section{\label{sec:workflow}Setup and workflow}
In this section, we detail the main workflow associated with the tomographic tools we have developed in our paper via a concrete example. 

Suppose an experimental team has developed a working quantum memory (single qubit) and would like to certify its performance for usage within a quantum communication protocol such as entanglement distillation. In this context, one way of measuring the performance is the diamond norm distance to the identity process. The workflow for this example is illustrated in~\autoref{fig:workflow}. We remark that there are other quantities of interest which do not assume an i.i.d structure, such as for example estimating the capacity as in~\cite{Pfister2018}.

The quantum memory's performance can be determined as follows.  We assume that
  we have access to a given number of uses of the quantum memory.  The number of uses can
  be chosen freely, noting  that it affects the final error bars.

\begin{figure}
  \centering
  \includegraphics[width=\columnwidth]{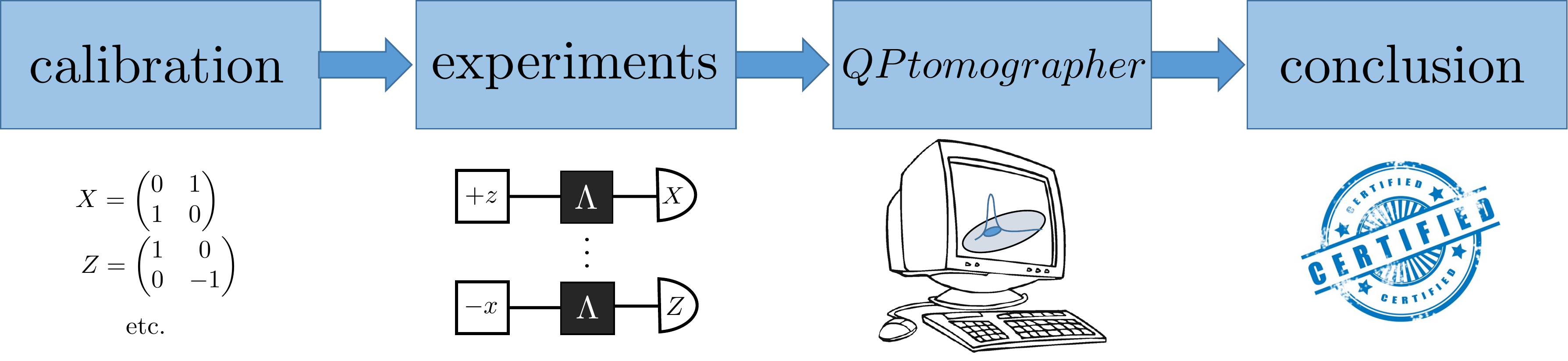}
  \caption{\label{fig:workflow}The workflow of rigorous process tomography. Our data analysis \emph{QPtomographer} supports both prepare-and-measure and ancilla-assisted experimental schemes. The conclusion is guaranteed without any prior information on the unknown quantum process.}
\end{figure}

\begin{figure} 
  \centering
  \includegraphics[width=0.5\columnwidth]{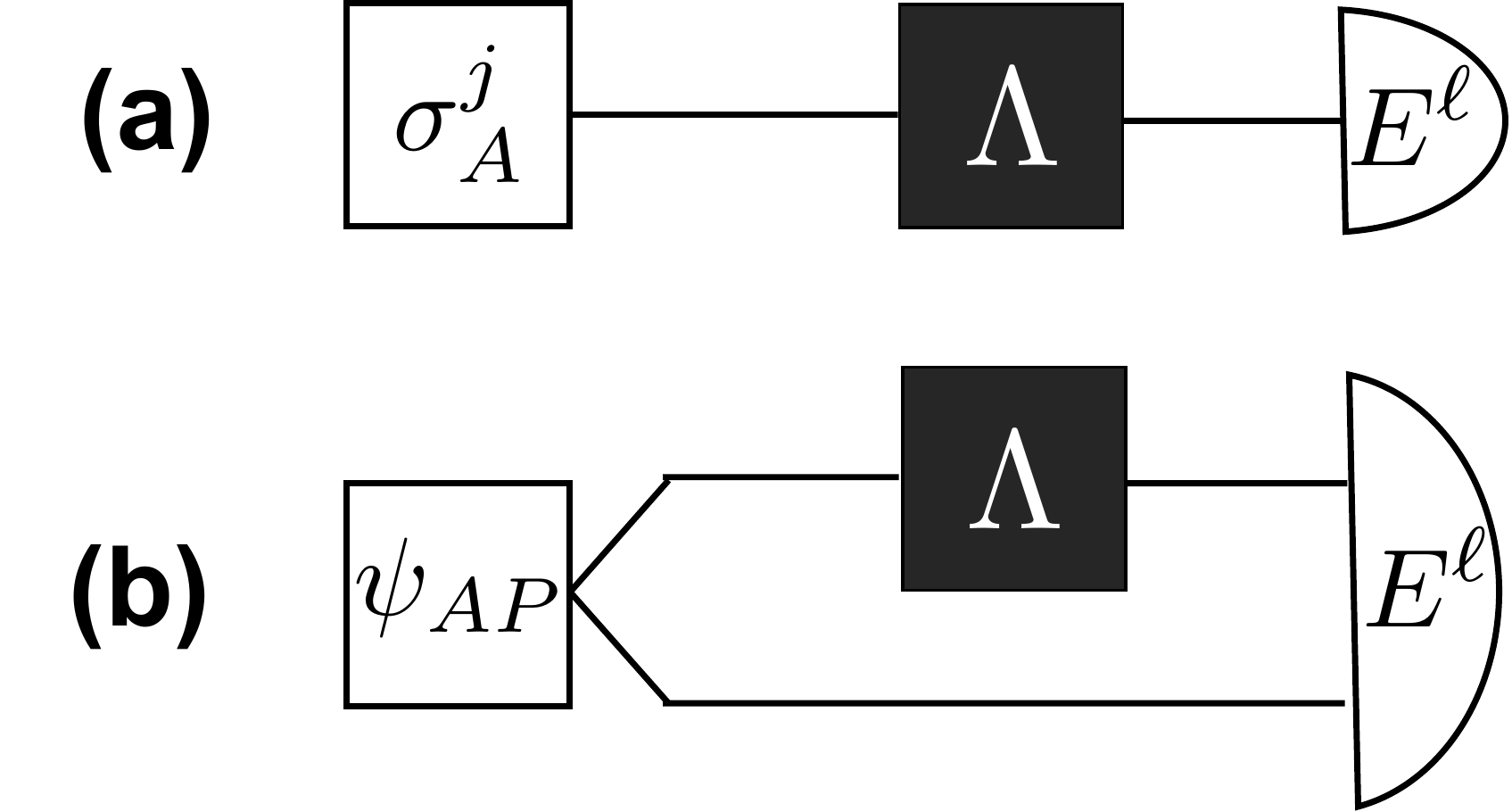}
  \caption{\label{fig:tomo}Illustrations of (a) prepare-and-measure and (b) ancilla-assisted tomographic schemes for an unknown channel $\Lambda_{A\to B}$. In prepare-and-measure, one can only prepare input state $\sigma_A^j$ which is fed into an unknown channel whose output state is measured by some POVM with elements $E_B^\ell$. In ancilla-assisted, one can prepared entangled input state with some reference system $P$, and measure jointly the output using some POVM with elements $E^\ell_{BP}. $}
\end{figure}

Moreover, in order to find out what the unknown process was, we need additional access to state preparation and measurement devices which are information-complete (at least in the physical degrees of freedom where the unknown process acts). In this example, the set of state preparations are the Pauli eigenstates $\ket{\pm x}, \ket{\pm y}, \ket{\pm z}$, while the set of measurement devices are Pauli $X, Y, Z$ measurements. We assume that each use of the quantum process are independent, and that the same unknown quantum process is applied for each run of the experiment, yielding statistics which are independent and identically distributed (i.i.d.). While here we consider a \emph{prepare-and-measure} scenario as depicted in~\autoref{fig:tomo}\textbf{(a)}, it is also possible to consider an \emph{ancilla-assisted} scheme (\autoref{fig:tomo}\textbf{(b)}).

The first step (see~\autoref{fig:workflow}) involves calibrating the state preparation and measurement devices to have $\ket{\pm x}, \ket{\pm y},\ket{\pm z}$ state preparations and $X, Y, Z$ measurements. After this calibration procedure has succeeded, one performs a chosen number $n=45000$ of individual experiments. Each experiment consists of the following steps
\begin{itemize}
\item Prepare an input state by executing one of the devices $\ket{\pm x}, \ket{\pm y}, \ket{\pm z}$ (perhaps at random).
\item Apply the (unknown) quantum memory to the said input state.
\item Measure the output state using one of the possible $X, Y, Z$ measurement devices (perhaps at random).
\item Record the outcome of this experiment in a dataset~$E$.
\end{itemize}
We remark that the preparation and measurement should yield sufficient data in the sense that all combination of input states and measurements should be chosen (perhaps at random).

Such a dataset $E$ can then be analyzed by our software \emph{QPtomographer}. One provides to our software the information about the measurement settings and the observed dataset. Then, using a Metropolis-Hastings sampling method, the software determines a specific type of distribution of the figure-of-merit~\autoref{fig:output} along with corresponding \emph{quantum error bars} $(v_0,\Delta,\gamma)$.  The value $v_0$ is the location of the maximum in~\autoref{fig:output}, while $\Delta$ and $\gamma$ measure the spread of the error. In our example, the analysis based on the input data set $E$ with $n=45000$ measurement records returned the quantum error bars
$$(v_0=0.058,\Delta=0.006,\gamma=0.00019),$$
which determine the parameters of an appropriate fit function (red curve of~\autoref{fig:output}).
The quantum error bars contain all the information about the error analysis.  Namely, they (i) form a concise description of the error, (ii) provide an intuitive idea of the magnitude of the error, and (iii) can easily determine confidence regions for the quantum state or quantum process~\cite{FR16}.  In this sense, quantum error bars are perfectly analogous to classical error bars: The latter are indeed a concise, intuitive description of the error from which one easily determines rigorous confidence intervals.  For this reason it is a natural object to report at the end of a process tomography procedure.

If one wishes to actually derive rigorous confidence regions for the diamond norm distance, one may proceed as follows. First, one fixes a  confidence level, say $\alpha=99\%$, which sets the corresponding error parameter as $\epsilon=1-\alpha=10^{-2}$. By Theorem~\ref{thm:channel-space}, for $n=45000$ (size of our dataset $E$) and $d_A=d_B=2$, we need to find a region of diamond norm distance values with weight at least
\begin{align*}
1- \frac{\epsilon}{2}\binom{2n+d^2_Ad^2_B-1}{d^2_Ad^2_B-1}^{-2} \geq 1-10^{-151}\,,
\end{align*}
With reference to~\autoref{fig:output}, this means we need to find the $x$-position such that the area under the curve exceeds $1-10^{-151}$. A numerical integration leads to a region at least as large as $[0,0.24]$. Together with the enlargement by
\begin{align*}
\delta = \sqrt{\frac{2}{n}\left(\ln\frac{2}{\epsilon}+3\ln \binom{2n+d^2_Ad^2_B-1}{d^2_Ad^2_B-1}\right)} = 0.1\,
\end{align*}
(to exclude nearby channels which could result in the same observed dataset with high probability) the final confidence region is $[0,0.34]$. This means we have certified that the diamond norm distance of the unknown quantum memory to an ideal quantum memory is at most $0.34$ with $99\%$ confidence. In general, increasing the number $n$ of measurement data points will shrink this confidence interval (due to the exponential decays in the diamond distance density, see also Appendix~\ref{sec:convergence}). 

\begin{figure}
  \centering
  \includegraphics[width=\columnwidth]{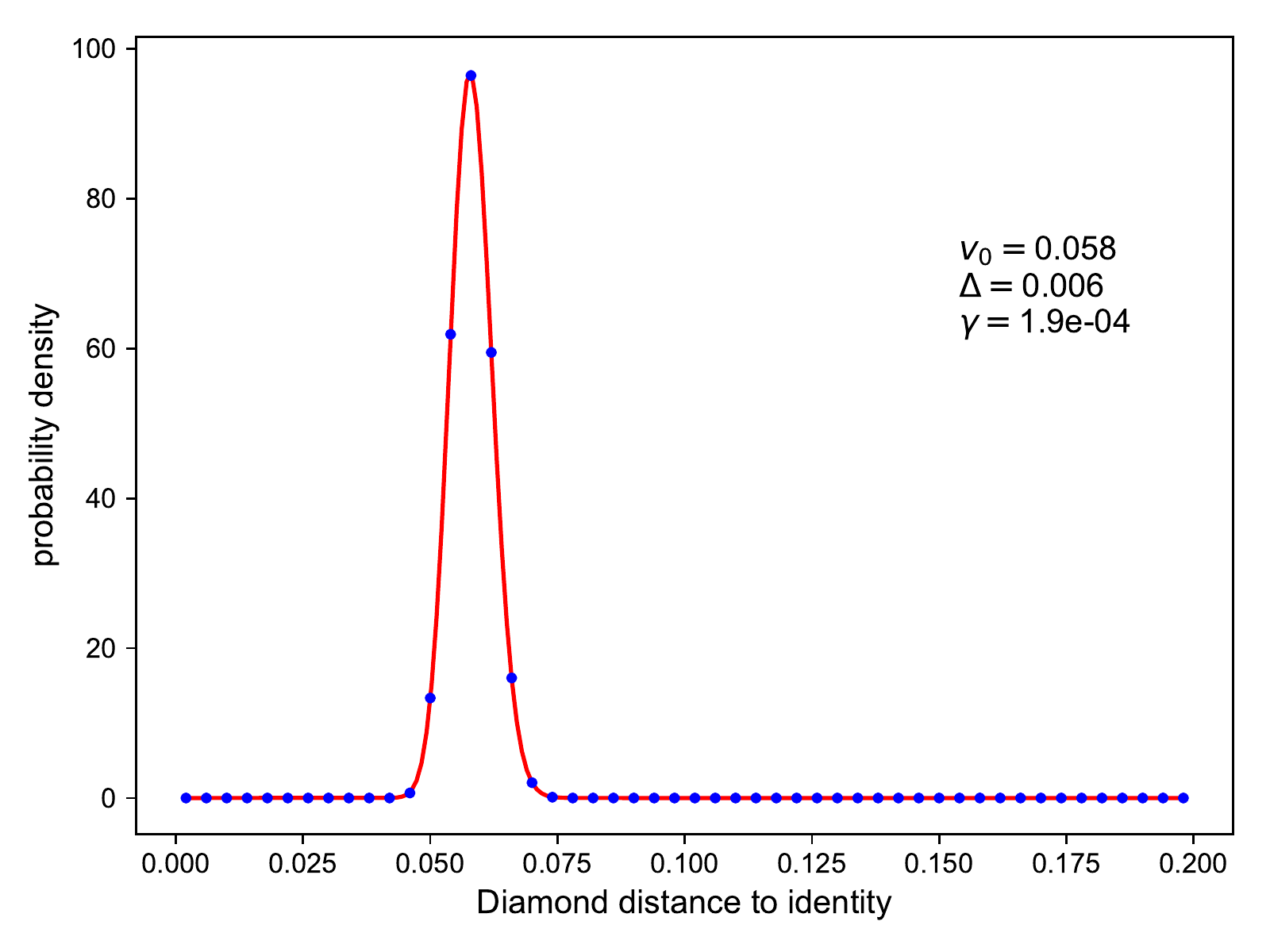}
  \caption{\label{fig:output}Typical output from \emph{QPtomographer}. The (blue) dots form the estimated distribution of values of the diamond norm distance to the identity channel as determined by the Metropolis-Hastings random walk. These are well-fitted to the (red) curve, which is compactly described by the triple of numbers $(v_0,\Delta,\gamma)$ which we called \emph{quantum error bars}. Here $v_0$ is the position of the peak, $\Delta$ is half width at relative height $1/e$, and $\gamma$ is a measure of skewness. These data encode information about the performance of the quantum memory, and enable us to construct confidence intervals certifying its quality.}
\end{figure}

  We emphasize that the unnaturally large size of the regions is due in large
  part to a technical difficulty in the proofs of our bounds that is dealt with
  by employing tools that are known not to be tight in this context.  For this
  reason, the quantum error bars are more informative than the actual final
  confidence regions.

This concludes the general workflow associated with our tomographic tools. The next section explain at a high level how our software transform tomographic data into confidence regions.

\section{\label{sec:main_results}Main results}

Our software package \emph{QPtomographer} is built on top of two rigorously proven theoretical constructions. These are confidence region estimators based on the bipartite-state sampling method or the channel-space sampling method. The bipartite-state method works in the ancilla-assisted scheme, while the channel-space method works in both ancilla-assisted and prepare-and-measure schemes. This section gives a high level overview of the constructions together with the main ideas behind the proof of correctness, and leave the details to Appendix~\ref{sec:naive} and Appendix~\ref{sec:main}, respectively. We begin with a brief motivation for confidence region estimators.

\subsection{Confidence region estimators}

In the limit of infinite data (i.e.\@ the number of records in dataset $E$ is infinity), it is possible to exactly compute the probabilities of each measurement outcome from $E$ and reconstruct the unknown channel by linear inversion on these observed probabilities~\cite{Paris_book}. However, in the practical scenario of finite data (i.e.\@ dataset $E$ contains $n$ records) statistical fluctuations will imply the failure of all point estimation methods such as linear inversion or maximum likelihood estimation. This is due to the fact that channels close to the point estimate can produce the same dataset with high probability.

In order to make statistically rigorous and operationally sound statements on the unknown channel in this regime, we turn to region estimators, which are generalisations of the process of constructing error bars. We will look at a type of region estimators known as \emph{confidence} region estimators. These are maps from data $E$ to subsets $S_E\subseteq\ChannelH$ of the set of quantum processes with the property that for all $\Lambda\in\ChannelH$
\begin{align}
\Pr_E[\Lambda\in S_E]\geq\alpha,
\end{align}
where $\alpha$ is a prefixed confidence level and the probability is evaluated over the random data $E$ according to the distribution $\Pr[E|\Lambda]$. It is important to note that confidence is a property of the entire estimator (the procedure $E\mapsto S_E$) and not of any particular subset $S_E$ produced by the estimator.

The operational meaning of confidence region estimators can be understood as follows. Suppose the black box implementing the unknown channel $\Lambda_{\mathrm{true}}$ is in fact prepared by a referee, who knows exactly which channel the black box applies. We proceed with a sequence of state preparations, applications of the channel and measurements of the output states to obtain a dataset $E$. Then we apply the estimator on $E$ to get $S_E$. Repeating this procedure a large number of times, say $N=10^5$, if $r$ denotes different repetitions then we obtain different datasets $E(r=1),...,E(r=10^5)$ with corresponding conclusions that the true channel $\Lambda_\textrm{true}$ should be in the region $S_{E(r=1)},...,S_{E(r=10^5)}$. Now since the referee knows exactly the unknown channel, the referee can evaluate the \emph{proportion of correct conclusions}
\begin{align*}
\frac{\left|\{r:r=1, ..., N \text{ and } \Lambda\in S_{E(r)} \text{ is true }\}\right|}{N}\,.
\end{align*}
If the estimator used is a confidence region estimator with confidence level $\alpha=0.99$, then in the limit of $N\rightarrow\infty$ this proportion is at least $0.99$. This is the meaning of confidence: the correct conclusion is guaranteed for a large number of uses of the estimator, regardless of the unknown channel. Note that for a specific use of the estimator which returns $S_E$, we \emph{cannot} draw the conclusion that $\Lambda\in S_E$.

An alternative justification of confidence regions comes from a Bayesian point of view: Bayesian tomography uses outcomes of measurements to update a prior distribution about the quantum state to a posterior distribution.  While this posterior clearly depends on the prior, it is known that when enough data is collected, the posterior distribution is no longer sensitive to the exact prior which was originally used (as long as the original prior has full support).  Now consider a high-weight region of a posterior distribution, which is also known as a \emph{credible region}.  We may ask to what extent this region remains a credible region if we change the underlying prior. It turns out that for a large enough number of measurements, 
   we may find regions which are credible regions for \emph{any} prior, except for some exceptionally unlikely measurement datasets~\cite{CR12}.  Such regions are precisely confidence regions.  

\subsection{\label{sec:main_estimators}Our confidence region estimators}

Our method of constructing region estimators uses the information about the underlying unknown channel via the \emph{likelihood function} defined generically for an observed dataset $E$ as
\begin{align}
\mathcal{L}(\Lambda|E) = \Pr(E|\Lambda),
\end{align}
where the probability of the dataset $E$ under the assumption that the unknown channel is $\Lambda$ is given by Born's rule. The specific form of the likelihood function depends on the scenarios and assumptions we postulate, c.f.\@ Appendices~\ref{sec:naive},\ref{sec:main}. The likelihood function can be seen as giving a ranking about which channel best produce the observed dataset. We now present our methods of process tomography.

{\bf Bipartite-state sampling method:} the main idea behind this method is that a quantum process is in correspondence with bipartite Choi states via the Choi-Jamiolkowski isomorphism. Hence, we can construct confidence regions for quantum states using the method of Christandl-Renner, and then perform an additional classical post-processing step to recover a confidence region for quantum processes.

Let us now first assume the use of an ancilla-assisted tomographic scheme~\autoref{fig:tomo}\textbf{(b)}, which loosely corresponds to physically performing the Choi-Jamiolkowski isomorphism in the laboratory. This means having access to a full rank bipartite entangled state $\ket{\psi_{AP}}$ as input to the channel, and performing tomography on the output state $\rho_{BP}:=\Lambda_{A\to B}(\psi_{AP})$ which is the unknown Choi state associated with the unknown channel. 

Treating $\rho_{BP}$ as the unknown state in a state tomography problem, we now apply the Christandl-Renner method of constructing confidence regions from tomographic data. Recall that the Christandl-Renner confidence region is constructed from the measure
\begin{align}
\diff \mu_E(\sigma_{AB}) := c_E^{-1}\tr(\sigma_{AB}^{\otimes n} E)\diff\sigma_{AB}
  \label{eq:def-mu-E-sigmaAB}
\end{align}
where $c_E=\int\tr(\sigma_{AB}^{\otimes n} E)\diff\sigma_{AB}$ is the normalizing constant, and $\diff\sigma_{AB}$ is the uniform distribution on bipartite density matrices (obtained by tracing out a Haar random pure state on a larger space). Note that $\tr(\sigma_{AB}^{\otimes n} E)$ is the likelihood function for the outcome $E$ given the state $\sigma_{AB}$ in this scenario. Confidence regions for the unknown $\rho_{AB}$ can be constructed from $\diff\mu_E(\sigma_{AB})$ as the following proposition asserts.
\begin{theorem}[Christandl \& Renner~\cite{CR12}, informal]
\label{thm:CRtomo}
Let $n$ be the number of systems measured by a POVM during tomography and $1-\epsilon$ be the desired confidence level. Let $S_{\mu_E}\subseteq\rmD(\cH_{AB})$ be any set of bipartite states with high weight under the probability measure $\diff\mu_E(\sigma_{AB})$. Then the enlargement in purified distance $S_{\mu_E}^\delta$ where
\begin{align}\label{eq:maintext_enlargement_bipartite}
\delta = \sqrt{\frac{2}{n}\left(\ln\frac{2}{\epsilon}+2\ln s_{2n,d^2_{AB}}\right)}\,
\end{align}
with $s_{n,d}:=\binom{n+d-1}{d-1}$ is a confidence region of confidence level $1-\epsilon$.
\end{theorem}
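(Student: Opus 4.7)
The plan is to adapt the Christandl-Renner argument, which bridges Bayesian credibility of $\mu_E$ with frequentist confidence via the quantum post-selection technique. Fix an arbitrary true state $\rho_{AB}$. The failure event $\{\rho_{AB}\notin S_{\mu_E}^{\delta}\}$ forces $S_{\mu_E}$ to lie entirely outside the open $\delta$-ball (in purified distance) around $\rho_{AB}$; combined with the hypothesis that $S_{\mu_E}$ carries large $\mu_E$-weight, this in turn forces $\mu_E$ itself to concentrate outside the $\delta$-ball. I would then turn this reformulated statement into an upper bound on the frequentist failure probability.

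First I would write the failure probability $\Pr_{E\sim\rho_{AB}^{\otimes n}}[\rho_{AB}\notin S_{\mu_E}^{\delta}]$ as a sum of $\tr(\rho_{AB}^{\otimes n} E)$ over ``failing'' outcomes $E$, and invoke the post-selection technique to replace the i.i.d.\ state $\rho_{AB}^{\otimes n}$ by a universal de Finetti state $\int \sigma_{AB}^{\otimes n}\,\diff\sigma_{AB}$, at the cost of the multiplicative dimension factor $s_{2n,d^2_{AB}}^{2}$. The doubling to $2n$ copies and the squaring arise from the purification trick used to lift pure-state symmetric-subspace bounds to the mixed-state setting, applied on the purifying space of dimension $d^2_{AB}$. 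After this substitution, the failure probability is (up to the dimension factor) exactly the integral of $\tr(\sigma_{AB}^{\otimes n} E)$ against the uniform measure restricted to $\sigma_{AB}\notin B_\delta(\rho_{AB})$, which is $c_E$ times the $\mu_E$-measure of that complement. A standard Fuchs-type inequality then bounds $\tr(\sigma^{\otimes n} E)$ in terms of the fidelity between $\sigma^{\otimes n}_{AB}$ and $\rho^{\otimes n}_{AB}$, giving exponential suppression of order $e^{-n\delta^2/2}$ for $\sigma$ outside the $\delta$-ball; balancing this against the dimension factor and a ``high-weight'' tolerance of $\epsilon/2$ recovers exactly the stated $\delta$.

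The hard part will be rigorously extracting the dimension factor $s_{2n,d^2_{AB}}^{2}$ with the correct exponent from the post-selection step: the constants are sensitive to whether one symmetrises over $n$ or $2n$ copies, and to the precise de Finetti measure employed. A secondary difficulty is ensuring the argument works for \emph{any} subset $S_{\mu_E}$ of high $\mu_E$-weight, not merely balls or level sets of the likelihood; the enlargement in purified distance is the shape-agnostic mechanism that absorbs the statistical indistinguishability of nearby states, and its calibration is what produces the $\sqrt{2/n}$ scaling. Because the post-selection tool is known not to be tight in this context, the resulting $\delta$ is loose, which is consistent with the large confidence intervals observed in the numerical illustration of \autoref{sec:workflow}.
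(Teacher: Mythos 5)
Your high-level strategy is the right one---it is the Christandl--Renner route, and it is also the skeleton of this paper's own proof of the channel-space analogue (Theorem~\ref{thm:channel-space} in Appendix~\ref{sec:main})---but there is a genuine gap at the central step. After writing $P_{\mathrm{fail}}(\rho_{AB})=\sum_E \tr(\rho_{AB}^{\otimes n}E)\,\chi(\rho_{AB};\overline{S^\delta_{\mu_E}})$ and applying the symmetric-subspace operator inequality $\ketbra{\psi_\rho}^{\otimes n}\leq s_{n,d^2_{AB}}\int \ketbra{\phi}^{\otimes n}\diff\phi$ on the purifying space (which is where the $d^2_{AB}$ comes from), what you obtain is an \emph{unrestricted} integral over $\sigma_{AB}$ multiplied by the indicator $\chi(\rho_{AB};\overline{S^\delta_{\mu_E}})$, which still refers to the fixed true state $\rho_{AB}$. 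Nothing in that substitution restricts the integration domain to the complement of $B_\delta(\rho_{AB})$, so your claim that the failure probability becomes ``$c_E$ times the $\mu_E$-measure of that complement'' does not follow. Likewise, a Fuchs-type bound on $\tr(\sigma_{AB}^{\otimes n}E)$ in terms of $F(\sigma_{AB}^{\otimes n},\rho_{AB}^{\otimes n})$ cannot hold for a general tomographic effect $E$ (take $E=\id$); the exponential suppression in the actual proof does not come from the data POVM at all.

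The missing device is the approximation of the indicator function by a physically realizable test: one introduces $k=n$ \emph{additional} copies of the purified state and the Holevo covariant POVM element $T_{S^{\delta/2}_{\mu_E}} = s_{k,d^2_{AB}}\int_{\overline{S^{\delta/2}_{\mu_E}}}\ketbra{\phi}^{\otimes k}\diff\phi$, which satisfies $\chi(\rho_{AB};\overline{S^{\delta}_{\mu_E}})\leq \tr(\ketbra{\psi_\rho}^{\otimes k}T_{S^{\delta/2}_{\mu_E}})+s_{k,d^2_{AB}}e^{-k\delta^2/2}$ because states in $\overline{S^\delta_{\mu_E}}$ and in $S^{\delta/2}_{\mu_E}$ are at least $\delta/2$ apart in purified distance. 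Only then can the operator inequality be applied to all $n+k=2n$ copies at once---this, and not a ``purification squaring,'' is why the argument of $s$ is $2n$---turning the event condition into $\tr(\ketbra{\phi}^{\otimes k}T_{S^{\delta/2}_{\mu_E}})$, which is exponentially small for $\phi$ over $S_{\mu_E}$ and leaves only the residual weight $\mu_E(\overline{S_{\mu_E}})\leq \tfrac{\epsilon}{2}s_{2n,d^2_{AB}}^{-1}$. The two powers of $s_{2n,d^2_{AB}}$ multiplying $e^{-n\delta^2/2}$ in the final balance then arise one from the operator inequality and one from the indicator-approximation error, rather than both from the de Finetti step as you suggest. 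For the fully worked-out version of exactly this argument (with the additional complication of the Choi-state marginal constraint, handled by Proposition~\ref{pro:operatorinequality}), see the proof of Theorem~\ref{thm:channel-space} in Appendix~\ref{sec:main}.
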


Intuitively, we can think of the enlargement as a way to exclude nearby states/channels (relative to a proposed region of states/channels) that can give rise to the same observed dataset $E$ with nonzero probability.

The confidence region $S_{\mu_E}^\delta$ contains bipartite quantum states which are not Choi states. This is due to the fact that the method of Christandl and Renner does not \emph{a priori} allow the Choi state constraint $\tr_{B}(\sigma_{AB})=\id_A/d_A$. Hence, we have to invent an additional post-processing step to map $S_{\mu_E}^\delta$ to a region consisting of exclusively Choi states. By the Choi-Jamiolkowski isomorphism we then have a confidence region for the unknown quantum process. The detailed explanation is left to Appendix~\ref{sec:naive}.\\

{\bf Channel-space sampling method:} this method is a new construction of confidence region that directly returns channel-space confidence regions. Compared to the bipartite-state method, the channel-space method works in both the prepare-and-measure and ancilla-assisted tomographic schemes and takes into account the \emph{a priori} knowledge that we are estimating a quantum process. This leads to computational efficiency relative to the bipartite-state method because the additional post-processing step of the bipartite-state method is not required here.

The estimator is constructed from the probability measure on the set of quantum processs $\ChannelH$
\begin{align}
\diff\nu_E(\Lambda):=c'^{-1}_E\mathcal{L}(\Lambda|E)\diff \nu(\Lambda)
\label{eq:def-nu-E-channels}
\end{align}
where $\mathcal{L}(\Lambda|E)$ is the likelihood for the event $E$ given a channel $\Lambda$, $c'_E=\int\mathcal{L}(\Lambda|E)\diff \nu(\Lambda)$ serves as a normalizing constant and $\diff \nu(\Lambda)$ is the Haar-induced measure on $\ChannelH$. The likelihood function is adapted depending on prepare-and-measure or ancilla-assisted tomographic scheme and is defined as the probability of obtaining the dataset $E$ given a channel $\Lambda$. Informally, this measure captures the information of the unknown channel as revealed by the observed dataset $E$ in an unbiased manner (that is without using any prior knowledge on the unknown).

Given this measure, we obtain
 \begin{theorem}[informal]
 \label{thm:channel-space}
Let $n$ be the number of channel uses during tomography and $1-\epsilon$ be the desired confidence level. Let $R_{\nu_E}\subseteq\ChannelH$ be a set of channels with high weight under the probability measure $\diff \nu_E(\Lambda)$. Then the enlargement in purified distance (for quantum processs, induced from states) $R_{\nu_E}^\delta$ where
\begin{align}\label{eq:maintext_enlargement_channel}
\delta = \sqrt{\frac{2}{n}\left(\ln\frac{2}{\epsilon}+3\ln s_{2n,d^2_{AB}}\right)}\,.
\end{align}
with $s_{n,d}:=\binom{n+d-1}{d-1}$ is a confidence region with confidence level $1-\epsilon$.
\end{theorem}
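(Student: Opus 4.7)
The plan is to prove that for any true channel $\Lambda_0$, the probability over the random dataset $E$ that $\Lambda_0\notin R_{\nu_E}^\delta$ is at most $\epsilon$; that is precisely what it means for $E\mapsto R_{\nu_E}^\delta$ to be a confidence region of level $1-\epsilon$. I would mirror the Christandl--Renner argument underlying Theorem~\ref{thm:CRtomo}, but carry it out directly on channel space, using the Choi--Jamio\l{}kowski isomorphism only as a technical device for invoking state-tomography machinery when convenient.

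First I would rewrite the likelihood in a single form that covers both schemes, namely $\mathcal{L}(\Lambda|E)=\tr\bigl(J(\Lambda)^{\otimes n}\tilde E\bigr)$, where $J(\Lambda)\in\DensityHH$ is the Choi state of $\Lambda$ and $\tilde E$ is an effective $n$-copy POVM element built from the recorded inputs and measurement outcomes. In the ancilla-assisted case this identity is immediate; in prepare-and-measure it follows by absorbing the chosen input ensembles into $\tilde E$. The channel posterior $d\nu_E(\Lambda)$ is then formally analogous to the bipartite-state posterior in \autoref{eq:def-mu-E-sigmaAB}, except that its support is restricted to the submanifold of Choi states $\{\sigma_{AB}:\tr_B\sigma_{AB}=\id_A/d_A\}$ inside $\DensityHH$.

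Next I would adapt the Christandl--Renner postselection/de~Finetti-style argument to this restricted setting. The two essential ingredients are (i) a bound showing that the average of $J(\Lambda)^{\otimes n}$ against the Haar-induced channel measure is, up to a polynomial factor in $s_{2n,d_{AB}^2}$, majorized by a projector onto an appropriate symmetric subspace of $(\cH_A\otimes\cH_B)^{\otimes n}$, and (ii) a purified-distance enlargement step that converts concentration of $d\nu_E$ into a genuine confidence statement about the true $\Lambda_0$. The exponent $\delta$ in \autoref{eq:maintext_enlargement_channel} would emerge from the standard chain bounding $\Pr_E[\Lambda_0\notin R_{\nu_E}^\delta]$ by $\epsilon$, once the polynomial prefactors from (i) are absorbed into the logarithm.

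The main obstacle is that the submanifold of Choi states has measure zero inside $\DensityHH$, so Theorem~\ref{thm:CRtomo} cannot be applied as a black box. The extra $\ln s_{2n,d_{AB}^2}$ in \autoref{eq:maintext_enlargement_channel} (compared with \autoref{eq:maintext_enlargement_bipartite}) most plausibly originates here: controlling either a Radon--Nikodym-type ratio between the Haar-induced channel measure and the pushforward of the uniform bipartite-state measure, or equivalently the dimension of a ``channel-symmetric subspace'' relative to the full symmetric subspace, costs one additional polynomial factor in $s_{2n,d_{AB}^2}$. Making this quantitative comparison tight enough to hit exactly the stated $\delta$, without inflating any earlier step, will be the most delicate bookkeeping in the proof.
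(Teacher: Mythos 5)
Your overall architecture matches the paper's: bound the worst-case failure probability $\Pr_E[\Lambda_0\notin R_{\nu_E}^\delta]$ uniformly in $\Lambda_0$, write the likelihood as $\tr\bigl(J(\Lambda)^{\otimes n}\tilde E\bigr)$ for both schemes, and convert concentration of $\diff\nu_E$ into a confidence statement via a purified-distance enlargement (the paper does this with a Holevo covariant POVM approximating the indicator of $\overline{R_{\nu_E}^\delta}$ on $k=n$ extra copies). You also correctly identify the central obstacle: the Choi constraint $\tr_B\sigma_{AB}=\id_A/d_A$ defines a measure-zero submanifold of $\DensityHH$, so Theorem~\ref{thm:CRtomo} cannot be invoked as a black box.

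However, the proposal leaves exactly that obstacle unresolved, and this is a genuine gap rather than bookkeeping. The entire argument hinges on a de Finetti-type operator inequality for the \emph{restricted} Haar-induced measure on purified Choi states, namely $\ketbra{\Lambda}^{\otimes n}\leq s^2_{n,d^2_{AB}}\int_{\sP\sC}\diff\nu(\ket{\Psi})\,\ketbra{\Psi}^{\otimes n}$ (Proposition~\ref{pro:operatorinequality}). The standard route you allude to---``majorized by a projector onto an appropriate symmetric subspace''---does not work here, because the $n$-th moment of $\diff\nu(\ket{\Psi})$ over the orbit $\{(\id_A\otimes U_{BA'B'})\ket{\Psi_0}\}$ is \emph{not} proportional to a projector, unlike the uniform spherical measure; and your alternative, a Radon--Nikodym comparison between the channel measure and the pushforward of the uniform bipartite-state measure, is unworkable precisely because the former is supported on a null set of the latter. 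The paper's actual mechanism is different: it applies Carath\'eodory's theorem to decompose $\int_{\sP\sC}\diff\nu(\ket{\Psi})\,\ketbra{\Psi}^{\otimes n}$ into at most $D+1\leq s^2_{n,d^2_{AB}}$ pure terms, uses left-invariance of $\diff\nu$ to rotate the maximum-weight term onto $\ketbra{\Lambda}^{\otimes n}$, and drops the remaining positive terms; the squared dimension bound $s^2$ (rather than $s$) is what produces the coefficient $3$ in \autoref{eq:maintext_enlargement_channel} versus $2$ in \autoref{eq:maintext_enlargement_bipartite}. Your guess that the extra logarithm comes from ``the dimension of a channel-symmetric subspace'' is in the right spirit, but without supplying this lemma (or an equivalent) the proof does not go through.
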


\emph{Confidence interval for figures-of-merit:} in practice, we choose the region in \autoref{thm:channel-space} for any chosen figure-of-merit to be the subset of channels whose figure-of-merit is better than a certain threshold. For the diamond norm distance to the ideal channel, we consider
\begin{align}
R = \left\{\Lambda:\nicefrac{1}{2}\norm{\Lambda-\Lambda^{\mathrm{ideal}}}_\diamond \leq \gamma_E\right\}\,,
\end{align}
and for the worst-case entanglement fidelity we consider
\begin{align}
R = \{\Lambda:F_{\mathrm{worst}}(\Lambda) \geq \gamma_E\}\,.
\end{align}
We can work directly with the figure-of-merit by push-forwarding the measure $\diff\nu_E(\Lambda)$ to the space of figures-of-merit, which is typically the reals $\mathbb{R}$ or the interval $[0,1]$, and obtain the histogram $h(v)$ over different values of the figure-of-merit; the enlargement of these regions under the purified distance is translated into a loss in the value of the figures-of-merit: $\gamma_E\to\gamma_E+d_A\delta/2$ for diamond distance and $\gamma_E-d_A\delta$ for worst-case entanglement fidelity. The loss vanishes with increasing number of channel uses (as evident in~\autoref{eq:maintext_enlargement_bipartite} and~\autoref{eq:maintext_enlargement_channel}), which allows reliable estimation of the figure-of-merit.


\subsection{Numerical implementations}

The previous section outlined the theoretical results underpinning our software package. We observe a reduction from the problem of constructing confidence regions to a problem of approximating the measures $\diff\nu_E(\Lambda)$ or $\diff \mu_E(\sigma_{AB})$. Solving this latter problem is the objective of the numerical implementations.

{\bf Computing $\diff\nu_E(\Lambda)$ and $\diff \mu_E(\sigma_{AB})$:} in order to approximate a probability measure, we will take the Monte-Carlo approach of producing its samples, i.e.\@ producing a histogram approximating a measure. More samples lead to better approximation but require more computational resources. Sampling according to $\diff \mu_E(\sigma_{AB})$ (i.e. the biparite-state method) has been implemented in~\cite{FR16}, and sampling according to $\diff\nu_E(\Lambda)$ (i.e. the channel space method) can be obtained by similar methods. More precisely, $\diff\nu_E(\Lambda)$ can be approximated by Metropolis-Hastings sampling~\cite{MHalgorithm} on channel space, which reduces to the ability of sample a ``uniformly random quantum process'' according to $\diff \nu(\Lambda)$. To do this, it suffices to sample a unitary operator at random according to the Haar measure, by Stinespring dilation (see Appendix~\ref{sec:measures}). Crucially, because we use the Metropolis-Hasting algorithm, it is not necessary to calculate the normalizing constants $c_E$ and $c'_E$ which are difficult to obtain in practice. The parameters required to run the Metropolis-Hastings algorithm are the initial starting point and a jump distribution (a distribution from which we know how to produce samples). For the jump distribution, we have implemented two versions which we call $e^{iH}$ and elementary rotation.  

The Metropolis-Hastings algorithm starts with an initial point $U_0$ in the sample space, which we take to be the identity unitary operator, and conducts a random walk around this space. For each iteration, starting from current location $U$ the jump distribution produces a candidate $U'$ (depending on the current location) for a sample---a unitary matrix---which could potentially comes from $\diff \nu_E(\Lambda)$. This candidate is accepted to be a sample of $\diff \nu_E(\Lambda)$ with acceptance probability $a$, and upon acceptance the current location is updated to this point. The acceptance probability is defined to be the likelihood ratio (i.e.\@ probability ratio) of $U'$ to produce the observed dataset $E$ with respect to the the current location $U$. This can be computed as the state preparations and measurements are known from calibration, and the dataset $E$ is given from the experiment. The sequence of points $\{U_i\}$ visited in this fashion, albeit correlated, are asymptotically distributed according to $\diff \nu_E(\Lambda)$~\cite{MHalgorithm}.

{\bf Extracting information for a given figure-of-merit:} in terms of a given figure-of-merit $f$, the distribution $\diff \nu(\Lambda)$ can be represented as a density function $h(v)$ for any possible value $v$ of the figure-of-merit associated with the unknown channel $\Lambda$. For all practical purposes, our goal is to obtain a compact description of this density. Clearly, this function is well approximated by the sequence of values $\{f(U_i)\}$ derived from the output of the Metropolis-Hastings algorithm by simply evaluating the figure-of-merit at each point $U_i$. We organize $\{f(U_i)\}$ into bins of some size to produce a histogram approximating $h(v)$. This histogram is further subjected to a statistical fit analysis to obtain quantum error bars $(v_0,\Delta,\gamma)$, which contain enough information to reconstruct a good approximation of $h(v)$.

We consider two fit models in this paper. The fit model given in Ref.~\cite{FR16}
\begin{align}
  \ln \mu^{\mathrm{fit,\#1}}(v) = -a_2 v^2 - a_1 v + m\,\ln v + c\
  \label{eq:fit-model-1}
\end{align}
does not have great agreement in our numerical examples (\autoref{sec:apply}) to the histogram bins. This leads us to develop an empirical model
\begin{align}
  \ln \mu^{\mathrm{fit,\#2}}(v) = -a_2 v^2 - a_1 v + m\,(\ln v)^p + c\ ,
\end{align}
which fits better to our examples (\autoref{sec:apply}). In any case, it is important to note that the functions $\mu(v)$ and $h(v)$ both decay exponentially fast (for the same reasons as in Ref.~\cite{FR16}).  Hence, when trying to find high-weight regions it is not crucial to know the shape of the function exactly; rather, any imprecision on the shape of the function incurring an error on the estimated weight of a region, can be compensated by only a small increase in the region size (a property of the exponential function). Hence, whenever unspecified, we report quantum error bars as given using fit model \#1 and as presented in Ref.~\cite{FR16}, keeping in mind that in a paranoid setting one would have to adjust the confidence regions for the corresponding error.  In summary, the reported quantum error bars are computed from the fit parameters of the fit model \#1 as:
\begin{subequations}
  \begin{align}
    v_0 &= \frac1{4a_2}\left[-a_1 + \sqrt{a_1^2+8 a_2 m}\right]\ ; \\
    \Delta &= \left(a_2 + \frac{m}{2 v_0^2}\right)^{-1/2}\ ; \\
    \gamma &= m\,\frac{\Delta^4}{6 v_0^3}\ .
  \end{align}\label{eq:qerrorbars}
\end{subequations}

See Appendix~\ref{sec:numerics} and~\autoref{sec:apply} for more details.

\subsection{Relation between our two sampling methods}

There is a connection between our two estimators, which we explain in detail in Appendix~\ref{sec:compare}. The essential difference between the bipartite sampling method and the channel-space method can be traced back to how one uses the prior information about the input state. In the former, nothing is assumed about the exact input state other than what can be inferred directly from the measurement data (of course, still under the physical  assumption of a pure entangled input); in the latter, the exact input state is assumed with certainty, and is used in the construction of the estimator (as manifestly visible in the likelihood function).

\section{\label{sec:apply}Application: Examples}
\subsection{One-qubit example}

We now illustrate in more details the use of our software package \emph{QPtomographer} by continuing the quantum memory example. The generic procedure is described in Algorithm~\autoref{ancilla-assisted} and Algorithm~\autoref{prepare-measure}.
\begin{algorithm}[H]
    \caption{quantum process Tomography}
    \label{tomo_experiment}
    \begin{algorithmic}[1] 
        \State Perform data collection via Algorithm 2 or Algorithm 3
        	\State Generate random samples (channel space or bipartite)
        	\State Compute histogram of figure-of-merit
        	\State Fit analysis of histogram
        \State \textbf{return} quantum error bars
    \end{algorithmic}
\end{algorithm}

\begin{algorithm}[H]
    \caption{Ancilla-Assisted (see~\autoref{fig:tomo}\textbf{(b)})}
    \label{ancilla-assisted}
    \begin{algorithmic}[1] 
        \State \textbf{input} a pure entangled state and a collection of measurements
        \For{$i=1$ to $n$}
        	\State Choose a measurement from the set
        	\State Apply the channel to the entangled input state
        	\State Measure the output state with the chosen measurement
        	\State Record the observed outcome
        \EndFor
        \State \textbf{return} dataset $E$ storing the measurement and outcomes for each repetition
    \end{algorithmic}
\end{algorithm}

\begin{algorithm}[H]
    \caption{Prepare-and-Measure (see~\autoref{fig:tomo}\textbf{(a)})}
    \label{prepare-measure}
    \begin{algorithmic}[1] 
        \State \textbf{input} a set of states and  a collection of measurements
        \For{$i=1$ to $n$}
        	\State Choose an input state and a measurement from the set
        	\State Apply the channel to this input state
        	\State Measure the output state with the chosen measurement
        	\State Record the input choice and the observed outcome
        \EndFor
        \State \textbf{return} dataset $E$ storing input state, output measurement and outcomes for each repetition
    \end{algorithmic}
\end{algorithm}

The output of our classical data analysis is called ``quantum error bars'' which contain all the information about the figure-of-merit that can be obtained from the tomographic dataset. From here, it is easy to construct confidence regions for any specified confidence level.\\

\emph{Step 1. Data collection:}

Consider the scenario of testing the performance of a quantum memory $\Lambda_{A\to B}$. The ideal channel we wish to implement is the identity channel $\mathcal{I}$. Suppose that the real channel implemented in the experiment the depolarizing channel
\begin{align}
  \Lambda_{A\to B}(\rho) = p\,\rho + (1-p)\,d_B^{-1}\id_B\ ,
\end{align}
acting on one qubit ($d_A=d_B=2$), with the parameter $p=0.9$. In other words, the experiment is slightly off from the ideal implementation by some  white noise.

Furthermore, we consider the ancilla-assisted scheme, and assume that the input to the channel is half of a pure entangled state $\ket\psi_{AP} = (\sigma_A^{1/2}\otimes\Ident)\, d_A^{1/2}\ket{\hat\Phi}_{AP}$, where we choose
\begin{align}
  \sigma_A = 
  \begin{pmatrix}
    0.6 & 0.1 \\
    0.1 & 0.4 \\
  \end{pmatrix}
  \ ,
\end{align}
which mimics an input state which deviates slightly from the maximally mixed state.  Note that the entangled input state has full Schmidt rank.

Since we do not have an actual experiment, we have to simulate Pauli measurements on the joint state $\rho_{BP}$ after application of the channel $\Lambda_{A\to B}$, with $2$ possible outcomes for each of the $3$ measurement settings.  For each measurement setting, 500 measurement outcomes were simulated. These constitutes the information contained in the (simulated) observed dataset $E$ with $n=45000$.

We now subject this dataset to an analysis which we aim to measure three figures-of-merit corresponding to our unknown channel: the diamond distance  to the identity channel, the average entanglement fidelity and the worst case entanglement fidelity. Refer to the Appendix~\ref{sec:prelim} for the precise definitions.\\

\emph{Step 2 and 3. Random sampling and histogram:}

We use the methods developed in~\autoref{sec:main_estimators} to estimate the three figures-of-merit.  The calculation of all three functions was done in C++ using the SCS toolbox~\cite{SCS1,SCS2}.  A simple Python interface was used to control the execution of the program.  All numerics were run on a 2016 Macbook Pro with 4 physical/8 virtual cores using our code provided at~\cite{Our-Code-Github}.

First, we demonstrate the \emph{bipartite-state method} described in Appendix~\ref{sec:naive}.  This consists in running the random walk as implemented in Ref.~\cite{FR16}, using directly the function~\eqref{eq:naive-fig-of-merit} as figure-of-merit. The random walk was used to sample a total of 32768 data points, using a binning analysis as described in Ref.~\cite{Ambegaokar-Troyer-MH-methods}, with a step size of $\sim 0.001$, a sweep size of $\sim 1000$ and using $2048$ thermalization sweeps. Again, two choices of the jump distribution give similar results.

Second, we run the \emph{channel-space method} of analysis as presented in Appendix~\ref{sec:main}.  The random walk is run on the space of all quantum processs, as described in Appendix~\ref{sec:numerics}, until 32768 data points have been collected.  Two ways of performing the random walk ($e^{iH}$ versus elementary rotation) yield similar results, with elementary rotation finishing faster than $e^{iH}$. Samples from the random walk allow to construct a numerical estimate of a specific distribution of the figure-of-merit, which contains all the necessary information in order to construct confidence regions.

\begin{figure*}
  \centering
  \includegraphics[width=2\columnwidth]{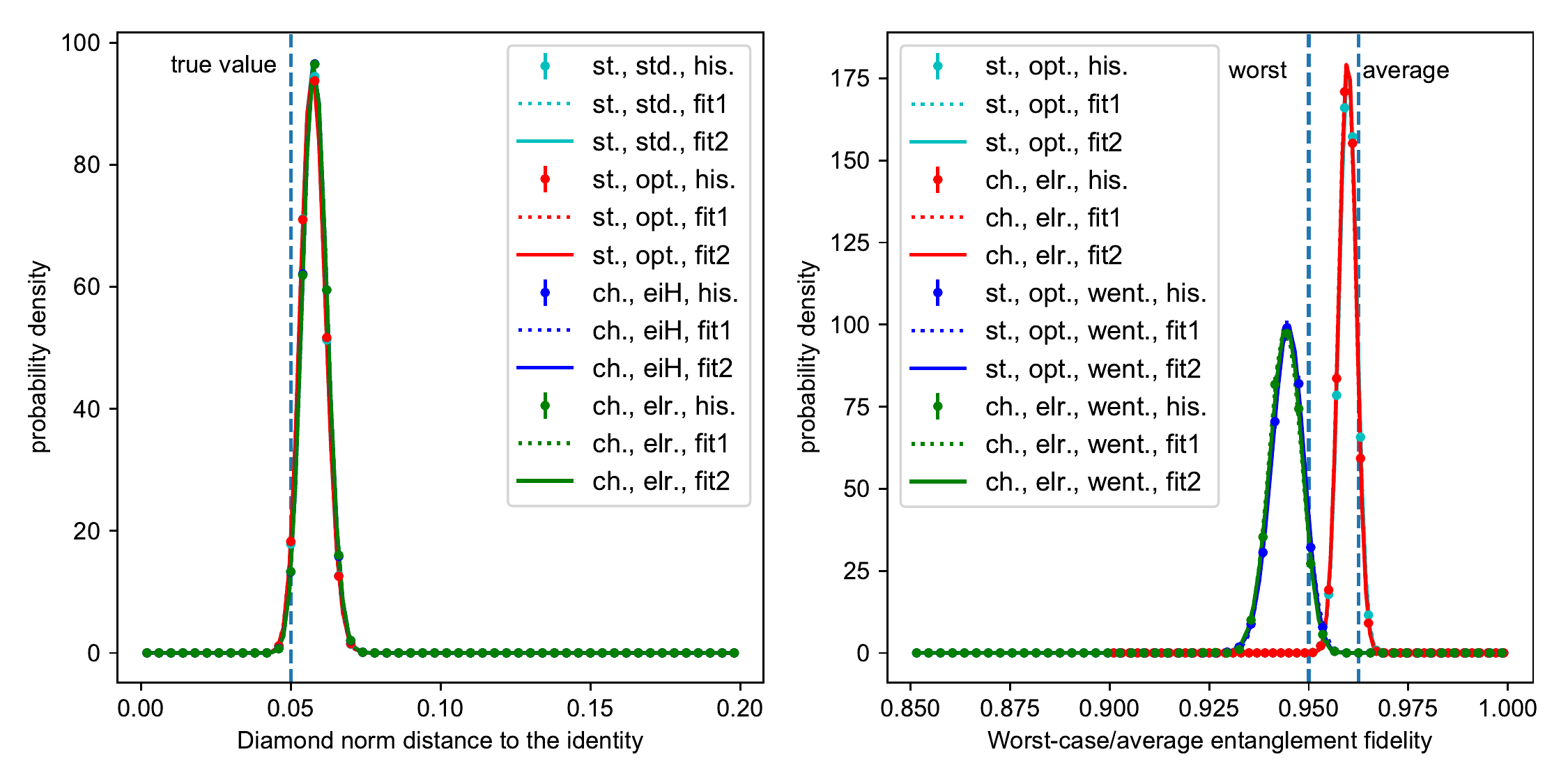}
  \caption{Distribution of the figures-of-merit for the single-qubit process example (left: diamond norm distance, right: entanglement fidelities) relevant to construct confidence regions. Vertical dashed lines are true figure-of-merit values of the unknown channel. The biparite-state sampling method (legend st., shorthand for \emph{state}) consists in estimating the diamond norm while ignoring the information about the exact input state to the channel.  In contrast, the channel-space method (legends ch., shorthand for \emph{channel}) uses the information about the input state to obtain better bounds on the figures-of-merit.  Within each method, we also plot the results obtain from different jump distributions (legends eiH, elr. for channel-space, and std., opt. for bipartite-state) used in the Metropolis-Hastings random walk. The dotted curves are the fits of the raw histogram bins (legend his.) according to our fit model \#1, with corresponding \emph{quantum error bars} $(v_0,\Delta,\gamma)$~\protect\cite{FR16}, while the solid curves are fits using our improved, empirical fit model \#2. These plots should be understood as tools to construct confidence regions, i.e., given a threshold on the $x$-axis, one may easily calculate from these curves the confidence with which one may ascertain the true figure-of-merit (see main text).}
  \label{fig:results-example-plots}
\end{figure*}

The results are shown in \autoref{fig:results-example-plots} as the histogram (dot) points with legend label ``his.''.  The histogram points correspond to the numerical estimation of $h(v)$ given by~\eqref{eq:channel-space-h-of-v} and $\mu(v)$ given by~\eqref{eq:method-naive-mu-of-f}.\\

\emph{Step 4: Fit analysis of histograms:}

In each of these methods, the data---the points underlying the histograms---is fit to two different models as discussed.  If good fit is achieved, we can take these models as a description of the histogram points, and therefore also a description of the functions $h(v)$ and $\mu(v)$. 

In our example, we discovered that the fit model \#1 as described in~\autoref{eq:fit-model-1} does not have great agreement with the underlying histogram bins, as underscored by goodness-of-fit values (reduced $\chi^2$) of the order of $\sim 25$.  This is because our (diamond distance, worst-case entanglement fidelity) figure-of-merit does not satisfy the requirements of the ``heuristic derivation'' in Ref.~\cite{FR16}, and it is thus no surprise that the fit model does not align perfectly well with the data. Using the empirical model \#2 yields much better agreement (solid curves in \autoref{fig:results-example-plots}), with goodness-of-fit values (reduced $\chi^2$) of $\sim 2$. Nevertheless, we reported quantum error bars using fit model \#1.\\

\emph{Step 5. Quantum error bars and confidence regions:}

The quantum error bars $(v_0, \Delta, \gamma)$ are a simple translation from the parameters of the fit model \#1. The steps towards a confidence region for diamond norm has been illustrated in~\autoref{sec:workflow}. In theory we have the guarantee that collecting a larger dataset will yield smaller regions converging to the true value. Unfortunately, the confidence interval for diamond norm distance returned by our method is unreasonably large for the current example: for $99\%$ confidence level we are able to bound the diamond norm by $0.34$ as compared to the true value of $0.05$.  We believe that this is due to operator inequality involved in bounding the failure probability (Proposition~\ref{pro:operatorinequality}). Further research is needed to provide better construction of confidence regions (i.e.\@ more efficient in terms of the number of data samples $n$).

\subsection{Two-qubits example}
Now we consider a two-qubit example to illustrate the practicality of our method in this situation.  This example also shows that the channel-space and the biparite-state sampling methods do not in general produce the same histogram.

Suppose that the real channel implemented in the experiment the two-qubits depolarizing channel
\begin{align}
  \Lambda_{A\to B}(\rho) = p\,\rho + (1-p)\,d_B^{-1}\id_B\ ,
\end{align}
with $d_A=d_B=4$ and we are interested in the diamond distance to the identity channel. Assuming access to state preparation that produces $\ket\psi_{AP} = (\sigma_A^{1/2}\otimes\Ident)\, d_A^{1/2}\ket{\hat\Phi}_{AP}$ with
\begin{align}
  \sigma_A = 
  \begin{pmatrix}
    0.35 & 0 & 0.04 & 0.1i \\
    0 & 0.15 & 0.05 & 0 \\
    0.04 & 0.05 & 0.32 & 0 \\
    -0.1i & 0 &  0 & 0.18
  \end{pmatrix}
  \, ,
\end{align}
and $3^4=81$ Pauli measurement settings each having $2^2=4$ outcomes. We perform similar analyses on a simulated dataset of size $n=40500$  which we generated using the state preparations and measurements described above. The result is presented in~\autoref{fig:twoqubits}.

\begin{figure}
  \centering
  \includegraphics[width=\columnwidth]{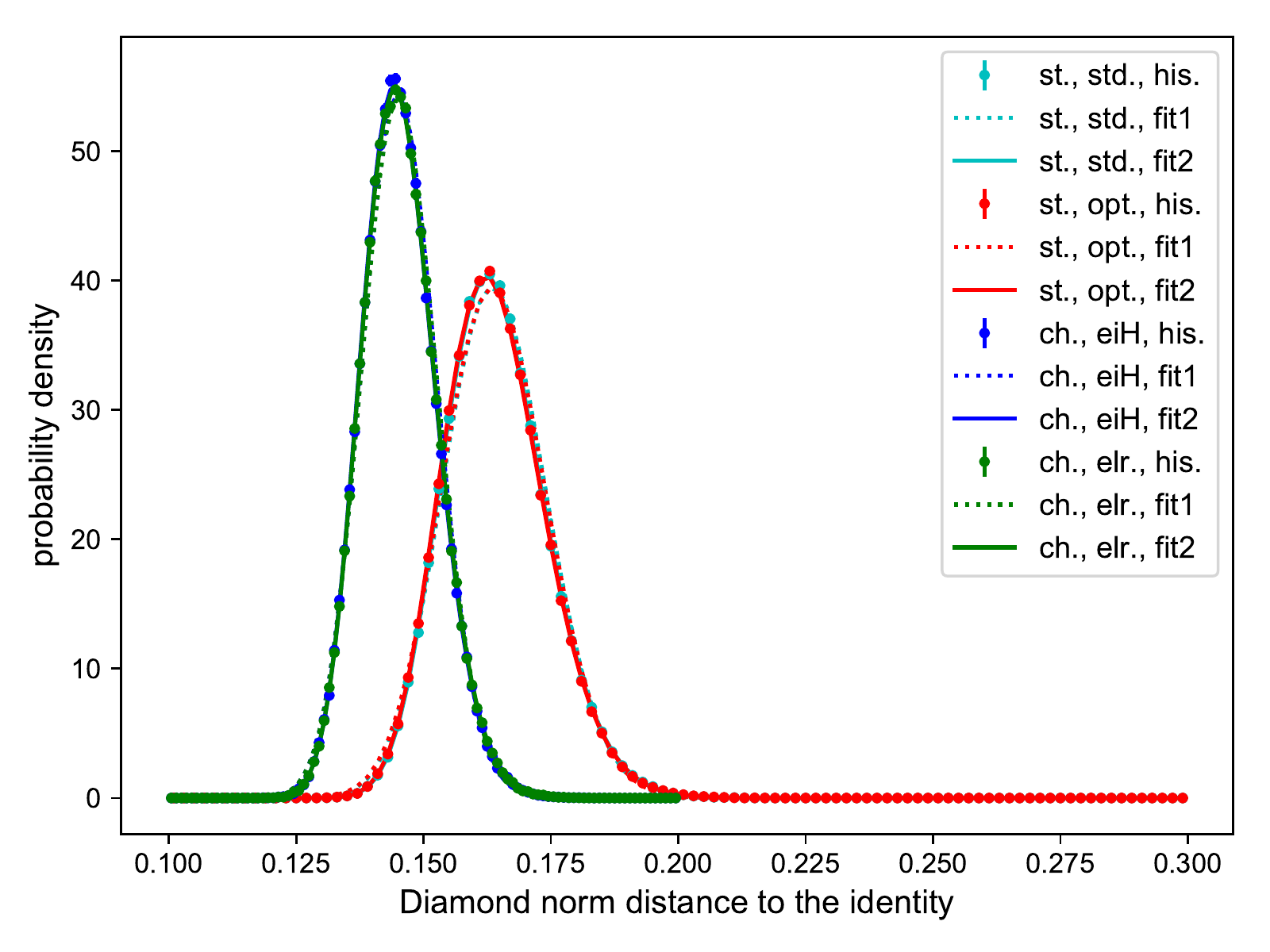}
  \caption{Distribution of the diamond norm distance for the two-qubit process example.  The difference between the two methods for the entanglement fidelity is not a contradiction, rather, in this situation the channel space method gives better tomographic results compared to the biparite-state method. The reason is due to the additional use of prior information about the input state in the channel space method.}
  \label{fig:twoqubits}
\end{figure}

The channel-space sampling method's $h(v)$ is peaked at lower values of the figure-of-merit, as can be seen in \autoref{fig:twoqubits}.  We observe that, in this case, the knowledge of the input state significantly shifts the corresponding histogram distribution towards lower values of the figure-of-merit, allowing to construct smaller confidence regions.  Based on several examples studied, this is not always the case; with less noise (smaller $p$), for instance, the curve for $\mu(v)$ and the curve for $h(v)$ get closer to each other.

On a technical level, we show that the Hilbert-Schmidt measure over the
  bipartite states factorizes as a measure over states on the input system and
  the relevant measure over all channels (Appendix~\ref{sec:compare}).  Hence, a
  large uncertainty over the input state may enlarge the resulting region as
  opposed to considering a region only on the channel space for a fixed known
  input state.  However, it is not impossible that under some lucky
  circumstances a finite distribution width on the input state helps add more
  weight to regions of a higher figure of merit, effectively shrinking the
  region.  Indeed, it could happen that the input state assumed in the
  channel-space method is far from the optimal state for distinguishing the
  channels in terms of the diamond norm; in such a case a prior which is more
  ``smeared out'' over different input states might result in smaller quantum
  error bars for the diamond norm.  We believe that this is why neither method
  performs globally better than the other.  See Appendix~\ref{sec:compare} for further details on the relationship between
the two methods.

\section{\label{sec:conclusion}Conclusions}

  One might think that carrying over the notion of quantum error bars in quantum
  state tomography to quantum process tomography is as straightforward as
  converting quantum states to channels via the Choi-Jamio\l{}kowski
  isomorphism.
  However, our study reveals a more complicated structure.  We find that
  different analysis methods are suited to different experimental process
  tomography setups.  In the experimentally more realistic prepare-and-measure
  scheme, a judicious use of the prior knowledge about the input state to the
  process allows us in typical situations to obtain tighter quantum error bars
  for the process.  These results are obtained by developing a new method, along
  with corresponding proofs, which are specific to process tomography.
  On the other hand, in the case of the ancilla-assisted scheme, we can directly
  apply the methods developed for quantum state tomography, harnessing them
  to directly yield reliable statements about the quantum process itself, while
  ignoring any information the measurements provide about the input state used
  to probe the process.%

  We hence provide a fully-fledged and practical toolbox named {\em QPtomographer}, with solid theoretical
  foundations, for quantum process tomography of arbitrary quantum processes,
  using any experimental quantum process tomography setup, and given measurement
  outcomes from any measurement settings.  Our software package facilitates the
  numerical analysis in practice by automating the implementation of the
  Metropolis-Hastings random walk, as well as the calculation of the diamond
  norm, by simple high-level Python function calls, while transparently
  delegating the computation-intensive routines to heavily optimized C++ code
  which makes use of modern programming techniques including template
  metaprogramming and exploiting hardware SIMD instructions.

  On the spectrum of characterization tools for quantum devices, our method
    can be seen as lying on the opposite end of randomized
    benchmarking~\cite{Knill2008PRA_randbench,Chow2009,Kimmel2014,Jonas}.
    While slightly more involved, our technique can be applied to any choice of
    state preparations and measurements, and can be applied to any individual
    process.  By determining the diamond norm or the worst-case entanglement
    fidelity to any given ideal process, we provide individual full
    characterization of the processes implemented by individual gates.  More
    generally our methods allow the reliable estimation of any specific property
    of the quantum process.

  We note that our method is currently limited to processes acting on few qubits, as our confidence region produces unreasonably large regions, and the algorithm stores dense representations of the quantum process.  However, we expect that our methods will be used to certify individual components of complex setups, for instance, individual 2-qubit gates.  Because we estimate robust, composable figures-of-merit such as the worst-case entanglement fidelity or the diamond norm, the composition of individually certified components is still certified to function accurately.
  
Finally, we may ask whether the channel method is always superior to the
bipartite sampling method.  As noted above, the additional prior knowledge
  about the input state which the channel-space method enjoys in contrast to the
  bipartite sampling method is not sufficient to guarantee this.  We leave a
  more precise understanding of the relation between our two methods open for
  future study.

\begin{acknowledgements}
 TLP, JH, DE and SW are supported by an ERC Starting Grant (SW), an NWO VIDI Grant (SW), and an NWO Zwaartekracht grant (QSC). PhF acknowledges support from the Swiss National Science Foundation through the Early PostDoc.Mobility Fellowship No.~{P2EZP2\_165239} hosted by the Institute for Quantum Information and Matter (IQIM) at Caltech, as well as from the National Science Foundation.
\end{acknowledgements}

\onecolumngrid
\appendix

\section{\label{sec:prelim}Notations \& preliminaries}
We begin by setting up some notations and recalling standard definitions. For more information on states and processes see~\cite{Wolf_book,Chuang_book,Watrous_book}.

\subsection{Quantum processes and figures-of-merit}
Let $\cH_A$ be the Hilbert space of dimension $d_A$ associated with the quantum system denoted $A$. By $\rmD(\cH_A)$ we mean the subset of $\End({\cH_A})$---the set of linear transformations on $\cH_A$---consisting of density matrices $\rho_A\geq0$ (positive semidefinite) with $\tr(\rho_A)=1$. Composite systems are described by tensor product constructions, for instance $\cH_{AB}=\cH_A\otimes\cH_B$ is the Hilbert space of composite system $AB$. 

Quantum measurements on quantum system are the positive operator valued measures or POVMs on $\cH$. For finite number of outcomes, a POVM is a set of positive operators---the effects---that sum to the identity operator on $\cH$. We will overload the notation $E$ to mean an outcome label, and also the effect $E$ (i.e.\@ an operator/matrix) in the POVM. This is equivalent to the usual ``observables'' formulation of measurement, i.e.\@ a hermitian operator. For example, a $Z$ measurement/observable has two outcomes $E=+1$ and $E=-1$ with associated effects $\ketbra{0}$ and $\ketbra{1}$, respectively.

A quantum process $\Lambda_{A\rightarrow B}$ mapping a quantum system $A$ to a quantum system $B$ is a completely positive trace-preserving linear map from $\End{(\cH_A)}$ to $\End{(\cH_B)}$. In general we will denote quantum processs by capital greek letters. We will often drop the subscripts when the quantum systems are clear from the context.

The set of all possible quantum processs is denoted $\ChannelH$, and it is in one-one correspondence with the set of bipartite Choi states $\ChoiH$ via the Choi-Jamiolkowski isomorphism
\begin{align}
J:\ChannelH&\rightarrow\End{(\cH_A\otimes\cH_B)}\\
\Lambda_{A\rightarrow B} &\mapsto (I_A\otimes\Lambda_{\bar A\rightarrow B})(\ketbra{\hat\Phi}_{A\bar A})\notag
\end{align}
where $\ket{\hat\Phi} := \frac{1}{\sqrt{d_A}}\sum_k \ket k_{A}\ket k_{\bar A}$ is the maximally entangled state on $\cH_{A}\otimes\cH_{\bar A}$ and $I_A$ is the identity channel acting on the system $A$. Explicitly, the set of Choi matrices is defined as the image of $\ChannelH$ under the Choi-Jamiolkowski isomorphism and has the following compact description
\begin{align}
\ChoiH = \left\{\rho\in\rmD(\cH_{AB}):\tr_{B}(\rho_{AB})=\id_A/d_A
\right\}\,.\label{eq:Choistates}
\end{align}
Throughout the appendix, we will use the \emph{convention} that $\Lambda_{AB}$ is the Choi state associated with the channel $\Lambda_{A\to B}$.

The action of the channel can be recovered from its Choi state by the inverse of Choi-Jamiolkowski isomorphism
\begin{align}
    \Lambda(\rho) &= d_A\tr_A(\Lambda_{AB} \cdot \rho_{A}^\intercal\otimes\id_{B})\,,\label{eq:channelfromChoi}
\end{align}
where $\intercal$ is the transpose with respect to the basis of $\cH_A$ defining the maximally entangled state.

Recall that the fidelity between two states $\sigma,\sigma'$ is defined as
\begin{align}
F(\sigma,\sigma'):=\norm[\big]{\sqrt{\sigma}\sqrt{\sigma'}}_1=\tr\sqrt{\sqrt{\sigma}\sigma'\sqrt{\sigma}}\,,
\end{align}
and the purified distance between quantum states is defined as $P(\sigma,\sigma'):=\sqrt{1-F(\sigma,\sigma')^2}$. Then the purified distance between channels is defined as
\begin{align}
P(\Psi_{A\to B}, \Psi'_{A\to B}) := P(\Psi_{AB},\Psi'_{AB})\,.
\end{align}

\subsubsection{Diamond distance}
We first introduce the familiar diamond distance. The diamond distance from the real or actual implementation $\Lambda_{A\to B}$ to the ideal or target implementation $\Lambda_\mathrm{ideal}$ is denoted as
\begin{align}
    f_\diamond(\Lambda_{A\to B}) = \frac{1}{2}\norm{\Lambda_{A\to B}-\Lambda^\mathrm{ideal}_{A\to B}}_\diamond.
\end{align}
This function $f_\diamond:\ChannelH\rightarrow[0,1]$ (or equivalently from $\ChoiH$) can be cast as a semidefinite program~\cite{W09}
\begin{center}
    \centerline{\underline{Primal problem}}\vspace{-5mm}
    \begin{align*}
      \text{maximize:}\quad & \ip{d_A \Lambda_{AB} - d_A \Lambda^\mathrm{ideal}_{AB}}{W}\\
      \text{subject to:}\quad & W\leq \id_\cB\otimes \rho,\\
      & W\geq0,\\
      & \rho\in\rmD(\cX).
    \end{align*}
    \centerline{\underline{Dual problem}}\vspace{-2mm}
    \begin{align*}
      \text{minimize:}\quad & \norm{\tr_{B}(Z)}_{\infty}\\
      \text{subject to:}\quad & Z\geq d_A \Lambda_{AB} - d_A \Lambda^\mathrm{ideal}_{AB},\\
      & Z\geq0.\\
    \end{align*}
\end{center}

\subsubsection{Entanglement fidelity}
The entanglement fidelity is another measure of how close a given channel is to the identity channel.  More specifically, it measures how well a channel preserves the maximally entangled state.

  The \emph{entanglement fidelity} of a channel $\Lambda_{A\to B}$ with $B\simeq A$ is defined as
  \begin{align}
    F_e(\Lambda)  = F^2(\Lambda_{\bar{A}\to B}(\hat\Phi_{A\bar{A}}), \hat\Phi_{AB})\ ,
  \end{align}
  recalling that $\ket{\hat\Phi}_{A\bar{A}}$ is the normalized maximally entangled state between the systems $A$ and $\bar A$.

  Because $\Lambda_{AB} = \Lambda_{\bar{A}\to B}(\hat\Phi_{A\bar{A}})$ is the normalized Choi state corresponding to the channel $\Lambda_{A\to B}$, the entanglement fidelity of the channel $\Lambda_{A\to B}$ is in fact exactly the fidelity of the corresponding normalized Choi state to the maximally entangled state:
  \begin{align}
    F_e(\Lambda) = F^2(\Lambda_{AB}, \hat\Phi_{AB})\ .
  \end{align}

\subsubsection{\label{sec:worst-case-ent-fidelity}Worst-case entanglement fidelity}
  The worst-case entanglement fidelity is a better measure of the reliability of the channel to simulate the identity channel, if we have to worry about any possible input state being fed into the channel.  In effect, the worst-case entanglement fidelity measures how well the channel preserves any given state on a system and any purification.  It is defined as
  \begin{align}
    F_{\mathrm{worst}}(\Lambda_{A\to B}) =
     \inf_{\sigma_{A\bar{A}}} F^2(\Lambda_{\bar{A}\to B}(\sigma_{A\bar{A}}), \sigma_{AB})\ ,
  \end{align}
  where the optimization ranges over all bipartite quantum states $\sigma_{AB}$ defined over the input $\bar{A}$ and a reference system $A\simeq \bar{A}$.  The optimization variable, which appears in both slots of the fidelity $F$, may be restricted to pure states without loss of generality.

  Now we show that the worst-case entanglement fidelity can be computed by evaluating a simple semidefinite program.  That a semidefinite program formulation of the worst-case entanglement fidelity can be used in the context of quantum error correction to find suitable recovery procedures for fixed input were put forth in refs.~\cite{Yamamoto2005PRA_suboptimal,Fletcher2007PRA_optimum}.  We build upon those constructions to optimize over the input state, while in our case the problem is simplified as there is no recovery operation.  Using our notation, we write
  \begin{align}
    F_{\mathrm{worst}}(\Lambda_{A\to B}) =
    \inf_{\ket\phi_{A\bar{A}}} F^2(\Lambda_{\bar{A}\to B}(\phi_{A\bar{A}}), \phi_{AB})
    = 
    \inf_{T_{A}:\;\tr(TT^\dagger)=1} \tr(\Lambda_{\bar{A}\to B}(T_A\tilde\Phi_{A\bar{A}} T_A^\dagger) \; T_A\,\tilde\Phi_{AB} T_A^\dagger)\ ,
    \label{eq:wfhusabdhjnfk}
  \end{align}
  where we have defined the non-normalized maximally entangled state $\ket{\tilde\Phi}_{AB} = d_A^{1/2}\,\ket{\hat\Phi}_{AB}$. The last equality comes from the fact that any bipartite pure state $\ket{\phi}_{A\bar{A}}$ can be parametrized by a complex matrix $T_A$ satisfying $\tr(T_AT_A^\dagger)=1$ via $\ket{\phi}_{A\bar{A}} = T_A\,\ket{\tilde\Phi}_{A\bar{A}}$, with moreover $\tr_{\bar A}(\phi_{A\bar A}) = T_A T_A^\dagger$ (indeed, choose $T_A$ with matrix elements $\matrixel{i}{T}{j}_A = \braket{i,j}{\phi}$).  Then, with $T'_A := T_A^\dagger$, and noting that all $T'_A T_A'^\dagger$ with $\tr(T'_A T_A'^\dagger)=1$ can be written as a density matrix $\rho_A = T'_A T_A'^\dagger$, we have
  \begin{align}
    \text{\eqref{eq:wfhusabdhjnfk}}
    &= \inf_{T'_A:\;\tr(T'^\dagger T')=1} \tr(T'_A\,T_A'^\dagger\, \Lambda_{\bar{A}\to B}(\tilde\Phi_{A\bar{A}}) \, T'_A\,T_A'^\dagger\,\tilde\Phi_{AB})
      \nonumber\\
    &= \inf_{\rho_A\geqslant 0:\;\tr(\rho_A)=1}
      \bra{\tilde\Phi}_{AB}\,\rho_A\, \Lambda_{\bar{A}\to B}(\tilde\Phi_{A\bar{A}}) \, \rho_A\,\ket{\tilde\Phi}_{AB}\ .
  \end{align}
  This is a minimization over a positive semidefinite quadratic form in $\rho_A\,\ket{\tilde\Phi}_{AB}$, so it is (quite surprisingly) a convex optimization in terms of $\rho_A$.  We know that positive semidefinite quadratic optimizations may be written as semidefinite programs.  Indeed, for any positive semidefinite matrix $Q = MM^\dagger$, we have that $\matrixel{\psi}{Q}{\psi}\leqslant \mu$ if and only if $\left[\begin{smallmatrix} \Ident & M^\dagger\ket\psi \\ \bra\psi\,M & \mu \end{smallmatrix}\right] \geqslant 0$.  So, finally, we may write the worst-case entanglement fidelity as a semidefinite program in terms of the real variable $\mu$ and the positive semidefinite variable $\rho_A \geqslant 0$:
\begin{align}
   \begin{array}{rcl}
   F_{\mathrm{worst}}(\Lambda_{A\to B})\quad =
   \quad\mbox{minimize:}\quad
   & \mu \qquad            &\ , \\[1ex]
   \mbox{subject to:}\quad
   & \tr(\rho_A) = 1 &\\[1ex]
   & 
     \left[\begin{array}{c|c}
      \id\vphantom{\Bigg[]} & M_{AB}^\dagger \rho_A\,\ket{\tilde\Phi}_{AB} \\ \hline
             \bra{\tilde\Phi}_{AB}\,\rho_A M_{AB} & \mu
     \end{array}\right] \geqslant 0 &
   \end{array}
\end{align}
where $M_{AB}$ is a factorization of the nonnormalized Choi matrix of the process, satisfying
\begin{align}
  M_{AB} \, M_{AB}^\dagger = d_A \Lambda_{AB}
  = \Lambda_{\bar{A}\to B}(\tilde\Phi_{A\bar{A}})\ .
\end{align}
The factorization can be obtained using a Cholseky or LDLT factorization, for instance; or more generally by computing any matrix square root.  The unitary freedom of the matrix square root decomposition (i.e., the freedom of redefining $M\to MU$) is irrelevant here.

\subsection{\label{sec:measures}Haar induced measures}
Later, we will base our confidence region estimators on the following two ``uniform'' measures. They are both measures induced by the unique Haar measure on the unitary group $\bU(\cH)$ acting on some Hilbert space.

The first measure is defined on the set of mixed quantum states~\cite{Zyczkowski2001_Induced}. Since any density matrix has a (nonunique) purification, the space $\rmD(\cH_{AB})$ admits a purification space $\mathrm{Pure}(\cH_{ABA'B'})$ whose elements are rank one density operators on $\cH_{ABA'B'}$ with $A'B'$ being an isomorphic copy of $AB$. The Haar measure $dU_{ABA'B'}$ then induces a measure on $\mathrm{Pure}(\cH_{ABA'B'})$ via the relation $\ketbra{\psi}=U\ketbra{\psi_0}U^\dagger$ for an arbitrary pure state $\ket{\psi_0}$, which induces a measure $\diff\sigma_{AB}$ on $\rmD(\cH_{AB})$ by partial tracing.

The second measure is defined on the set of quantum processs, or equivalently on the set of bipartite Choi states. Let
\begin{align}
    \sP\sC\! =\! \left\{\ket{\Psi}\in\cH_{ABA'B'}:\tr_{BA'B'}(\ketbra{\Psi})=d_A^{-1}\id_A\right\}
\end{align}
be the set of purifications of arbitrary Choi states. Without loss of generality, let us define a fixed reference pure state in $\sP\sC$
\begin{align}
\ket{\Psi_0}:=\frac{1}{d_A}\sum_{i=1}^{d_A}\ket{i}_A\ket{v_i}_{BA'B'}\label{eq:fixedpurifiedChoi}
\end{align}
with $\{\ket{v_i}_{BA'B'}\}$ some fixed orthonormal set of vectors. Then for all $\ket{\Psi}\in\sP\sC$, there exists a unitary $U_{BA'B'}$ such that $\ket{\Psi}=\id_A\otimes U_{BA'B'}\ket{\Psi_0}$. This relation transfer the unique Haar measure $\diff U_{BA'B'}$ on the unitary group $\bU(\cH_{BA'B'})$ to a measure on $\sP\sC$ which we will denote as $\diff \nu(\ket{\Psi})$. Again, by partial tracing the system $A'B'$, this measure induces the measure $\diff \nu(\Psi_{AB})$ on Choi states $\ChoiH$ (also denoted as $\diff \nu(\Lambda_{AB})$ by changing the dummy variable). Finally, taking the inverse of the Choi-Jamiolkowski isomorphism gives the induced measure $\diff \nu(\Psi_{A\to B})$ (or in a different notation $\diff \nu(\Lambda_{A\to B})$) on channel space $\ChannelH$ which is the starting point of the channel space sampling method.

The relation between these measures will be discussed in Appendix~\ref{sec:compare} when we compare the two region estimators.

\subsection{The i.i.d. hypothesis}
In this paper,~\emph{we work under the assumption of i.i.d. (independent and identically distributed) channels}. This means any time we use the experimental device, it is assumed that one and the same transformation $\Lambda_{A\to B}$ has been applied. Experimentally, this assumption is well justified if the same experimental conditions can be reproduced because the abstract channel is a function of the working parameters of the physical device. The i.i.d. hypothesis also gives a clear operational meaning to the question: to which object does the tomographic statement apply? It is one and the same $\Lambda_{A\to B}$ which does not vary from past to future uses.

Even though we work under the i.i.d. assumption, we note that this can be weakened to permutation invariant through the use of the quantum de Finetti theorem for channels~\cite{deFinettiChannels}.

Before proceeding further, we give a clarifying remark about our notation. We usually consider $n$ uses of a channel $\Lambda$. Under the i.i.d. assumption we can describe this situation by tensor product construction giving a composite channel $\Lambda^{\otimes n}$ acting on the composite Hilbert space $\cH_A^{\otimes n}$ and transform the system to $\cH_B^{\otimes n}$. As usual, by measurement on $\cH_B^{\otimes n}$ and by knowing the input state on $\cH_A^{\otimes n}$ we can perform tomography of the unknown channel. Our convention has been to denote a measurement on $\cH_B^{\otimes n}$ by a POVM $\{E\}$ with $E$ standing for both the labels of the various outcomes and the actual operators/matrices. This captures both i.i.d. measurements and entangled measurements in the following sense. Suppose $n=2$ and we perform $X$ and $Z$ on each subsystem. This can be equivalently described by two POVMs $\{\ketbra{+x},\ketbra{-x}\}$ and $\{\ketbra{+z},\ketbra{-z}\}$, and then by tensor product construction combined into a single POVM on the composite Hilbert space. However, this is not the only measurement that one can do: one can perform the Bell measurement projecting into the four maximally entangled states. Our description and notation is flexible for arbitrary measurement one can perform.

\section{\label{sec:naive}The bipartite-state sampling method}

This method requires experimentalists to work in the ancilla-assisted scheme (see~\autoref{fig:tomo}\textbf{(b)}): we select a full Schmidt rank entangled state $\psi_{AP}$, a collection of bipartite measurements $E^{(\ell)}$ with corresponding effects $E^{(\ell)}_k$, and assume the experiment can implement the channel $\Lambda\otimes\mathcal{I}$, where $\mathcal{I}$ is the identity map. Again, the collection of measurement should be informationally complete if one wishes to infer full information about the channel. We assume knowledge of the state preparations and measurements in the form of matrices in the computational basis. This means the pure entangled state has the form
\begin{align}
\ket\psi_{AP} = \sum_is_i\ket{i}_A\ket{i}_P = \sqrt{d_A}\psi_A^{1/2}\ket{\hat\Phi}_{AP} = \sqrt{d_A}\psi_P^{1/2}\ket{\hat\Phi}_{AP},
\end{align}
where $\psi_A,\psi_P$ are the respective reduced states on $A$ and $P$ of $\ketbra{\psi_{AP}}$ and $\ket{\hat\Phi}_{AP}$ the maximally entangled state on $\mathcal{H}_{AP}$.  Note that not all pure state on $AP$ has this form, but we assume it without loss of generality by redefining $\ket{\hat{\Phi}}_{AP}$ if necessary.

The tomography procedure proceeds according to Algorithm~\ref{ancilla-assisted}. In each round, we prepare $\ket{\psi}_{AP}$ and we apply the unknown channel $\Lambda_{A\to B}\otimes\mathcal{I}_{P\to P}$.  We then perform a measurement on the bipartite output system $BP$ using a setting of our choice, yielding an outcome POVM effect $E_k^{(\ell)}$. The dataset stores all the outcomes of different rounds. 

In other words, the ancilla-assisted scheme actually realizes the (theoretical) Choi-Jamio\l{}kowski isomorphism in the laboratory under the assumption of the input state and the channel.

The likelihood function for this scheme is given by
\begin{align}
  \mathcal{L}_{\textrm{AA}}(\Lambda|E)
  = \prod_{k,\ell} \left[\tr(\Lambda_{A\to B}(\psi_{AP})\,E_k^{(\ell)})\right]^{n_{k,\ell}}\ ,
\end{align}
where $n_{k,\ell}$ is the number of times the POVM effect $E_k^{(\ell)}$ appears in the dataset $E$.  Since 
\begin{align}
\Lambda_{A\to B}(\psi_{AP}) = d_A \psi_P^{1/2} \, \Lambda_{BP} \, \psi_P^{1/2}
\end{align}
where $\Lambda_{BP}$ is the corresponding Choi state, we have
\begin{align}
  \mathcal{L}_{\textrm{AA}}(\Lambda|E) = \prod_{k,\ell} \left[d_A\tr(\Lambda_{BP}\,\psi_P^{1/2}\,E_k^{(\ell)}\,\psi_P^{1/2})\right]^{n_{k,\ell}} = d_A^n\tr\bigg(\Lambda_{BP}^{\otimes n}\,\bigotimes_{k,\ell} \psi_P^{1/2}\,E_k^{(\ell)}\,\psi_P^{1/2}\bigg)\ ,
\end{align}
where $\bigotimes_{k,\ell}$ ranges over the observed dataset $E$

Since quantum processs correspond to bipartite quantum states via the Choi-Jamiolkowski isomophism, we can generalize quantum state tomography methods to quantum processs. Here, we \emph{directly} apply the existing procedure of Faist and Renner~\cite{FR16} designed for quantum states to infer information about quantum processs. The main result in this section is Theorem~\ref{thm:bipartite_sampling}. We first recall the procedure of constructing confidence region estimators for quantum states, phrased in terms of bipartite states in anticipation with the connection to quantum processs. 

\subsection{Christandl-Renner confidence regions}
Given access to $n$ copies of an unknown state $\rho_{AB}$, we can perform a (joint or collective) POVM measurement on $\rho_{AB}^{\otimes n}$ and upon receiving the dataset $E$, the Christandl-Renner procedure outputs a distribution
\begin{align}
\diff \mu_E(\sigma_{AB}) := c_E^{-1}\tr(\sigma_{AB}^{\otimes n} E)\diff\sigma_{AB}
  \label{eq:def-mu-E-sigmaAB}
\end{align}
where $c_E=\int\tr(\sigma_{AB}^{\otimes n} E)\diff\sigma_{AB}$ and $\diff\sigma_{AB}$ is the uniform distribution on bipartite density matrices. Confidence regions for the unknown $\rho_{AB}$ can be constructed from $\diff\mu_E(\sigma_{AB})$ as the following proposition asserts.
\begin{theorem*}[\ref{thm:CRtomo} of main text]
Let $n$ be the number of systems measured by a POVM during tomography and $1-\epsilon$ be the desired confidence level. For each effect $E$ in the POVM, let $S_{\mu_E}\subseteq\rmD(\cH_{AB})$ be a set of states such that
\begin{align}
\int_{S_{\mu_E}} \diff\mu_E(\sigma_{AB}) \geq 1-\frac{\epsilon}{2}s_{2n,d^2_{AB}}^{-1}\,,
\end{align}
where $s_{n,d}=\binom{n+d-1}{d-1}\leq(n+1)^{d-1}$ and let $S_{\mu_E}^\delta$ be the enlargement of $S_{\mu_E}$ defined as
\begin{align}
S_{\mu_E}^\delta := \{\sigma_{AB}: \exists\sigma'\in S_{\mu_E} \, with \, P(\sigma,\sigma')\leq\delta\}\,.
\end{align}
Then the mapping $E\mapsto S_{\mu_E}^\delta$ is a confidence region estimator for the unknown $\rho_{AB}$ with confidence level $1-\epsilon$ if
\begin{align}
\delta^2 = \frac{2}{n}\left(\ln\frac{2}{\epsilon}+2\ln s_{2n,d^2_{AB}}\right)\,.
\end{align}
In other words, for any $\rho_{AB}\in\rmD(\cH_{AB})$,
\begin{align}
\Pr_E[\rho_{AB}\in S_{\mu_E}^\delta]\geq1-\epsilon\,,
\end{align}
where the probability is taken over the random dataset $E$ with distribution $\tr(\rho_{AB}^{\otimes n}E)$.
\end{theorem*}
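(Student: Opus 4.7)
The goal is to show that
\[
P_{\text{fail}}(\rho_{AB}) := \sum_E \tr\bigl(\rho_{AB}^{\otimes n} E\bigr)\,\mathbf{1}[\rho_{AB}\notin S_{\mu_E}^\delta]
\]
is at most $\epsilon$ for every $\rho_{AB}\in\rmD(\cH_{AB})$. I would follow the Christandl--Renner post-selection template: (i) replace the unknown $\rho^{\otimes n}$ by a Haar-averaged product-state ensemble via a de Finetti operator bound, (ii) decompose the failure event into a ``bad region'' part and a ``far in purified distance'' part, and (iii) handle each part separately, using the weight hypothesis for the former and tensor-power fidelity concentration for the latter.

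\emph{Step 1 (de Finetti / post-selection inequality and decomposition).} The core tool is the operator inequality
\[
\rho_{AB}^{\otimes n} \;\leq\; s_{2n,d^2_{AB}}\int \sigma_{AB}^{\otimes n}\,\diff\sigma_{AB},
\]
valid for every $\rho_{AB}$; the factor $2n$ (rather than $n$) reflects that one must work with a purification of $\rho$, so the relevant symmetric subspace lives on a space of local dimension $d^2_{AB}$ carrying $2n$ tensor factors after the purification/Naimark-dilation step. Inserting this yields
\[
P_{\text{fail}}(\rho)\;\leq\; s_{2n,d^2_{AB}}\int \Bigl(\sum_E\tr(\sigma_{AB}^{\otimes n}E)\,\mathbf{1}[\rho\notin S^\delta_{\mu_E}]\Bigr)\diff\sigma_{AB}.
\]
Next, I would use the tautology ``$\sigma\in S_{\mu_E}$ and $P(\sigma,\rho)\leq\delta$ $\Rightarrow$ $\rho\in S^\delta_{\mu_E}$'' to split the indicator as
\[
\mathbf{1}[\rho\notin S^\delta_{\mu_E}] \;\leq\; \mathbf{1}[\sigma\notin S_{\mu_E}] + \mathbf{1}[P(\sigma,\rho)>\delta].
\]

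\emph{Step 2 (handling the two contributions).} The $\mathbf{1}[\sigma\notin S_{\mu_E}]$ piece, after swapping sum and integral, collapses to $\sum_E c_E\bigl(1-\mu_E(S_{\mu_E})\bigr) \leq \epsilon/(2 s_{2n,d^2_{AB}})$ by the weight hypothesis on $S_{\mu_E}$ and the completeness identity $\sum_E c_E = 1$. Multiplied by the outer $s_{2n,d^2_{AB}}$, this contributes precisely $\epsilon/2$. The $\mathbf{1}[P(\sigma,\rho)>\delta]$ piece is the technical heart: one must bound $\int\mathbf{1}[P(\sigma,\rho)>\delta]\sum_E\tr(\sigma^{\otimes n}E)\,\diff\sigma$. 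The crucial input is the multiplicativity $F(\sigma^{\otimes n},\rho^{\otimes n}) = F(\sigma,\rho)^n$, which converts the single-copy gap $P(\sigma,\rho)>\delta$ into the exponential smallness $F(\sigma,\rho)^{2n}\leq(1-\delta^2)^n\leq e^{-n\delta^2}$; combined with a second, ``symmetrization-style'' application of the de Finetti inequality to upper bound the data-likelihood $\tr(\sigma^{\otimes n}E)$ by this fidelity uniformly over the POVM effect $E$, this piece is bounded by $s_{2n,d^2_{AB}}\,e^{-n\delta^2/2}$.

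\emph{Step 3 (balancing and choice of $\delta$).} Requiring the second piece (multiplied by the outer $s_{2n,d^2_{AB}}$) to be at most $\epsilon/2$ amounts to $s_{2n,d^2_{AB}}^2\, e^{-n\delta^2/2}\leq\epsilon/2$, i.e.
\[
\delta^2 \;\geq\; \tfrac{2}{n}\bigl(\ln(2/\epsilon)+2\ln s_{2n,d^2_{AB}}\bigr),
\]
which is exactly the formula in the statement. Summing the two contributions delivers $P_{\text{fail}}(\rho)\leq\epsilon$ uniformly in $\rho$, i.e.\ $E\mapsto S^\delta_{\mu_E}$ has confidence level $1-\epsilon$. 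The main obstacle will be the concentration step in Step 2: importing the tensor-power fidelity decay into a uniform bound on $\tr(\sigma^{\otimes n}E)$ for arbitrary POVM effects forces a second application of the symmetric-subspace machinery, which is precisely what squares the polynomial prefactor and produces the factor of $2$ in front of $\ln s_{2n,d^2_{AB}}$ in the formula for $\delta$.
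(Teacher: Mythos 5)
Your overall template---de Finetti operator bound, split into a ``low-weight'' term and a ``too-far'' term, balance the two---is the right one, and your treatment of the first term (collapsing $\sum_E c_E\bigl(1-\mu_E(S_{\mu_E})\bigr)$ via $\sum_E c_E=1$) matches how the paper handles that contribution in its proof of the channel-space analogue (Theorem~2, which is the same Christandl--Renner argument adapted to channels). But the second half of your decomposition does not go through, and the step you yourself flag as ``the technical heart'' is exactly where it breaks. After the substitution $\rho^{\otimes n}\le s\int\sigma^{\otimes n}\,\diff\sigma$, the failure indicator $\mathbf{1}[\rho\notin S^\delta_{\mu_E}]$ still refers to the \emph{true} state $\rho$, which no longer appears in the measure. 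Your pointwise split then leaves the term
\begin{align*}
s\int \mathbf{1}[P(\sigma,\rho)>\delta]\,\sum_E\tr(\sigma^{\otimes n}E)\,\diff\sigma
\;=\; s\int \mathbf{1}[P(\sigma,\rho)>\delta]\,\diff\sigma\,,
\end{align*}
since $\sum_E E=\id$. That integral is $\Theta(1)$ (almost every $\sigma$ is far from a fixed $\rho$ in purified distance), so this piece is of order $s$, not $\epsilon/2$. The rescue you propose---bounding $\tr(\sigma^{\otimes n}E)$ by $F(\sigma,\rho)^{2n}$ ``uniformly over the POVM effect $E$''---is not available: take $E$ close to $\id$ and $\tr(\sigma^{\otimes n}E)\approx 1$ regardless of $F(\sigma,\rho)$. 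Multiplicativity of fidelity relates $\sigma^{\otimes n}$ to $\rho^{\otimes n}$, not to the data likelihood.

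The missing idea is that the indicator $\chi(\rho;\overline{S^{\delta}_{\mu_E}})$ must first be converted into a \emph{physical measurement} before the de Finetti step: one approximates it by the POVM effect $T_{S^{\delta/2}_{\mu_E}} = s_{k,d^2_{AB}}\int_{\overline{S^{\delta/2}_{\mu_E}}}\ketbra{\phi}^{\otimes k}\diff\phi$ acting on $k$ \emph{additional} copies of a purification of $\rho$, at the cost of an additive error $s_{k,d^2_{AB}}e^{-k\delta^2/2}$. The failure probability then becomes $\sum_E\tr\bigl(\ketbra{\rho}^{\otimes(n+k)}\,E\otimes T\bigr)$, the operator inequality is applied to all $n+k$ copies at once, and only \emph{then} does one split the resulting integral over $\sigma$ according to $\sigma\in S_{\mu_E}$ or not: on $S_{\mu_E}$ the effect $T$ responds exponentially weakly (this is where $(1-(\delta/2)^2)^{k/2}$ enters, via the triangle inequality for the purified distance), and on $\overline{S_{\mu_E}}$ one invokes the weight hypothesis. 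Choosing $k=n$ is what produces $s_{2n,d^2_{AB}}$; your explanation of the $2n$ as an artifact of purification is incorrect---purification doubles the local dimension ($d_{AB}\to d_{AB}^2$, which is why $d^2_{AB}$ appears in the subscript), not the number of tensor factors. Your Step~3 recovers the stated formula for $\delta$ only because the intermediate bound $s_{2n,d^2_{AB}}\,e^{-n\delta^2/2}$ is asserted rather than derived; the argument you sketch would not produce it.
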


\subsection{Mapping channel tomography to biparite-state tomography}
Consider the ancilla-assisted scheme.  In order to learn what the channel $\Lambda_{A\to B}$ is, we may carry out the experiment as described in Algorithm~\ref{ancilla-assisted}, and use the outcome measurements to perform full tomography on the output state $\rho_{BP}:=\Lambda_{A\to B}(\psi_{AP})$.  We may then ask, what does this tell us about the unknown channel $\Lambda_{A\to B}$?

Observe that if we knew the output state $\rho_{BP}$ \emph{exactly} (limit of infinite data) and assume the input state $\psi_{AP}$ has full rank, then we could read out the true channel: its Choi state is simply given as $\Lambda_{BP} = d_A^{-1}\rho_P^{-1/2}\,\rho_{BP}\,\rho_P^{-1/2}$.  Indeed, we have 
\begin{align}
\rho_{BP} := \Lambda_{A\to B}(\psi_{AP}) = d_A\psi_P^{1/2}\,\Lambda_{A\to B}(\hat\Phi_{AP})\,\psi_P^{1/2} = d_A\rho_P^{1/2}\,\Lambda_{BP}\,\rho_P^{1/2} \label{eq:observation}
\end{align}
since under the assumption that $P$ has undergone identity transformation it follows $\psi_P=\rho_P$. Note that this is the same trick used in Appendix~\ref{sec:worst-case-ent-fidelity} to derive the semidefinite program.

Thanks to this observation, we may use the quantum state tomography method of Ref.~\cite{FR16} to construct confidence regions on the space of quantum processs $\Lambda_{A\to B}$, as well as on a figure-of-merit such as the diamond norm to an ideal channel.

To do so, we ignore the knowledge of the exact input state $\psi_A$, but we assume the global state $\ket{\psi_{AP}}$ has full Schmidt rank (i.e.\@ forgetting the Schmidt coefficients). Upon observing the dataset $E$, the classical data processing returns a bipartite state region $S_{\mu_E}^\delta$, which contains information about the pair $(\Lambda_{A\to B}, \psi_A)$~\footnote{In fact, in addition to reconstructing the channel $\Lambda_{A\to B}$, we may use this procedure to confirm the correct preparation of the input state $\sigma_A$.}. The interpretation of $S_{\mu_E}^\delta$ is given by Theorem~\ref{thm:CRtomo}, and together with the observation above (see Eq.~\eqref{eq:observation}) we have
\begin{align}
\Pr_E\left[d_A\rho_{P}^{1/2}\Lambda_{BP}\rho_{P}^{1/2}\in S_{\mu_E}^\delta\right]
  \geq 1-\epsilon\,,
\end{align}
where the probability is taken over all possible dataset $E$ with distribution $\tr((\rho_{P}^{1/2}\Lambda_{BP}\rho_{P}^{1/2})^{\otimes n}E)$. To recover information about the channel $\Lambda$, for each $\rho_{BP}\in S_{\mu_E}^\delta$ we apply the (completely positive) transformation $T$ defined as
\begin{align}
T:~\End(\cH_{AB})&\rightarrow\End(\cH_{AB}) \nonumber \\
\rho_{BP} &\mapsto d_A^{-1}\rho_{P}^{-1/2}\rho_{BP}\rho_{P}^{-1/2}.
\end{align}
Observe that $T$ maps any $\rho_{BP}$ with full rank marginal $\rho_P$ to a Choi state. Also, the set $S_{\mu_E}^\delta$ only contains $\rho_{BP}$ with full rank marginal $\rho_P$ because we only sample according to the uniform measure $\diff\sigma_{AB}$ (i.e.\@ the set of rank-deficient $\rho_{BP}$ has measure zero). This means the image of $S_{\mu_E}^\delta$  under $T$ will be a set of Choi matrices which can be interpreted via Choi-Jamiolkowski as a region of quantum processs (completely positive and trace-preserving maps). We conclude
\begin{align}
\Pr_E\left[\Lambda_{BP}\in T(S_{\mu_E}^\delta)\right]
  \geq 1-\epsilon\,,\label{eq:bipartiteversion-conf}
\end{align}
which implies $T(S_{\mu_E}^\delta)$ are confidence regions for quantum processs.

\subsection{Regions for figures-of-merit}
The confidence region on channel space constructed in the last section contains full information on the unknown channel. But if one is only interested in a property of the channel, for instance how close is it to an ideal process, then obtaining confidence region for a given figure-of-merit suffices. We now present how one can do this using pushforward of measures.

Given a figure-of-merit for quantum processs $f_\mathrm{channel}$ (defined on channel space), we associate a function $f$ defined on the set of bipartite states as
\begin{align}
f(\rho_{BP}) := f_\mathrm{channel}(J^{-1}(d_A^{-1}\rho_P^{-1/2}\,\rho_{BP}\,\rho_P^{-1/2})),
\label{eq:naive-fig-of-merit}
\end{align}
which is just $f_\mathrm{channel}$ acting on the channel $J^{-1}(d_A^{-1}\rho_P^{-1/2}\,\rho_{BP}\,\rho_P^{-1/2})$ obtained from $\rho_{BP}$ via the mapping $T$. This allows us to directly use the tools of Ref.~\cite{FR16} to obtain confidence intervals for the figure-of-merit $f$ which will yield the same result as $f_\mathrm{channel}$. Explicitly, for any $v\in\mathbb{R}$
  \begin{align}
    \mu(v) = \int \diff \mu_E(\sigma_{AB})\,\delta(f(\sigma_{AB}) - v)\,
    \label{eq:method-naive-mu-of-f}
  \end{align}
is the probability density of the pushforward of $\diff\mu_E(\sigma_{AB})$ along $f$. This density provides confidence region for a figure-of-merit as certified by the following proposition.
\begin{theorem}
\label{thm:bipartite_sampling}
  Let $\mu_{E}$ be given as in~\eqref{eq:def-mu-E-sigmaAB}, and let $\mu(v)$ be defined as in~\eqref{eq:method-naive-mu-of-f}.
  Then for any threshold value $v_\mathrm{thres}>0$, the region
  \begin{align}
    R^{v_\mathrm{thres},\delta} = \{
    \rho_{AB} : f(\rho_{AB}) \leqslant v_\mathrm{thres} + O(\delta)
    \}
  \end{align}
  of states representing channels at least $v_\mathrm{thres}+O(\delta)$-close to the reference channel, is a confidence region of confidence level $1-\epsilon$ where
  \begin{align}
    \epsilon = \operatorname{poly}(n)\,\left[
    1 - \int_0^{v_\mathrm{thres}} \mu(v)\,\diff v
    \right]\ .
  \end{align}
\end{theorem}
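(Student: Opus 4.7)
The plan is to combine Theorem~\ref{thm:CRtomo} with a Lipschitz continuity argument for the induced figure-of-merit $f$ on bipartite states. First I would set
\begin{align*}
S_{\mu_E} \;:=\; \{\sigma_{AB} : f(\sigma_{AB}) \leq v_{\mathrm{thres}}\}.
\end{align*}
By the definition of the push-forward density in Eq.~\eqref{eq:method-naive-mu-of-f}, this region has $\mu_E$-weight exactly $\int_0^{v_{\mathrm{thres}}} \mu(v)\,\diff v$. To apply Theorem~\ref{thm:CRtomo}, I would choose the confidence parameter so that its hypothesis is saturated, namely
\begin{align*}
\epsilon \;=\; 2\, s_{2n,d^2_{AB}}\,\Bigl[1 - \int_0^{v_{\mathrm{thres}}} \mu(v)\,\diff v\Bigr],
\end{align*}
which is of the advertised form $\operatorname{poly}(n)\cdot[1-\int_0^{v_{\mathrm{thres}}}\mu(v)\,\diff v]$. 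With the corresponding $\delta$ from the Christandl--Renner formula, Theorem~\ref{thm:CRtomo} guarantees that the purified-distance enlargement $S_{\mu_E}^{\delta}$ is a confidence region at level $1-\epsilon$ for the true bipartite state $\rho_{BP}=\Lambda_{A\to B}(\psi_{AP})$.

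Second, I would upgrade this bipartite-state statement to the figure-of-merit statement by establishing that $f$ is Lipschitz with respect to the purified distance on the relevant domain: there exists a constant $C$ (depending on the chosen $f_{\mathrm{channel}}$ and on the known input $\psi_{AP}$) such that
\begin{align*}
|f(\sigma_{BP}) - f(\sigma'_{BP})| \;\leq\; C\, P(\sigma_{BP},\sigma'_{BP}).
\end{align*}
Combined with the fact that every $\sigma\in S_{\mu_E}^{\delta}$ is by definition $\delta$-close in purified distance to some element of $S_{\mu_E}$, this inclusion gives $S_{\mu_E}^{\delta}\subseteq R^{v_{\mathrm{thres}},C\delta}$, and hence
\begin{align*}
\Pr_E\bigl[\rho_{BP}\in R^{v_{\mathrm{thres}},O(\delta)}\bigr] \;\geq\; \Pr_E\bigl[\rho_{BP}\in S_{\mu_E}^{\delta}\bigr] \;\geq\; 1-\epsilon,
\end{align*}
which is precisely the claim with the $O(\delta)$ absorbing the constant $C$.

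The main obstacle is establishing the Lipschitz estimate uniformly. The map $T:\rho_{BP}\mapsto d_A^{-1}\rho_P^{-1/2}\rho_{BP}\rho_P^{-1/2}$ is not globally Lipschitz, since $\rho_P^{-1/2}$ diverges as $\rho_P$ approaches rank deficiency, so a naive composition bound through $T$ fails at the boundary. I would handle this in one of two ways: either (i) restrict the domain to $\{\rho_{BP}:\rho_P\geq\eta\,\id_P/d_A\}$ for a suitable $\eta$ and absorb the small excluded $\mu_E$-mass into the $\operatorname{poly}(n)$ prefactor (the excluded set having measure zero under $\diff\sigma_{AB}$, the penalty can be made arbitrarily small), or (ii) exploit properties of the specific figure-of-merit: for instance, for the diamond distance the chain $\tfrac12\|\Lambda-\Lambda'\|_\diamond \leq d_A\,\tfrac12\|\Lambda_{BP}-\Lambda'_{BP}\|_1 \leq d_A\, P(\Lambda_{BP},\Lambda'_{BP})$ (Choi-to-diamond followed by Fuchs--van de Graaf) yields $C=d_A$ directly, and an analogous argument with the Fuchs--van de Graaf inequalities gives constants of order $d_A$ for the (worst-case) entanglement fidelity. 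Once $C$ is fixed, the remainder of the argument is immediate bookkeeping.
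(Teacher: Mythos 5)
Your overall architecture --- take $S_{\mu_E}=\{\sigma_{AB}: f(\sigma_{AB})\le v_{\mathrm{thres}}\}$, note that its $\mu_E$-weight is $\int_0^{v_{\mathrm{thres}}}\mu(v)\,\diff v$ by the push-forward definition, saturate the weight condition of Theorem~\ref{thm:CRtomo} to read off $\epsilon=2\,s_{2n,d_{AB}^2}\bigl[1-\int_0^{v_{\mathrm{thres}}}\mu(v)\,\diff v\bigr]$, and then convert the purified-distance enlargement $S_{\mu_E}^\delta$ into an $O(\delta)$ loss in the figure of merit --- is exactly the intended route; the paper states this theorem without writing out a proof, and carries out the analogous modulus-of-continuity step explicitly only for the channel-space method (where the relevant distance is directly between Choi states). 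Your handling of the $\operatorname{poly}(n)$ prefactor and of the induced choice of $\delta$ is correct.

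The gap is in the continuity step, which you have correctly located but not closed. Your fix (ii) does not work as stated: the chain $\tfrac12\norm{\Lambda-\Lambda'}_\diamond\le d_A\,\tfrac12\norm{\Lambda_{BP}-\Lambda'_{BP}}_1\le d_A\,P(\Lambda_{BP},\Lambda'_{BP})$ controls the figure of merit by the distance between the \emph{Choi states} $T(\rho_{BP})$ and $T(\rho'_{BP})$, whereas Theorem~\ref{thm:CRtomo} only guarantees $P(\rho_{BP},\rho'_{BP})\le\delta$ between the \emph{raw output states}, which in general have different marginals $\rho_P\ne\rho'_P$. The missing link is a bound on $\norm{\rho_P^{-1/2}\rho_{BP}\rho_P^{-1/2}-\rho_P'^{-1/2}\rho'_{BP}\rho_P'^{-1/2}}_1$ in terms of $P(\rho_{BP},\rho'_{BP})$, and this is precisely where the inverse square roots blow up near rank deficiency; no dimension-only constant $C$ exists on all of $\rmD(\cH_{BP})$. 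Your fix (i) is the right idea but is misargued: the set $\{\rho_{BP}:\rho_P\not\ge\eta\,\id_P/d_A\}$ has strictly positive measure for every $\eta>0$ (only the rank-deficient boundary is a null set), so its $\mu_E$-mass must be added to the failure probability rather than dismissed, and the Lipschitz constant on the retained domain grows like a power of $\eta^{-1}$. You then face a genuine trade-off between $\eta$ and $C$, and the resulting $\epsilon$ no longer has the clean advertised form unless you additionally show that the posterior mass near ill-conditioned marginals is itself negligible (plausible given the full-Schmidt-rank input, but it requires an argument). Until one of these routes is made quantitative, the $O(\delta)$ in the statement is not justified.
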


In summary, for ancilla-assisted tomography scheme, determining the histogram $\mu(v)$ in~\eqref{eq:method-naive-mu-of-f} gives us all the necessary information to construct confidence regions of any confidence level in terms of the figure-of-merit $f_\mathrm{channel}(\Lambda_{A\to B})$.\\

{\bf Diamond distance to ideal and worst-case entanglement fidelity:}

The methodology outlined in the previous paragraphs can be specialized to the diamond distance to an ideal reference channel $\Lambda^\mathrm{ideal}_{B\to P}$. Here we take
\begin{align}
  f_\mathrm{channel}(\Lambda_{B\to P}) = f_\diamond(\Lambda_{B\to P}) = \frac{1}{2}\norm{\Lambda_{B\to P}-\Lambda_{B\to P}^\mathrm{ideal}}_\diamond
\end{align}
to be the desired figure-of-merit on channel space. This induces a figure-of-merit in the space of bipartite quantum states
\begin{align}
  f(\rho_{BP})
  &= \frac{1}{2}\max\left\{\ip{\rho_P^{-1/2}\rho_{BP}\rho_{P}^{-1/2}-d_A\Lambda^{\mathrm{ideal}}_{BP}}{W}:W\leq \id_\cB\otimes \bar{\rho}, W\geq0, \bar{\rho}\in\rmD(\cX)\right\}\,.
\end{align}
One is left to perform a numerical computation of $\mu(v)$ for the above function $f$, as explained in details in Ref.~\cite{FR16}.

\section{\label{sec:main} The channel-space sampling method}
This method applies to either the ancilla-assisted scheme explained in the previous Appendix, or the prepare-and-measure scheme where no entanglement is required. In the prepare-measure scheme (see~\autoref{fig:tomo}\textbf{(a)}), we select a collection of input states $\sigma^j$, and select a collection of measurements $E^{(\ell)}=\{ E^{(\ell)}_k \}$. This set of state preparation and measurement (SPAM) should be informationally complete if one wish to fully reconstruct the unknown channel. The SPAM is represented as certain set of matrices in the computational basis $\{\ket{i}: i=0,...,d_A-1\}$.

The data collection procedure goes as follows: in each round, we choose an input state $\sigma^j$, we choose a measurement $\ell$ on output, we send $\sigma^j$ through the channel, and record the measurement outcome $k$ on the output.  The dataset $E$ consists of all pairs $(\sigma^j,E^{(\ell)}_k)$ chosen and observed for each round.

Typically one can choose the states $j$ in order, i.e., first perform measurements on $\sigma^1$, then on $\sigma^2$, etc.  The choice of the output measurement setting $\ell$ is allowed to depend on $j$.  Since we are under i.i.d. channel assumption, at each round it is the same unknown channel $\Lambda$ which is applied, and that previous outcomes have no influence on new rounds.

The likelihood function for a dataset $E$ in this scenario is defined using the matrix representations of the SPAM according to Born's rule
\begin{align}
    \mathcal{L}_{\textrm{PM}}(\Lambda|E) = \prod_{j,k,\ell} \left[\tr(\Lambda(\sigma^{j})\,E^{(\ell)}_{k})\right]^{n_{j,k,\ell}}\ ,
\end{align}
where $n_{j,k,\ell}$ is the number of times the given pair $(\sigma^j, E^{(\ell)}_k)$ appears in the dataset $E$.  Using~\eqref{eq:channelfromChoi}, we rewrite the likelihood function as
  \begin{align}
    \mathcal{L}_{\textrm{PM}}(\Lambda|E) &= \prod_{j,k,\ell} \left[d_A\tr( \Lambda_{AB} \; (\sigma_A^j)^\intercal \otimes E^{(\ell)}_k )\right]^{n_{j,k,\ell}} =  d_A^n\tr\bigg( \Lambda_{AB}^{\otimes n} \; \bigotimes_{j,k,\ell}(\sigma_A^j)^\intercal \otimes E^{(\ell)}_k \bigg), \label{eq:prob-prep-meas-Choi--sigma-j-Ek}
  \end{align}
where $\bigotimes_{j,k,\ell}$ ranges over the observed dataset $E$.

The method in the previous section maps a channel tomography problem into a (constrained) biparite-state tomography problem. One may ask if this is the only solution. In this section, we provide an alternative construction natively on the channel space. This has consequence on the numerical implementation: we no longer need to samples from biparite-state space. Instead, we can directly sample ``random channels'' which leads to improved numerical efficiency. The main results in this section are Theorems~\ref{thm:channel-space} and~\ref{thm:channel-space-diamond}.

\subsection{Regions on channel space}
Inspired by the Christandl-Renner construction~\cite{CR12}, we define the following confidence region estimator for quantum processs. Our confidence region is constructed from the probability measure on the space of quantum processs $\ChannelH$
\begin{align}
\diff\nu_E(\Lambda):=c'^{-1}_E\mathcal{L}(\Lambda|E)\diff \nu(\Lambda)
\label{eq:def-nu-E-channels}
\end{align}
where $\mathcal{L}(\Lambda|E)$ is either prepare-measure or ancilla assisted likelihood and $c'_E=\int\mathcal{L}(\Lambda|E)\diff \nu(\Lambda)$ serves as a normalizing constant and $\diff \nu(\Lambda)$ is the induced measure on $\ChannelH$ defined in Appendix~\ref{sec:measures}.

The main result in this Section is
 \begin{theorem*}[2 of maintext]
Let $n$ be the number of channel uses during tomography and $1-\epsilon$ be the desired confidence level. For each dataset $E$, let $R_{\nu_E}\subseteq\ChannelH$ be a set of channels such that
\begin{align}
    \int_{R_{\nu_E}}\diff \nu_E(\Lambda) \geq 1-\frac{\epsilon}{2}s_{2n,d^2_{AB}}^{-2}\ ,
\end{align}
where $s_{n,d}=\binom{n+d-1}{d-1}\leq(n+1)^{d-1}$ and let $R_{\nu_E}^\delta$ be the enlargement
\begin{align}
R_{\nu_E}^\delta:=\{\Lambda\in\ChannelH:\exists \Lambda'\in R_{\nu_E} \nonumber \\ 
\textrm{ with } P(\Lambda,\Lambda')\leq\delta\}\,.
\end{align}
Then the mapping $E\mapsto R_{\nu_E}^\delta$ is a confidence region estimator for the unknown $\Lambda_{A\to B}$ with confidence level $1-\epsilon$ if
\begin{align}
\delta^2 = \frac{2}{n}\left(\ln\frac{2}{\epsilon}+3\ln s_{2n,d^2_{AB}}\right)\,.
\end{align}
In other words, for all channel $\Lambda\in\ChannelH$
\begin{align}
       \Pr_E[\Lambda_{A\to B}\in R_{\nu_E}^\delta] \geq 1-\epsilon\,,
\end{align}
where the probability is over the random dataset $E$ with distribution $\Pr(E|\Lambda)=\mathcal{L}(\Lambda|E)$.
\end{theorem*}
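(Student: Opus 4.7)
The plan is to mirror the Christandl--Renner strategy for states, but transplanted directly onto channel space so that the Haar-induced measure $\diff\nu$ plays the role of the Hilbert--Schmidt measure. Fix an arbitrary true channel $\Lambda_\star\in\ChannelH$; one must show $\Pr_E[\Lambda_\star\in R_{\nu_E}^\delta]\geq 1-\epsilon$, where the probability is over $E$ drawn from $\mathcal{L}(\Lambda_\star|\cdot)$. The failure event $\Lambda_\star\notin R_{\nu_E}^\delta$ is equivalent to $B_\delta(\Lambda_\star)\cap R_{\nu_E}=\emptyset$, and since $R_{\nu_E}$ carries mass at least $1-(\epsilon/2)s_{2n,d_{AB}^2}^{-2}$ by hypothesis, this forces the purified-distance ball around $\Lambda_\star$ to be light under the posterior:
\begin{align}
\int_{B_\delta(\Lambda_\star)}\diff\nu_E(\Lambda)\ \leq\ \frac{\epsilon}{2}\,s_{2n,d_{AB}^2}^{-2}.
\end{align}

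The main technical ingredient I would prove is a channel-space \emph{postselection inequality}, namely an operator bound of the form
\begin{align}
\Lambda_{\star,AB}^{\otimes n}\ \leq\ s_{2n,d_{AB}^2}^{\,3}\int \Lambda_{AB}^{\otimes n}\,\diff\nu(\Lambda),
\end{align}
obtained by lifting to Stinespring purifications of the Choi state via the Haar-induced parameterization of Appendix~\ref{sec:measures}, invoking the standard fact that $\int V^{\otimes n}(\cdot)V^{\dagger\otimes n}\,\diff V$ acts as $s_{n,\cdot}^{-1}$ times projection onto the symmetric subspace, and finally tracing out the purifying registers. Tracing both sides against the joint POVM element corresponding to $E$ would then yield $\mathcal{L}(\Lambda_\star|E)\leq s_{2n,d_{AB}^2}^{3}\,c'_E$, which is the key Bayesian ``reversal'' that lets us pass from the true-channel likelihood to an integral against $\diff\nu$.

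With that lemma in hand I would close the argument as follows. For any $\Lambda\in B_\delta(\Lambda_\star)$ the $n$-copy fidelity satisfies $F(\Lambda_{AB}^{\otimes n},\Lambda_{\star,AB}^{\otimes n})\geq(1-\delta^2)^{n/2}$, which, via Uhlmann and the operational interpretation of purified distance, translates into a pointwise likelihood bound $\mathcal{L}(\Lambda_\star|E)$ vs.\ $\mathcal{L}(\Lambda|E)$ that can be aggregated by Markov's inequality. Concretely, I would bound
\begin{align}
\Pr_E[\Lambda_\star\notin R_{\nu_E}^\delta]
\ \leq\ \sum_E \mathcal{L}(\Lambda_\star|E)\,\mathbbm{1}\!\left[\int_{B_\delta(\Lambda_\star)}\!\!\mathcal{L}(\Lambda|E)\,\diff\nu(\Lambda)\,\leq\,\tfrac{\epsilon}{2}s^{-2}_{2n,d_{AB}^2}\,c'_E\right],
\end{align}
apply the postselection bound to replace $\mathcal{L}(\Lambda_\star|E)$ by $s_{2n,d_{AB}^2}^{3}c'_E$, and convert the resulting chi-square-like ratio into an $e^{-n\delta^2/2}$ factor using the fidelity estimate. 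Balancing $e^{-n\delta^2/2}$ against the polynomial factor $s_{2n,d_{AB}^2}^{3}$ and the target $\epsilon/2$ reproduces the stated $\delta^2=(2/n)(\ln(2/\epsilon)+3\ln s_{2n,d_{AB}^2})$.

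The hardest step will be establishing the postselection inequality with the right constant $s_{2n,d_{AB}^2}^{3}$. Unlike the state case, the Choi constraint $\tr_B(\Lambda_{AB})=\id_A/d_A$ means one cannot simply quote Christandl--K\"onig--Renner off the shelf: the relevant purifications live in $\mathcal{PC}$, not in an unconstrained Hilbert space, so one must re-prove the symmetric-subspace bound while respecting this reduction. Tracking where each of the three $\ln s_{2n,d_{AB}^2}$ contributions originates (purification dimension inflation, the $n\to 2n$ doubling needed for a fidelity-to-TV conversion, and the additional dimensional cost of the Choi constraint) is where the bookkeeping becomes delicate, and is almost certainly the source of the looseness flagged in Section~\ref{sec:workflow} as responsible for the over-large regions.
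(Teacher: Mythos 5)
Your overall strategy---fix $\Lambda_\star$, observe that failure forces the $\delta$-ball $B_\delta(\Lambda_\star)$ to have small posterior weight, and control everything through an operator (``postselection'') inequality on the constrained purification set---is the same skeleton as the paper's proof, and your key lemma is essentially the paper's Proposition~\ref{pro:operatorinequality}, which is proved there by Carath\'eodory's theorem applied to the convex hull of $\{\ketbra{\Psi}^{\otimes n}:\ket{\Psi}\in\sP\sC\}$ together with left-invariance of $\diff\nu$ (this is how the Choi constraint is respected without re-deriving a symmetric-subspace twirl, which would not directly apply since the twirl over $U_{BA'B'}$ restricted to $\sP\sC$ is not proportional to the symmetric projector). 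However, there is a genuine gap in the second half of your argument. After you substitute the postselection bound $\mathcal{L}(\Lambda_\star|E)\leq s^3 c'_E$ into your displayed sum over bad datasets, you are left with $s^3\sum_{E\ \mathrm{bad}}c'_E\leq s^3$, and nothing in your sketch extracts the required factor $e^{-n\delta^2/2}$ from the indicator of the bad-dataset event: the ``fidelity estimate'' cannot be upgraded to a pointwise likelihood-ratio bound between $\Lambda_\star$ and nearby $\Lambda$ (the ratio $\mathcal{L}(\Lambda_\star|E)/\mathcal{L}(\Lambda|E)$ is unbounded even when $P(\Lambda,\Lambda_\star)\le\delta$), and at the $n$-copy level the trace distance between $\Lambda_{\star,AB}^{\otimes n}$ and $\Lambda_{AB}^{\otimes n}$ is already close to $1$ for the relevant $\delta\sim\sqrt{(\log)/n}$, so no direct transfer of event probabilities between nearby channels is possible.

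The paper's mechanism for this step, which is missing from your proposal, is to approximate the indicator function $\chi(\ket{\Lambda};\overline{Q^{\delta}_{\nu_E}})$ by a Holevo covariant POVM element $T_{Q^{\delta/2}_{\nu_E}}=s_{k,d^2_{AB}}\int_{\overline{Q^{\delta/2}_{\nu_E}}}\ketbra{\phi}^{\otimes k}\diff\phi$ acting on $k=n$ \emph{additional} copies of the purified Choi state, with additive approximation error $s_{n,d^2_{AB}}e^{-n\delta^2/2}$. This turns the entire failure probability into a linear functional of $\ketbra{\Lambda}^{\otimes 2n}$, to which the operator inequality (with constant $s^2_{2n,d^2_{AB}}$ on $2n$ copies, not $s^3$ on $n$ copies as you wrote) is applied; the resulting integral over $\Psi$ then splits into the part over $R_{\nu_E}$, where $\tr(\ketbra{\Psi}^{\otimes n}T_{Q^{\delta/2}_{\nu_E}})\leq s_{n,d^2_{AB}}e^{-n\delta^2/2}$ because every such $\Psi$ is at purified distance more than $\delta/2$ from $\overline{Q^{\delta/2}_{\nu_E}}$, and the part over $\overline{R_{\nu_E}}$, which has posterior mass at most $\frac{\epsilon}{2}s^{-2}_{2n,d^2_{AB}}$ by hypothesis. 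Your accounting of the three $\ln s_{2n,d^2_{AB}}$ contributions is also not the actual bookkeeping: two powers come from the Carath\'eodory bound on the affine dimension of operators on the symmetric subspace, and the third from the normalization of the covariant POVM in the indicator approximation, not from a ``fidelity-to-TV doubling'' or a separate ``Choi-constraint cost''. Without the indicator-approximating measurement (or an equivalent device), your Markov-inequality step does not close.
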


Before starting the proof, we will need the following results.

\begin{proposition}
\label{pro:operatorinequality}
For any channel $\Lambda_{A\to B}$, if $\ket{\Lambda}\in\cH_{ABA'B'}$ is a purification of its Choi state then
\begin{align}
\ketbra{\Lambda}^{\otimes n} \leq s^2_{n,d^2_{AB}}\int_{\sP\sC}\diff \nu(\ket{\Psi})  \, \ketbra{\Psi}^{\otimes n} = s^2_{n,d^2_{AB}}\int\diff U \, U_{BA'B'}^{\otimes n}\ketbra{\Psi_0}^{\otimes n} U_{BA'B'}^{\dagger\otimes n}
\end{align}
where $s_{n,d}:=\binom{n+d-1}{d-1}$.
\end{proposition}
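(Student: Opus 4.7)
My plan is a two-stage de Finetti-style argument adapted to the constrained set $\sP\sC$. Using the Stinespring parametrization from Appendix~\ref{sec:measures}, I first write $\ket{\Lambda}=(\id_A\otimes V_\Lambda)\ket{\Psi_0}$ for some unitary $V_\Lambda\in\bU(\cH_{BA'B'})$. Substituting this into both sides and invoking the left-invariance of the Haar measure $\diff U$ on the right-hand side reduces the claim to the reference case $\ket{\Lambda}=\ket{\Psi_0}$, so it suffices to prove
\begin{align*}
\ketbra{\Psi_0}^{\otimes n}\leq s^2_{n,d^2_{AB}}\int \diff U\,(\id_A\otimes U)^{\otimes n}\ketbra{\Psi_0}^{\otimes n}(\id_A\otimes U^\dagger)^{\otimes n}.
\end{align*}

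The first factor of $s_{n,d^2_{AB}}$ comes from the standard symmetric-subspace identity: for any pure state $\ket\phi$ in a Hilbert space of dimension $D$,
\begin{align*}
\ketbra\phi^{\otimes n}\leq \Pi^{(n)}_{\mathrm{sym}}=s_{n,D}\int \diff\phi'\,\ketbra{\phi'}^{\otimes n},
\end{align*}
where integration is under the Haar-induced measure over \emph{all} pure states. Applied with $\ket\phi=\ket{\Psi_0}$ and $D=d^2_{AB}$, this bounds $\ketbra{\Psi_0}^{\otimes n}$ by $s_{n,d^2_{AB}}\int \diff\Phi\,\ketbra{\Phi}^{\otimes n}$, with integration over all pure states of $\cH_{ABA'B'}$, not only those in $\sP\sC$.

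The second factor is then obtained by converting this unconstrained pure-state integral into the restricted integral over $\sP\sC$. Every pure state $\ket\Phi\in\cH_{ABA'B'}$ can be written as $\ket\Phi=(\sqrt{d_A\,\sigma_A}\otimes V_{BA'B'})\ket{\Psi_0}$ with $\sigma_A=\tr_{BA'B'}\ketbra\Phi$, and under this parametrization the Haar-uniform pure-state measure factorizes into a marginal measure on $\sigma_A$ and a Haar part on $V_{BA'B'}$. The inner integral at the fixed value $\sigma_A=\id_A/d_A$ is precisely the restricted integral over $\sP\sC$, so comparing the two reduces to bounding the remaining $\sigma_A$-integral via a symmetric-subspace-type argument on the $A^n$ register, which produces (at worst) a factor $s_{n,d^2_{AB}}$. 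The second equality in the statement is then the parametrization $\ket{\Psi}=(\id_A\otimes U_{BA'B'})\ket{\Psi_0}$ from Appendix~\ref{sec:measures}.

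I expect the last step to be the principal obstacle: rigorously factorizing the unrestricted pure-state integral to isolate the $\sP\sC$-average, and pinning down the dimensional factor via a Schur--Weyl-type bookkeeping. The appearance of $s^2$ (rather than a tighter $s_{n,d_Ad_B^2}$ keyed only to $\dim\cH_{BA'B'}$) is almost certainly a convenient over-estimate of the sharp constant, consistent with the authors' remark that this operator inequality is the source of their unnaturally large final confidence regions.
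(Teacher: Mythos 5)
Your first step is sound: reducing to the reference state $\ket{\Psi_0}$ by transitivity of the $\id_A\otimes\bU(\cH_{BA'B'})$ action on $\sP\sC$, and then invoking $\ketbra{\Psi_0}^{\otimes n}\leq \Pi^{(n)}_{\mathrm{sym}}=s_{n,d_{AB}^2}\int\diff\phi'\,\ketbra{\phi'}^{\otimes n}$, legitimately produces one factor of $s_{n,d^2_{AB}}$ at the cost of replacing the restricted average over $\sP\sC$ by the unrestricted one. But the second step --- the claim that the unrestricted average is dominated by $s_{n,d^2_{AB}}$ times the $\sP\sC$-restricted average --- is the entire content of the proposition, and you have only gestured at it. Concretely: after disintegrating the uniform pure-state measure over the fibers $\tr_{BA'B'}^{-1}(y)$ (which is essentially Proposition~\ref{prop:relation_measures} of the appendix), a state in the fiber over $y$ is $(\sqrt{d_A y}\otimes V)\ket{\Psi_0}$, so the fiber average is $\bigl(\sqrt{d_A y}\bigr)^{\otimes n} M \bigl(\sqrt{d_A y}\bigr)^{\otimes n}$ with $M:=\int_{\sP\sC}\diff\nu(\ket\Psi)\ketbra{\Psi}^{\otimes n}$. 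To compare the $y$-average with the single fiber at $y=\id_A/d_A$ you would need an inequality of the form $X M X^\dagger \leq c\,M$, which is false for general positive $X$ and $M$ (e.g.\ $M=\ketbra{0}$ and $X$ a swap); and even the crude bound via $\norm{d_A y}_\infty^{\,n}\leq d_A^{\,n}$ is exponential in $n$, not the polynomial factor $s_{n,d^2_{AB}}$ you need. More fundamentally, dominating an average over fibers by one particular fiber in the operator order is a problem of exactly the same type and difficulty as the original statement, so the reduction does not make progress.

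For contrast, the paper's argument sidesteps all measure factorization. The operator $\int_{\sP\sC}\diff\nu(\ket\Psi)\,\ketbra{\Psi}^{\otimes n}$ lies in the convex hull of $\{\ketbra{\Psi}^{\otimes n}:\ket\Psi\in\sP\sC\}$, whose affine span sits inside the trace-one slice of $\End(\mathrm{Sym}^n(\cH_{ABA'B'}))$ and hence has dimension at most $s^2_{n,d^2_{AB}}-1$. Carath\'eodory then writes the integral as a convex combination of at most $s^2_{n,d^2_{AB}}$ terms, one of which has weight at least $s^{-2}_{n,d^2_{AB}}$; conjugating by a suitable $\id_A\otimes W_{BA'B'}$ (under which the integral is invariant) aligns that term with $\ketbra{\Lambda}^{\otimes n}$, and dropping the remaining positive terms gives the bound. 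This explains why the constant is the \emph{square} of $s_{n,d^2_{AB}}$ --- it is the dimension of an operator space, not a product of two symmetric-subspace factors as your decomposition suggests. If you want to salvage your route, you would need to actually prove $\Pi^{(n)}_{\mathrm{sym}}\leq s^2_{n,d^2_{AB}}\,M$ directly (e.g.\ by a Schur--Weyl computation of the spectrum of $M$), which is a genuinely open gap in your write-up as it stands.
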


\begin{proof}
The main idea of this proof is to discretize the Haar integral using Caratheodory's theorem, and dominate the left hand side by a trivial operator inequality.

By definition, the operator
\begin{align}
\int_{\sP\sC}\diff \nu(\ket{\Psi})  \, \ketbra{\Psi}^{\otimes n}
\end{align}
lies in the convex hull of the set $\{\ketbra{\Psi}^{\otimes n}:\ket{\Psi}\in\sP\sC\}$, whose linear span (in the ambient space $\End(\cH_{ABA'B'}^{\otimes n})$) has dimension $D$. By Caratheodory's theorem, there exists a convex combination $(q_i,\ketbra{\Psi_i}^{\otimes n})$ with size at most $D+1$ such that
\begin{align}
\int_{\sP\sC}\diff \nu(\ket{\Psi})  \, \ketbra{\Psi}^{\otimes n} = \sum_{i=1}^{D+1} q_i \ketbra{\Psi_i}^{\otimes n}\,.
\end{align}
Among the probability weights $q_i$ there exists a largest element denoted $q_\mathrm{max}$ and its associated purified Choi state $\ketbra{\Psi_\mathrm{max}}$, from which we split off this term in the finite sum as
\begin{align}
\sum_{i=1}^{D+1} q_i \ketbra{\Psi_i}^{\otimes n} = q_\mathrm{max}\ketbra{\Psi_\mathrm{max}}^{\otimes n} + \sum_{i\neq\mathrm{max}} q_i \ketbra{\Psi_i}^{\otimes n}\,.
\end{align}
By left-invariance of the measure $\diff \nu(\ket{\Psi})$ and the (unitary) structure of the set $\sP\sC$, we can without loss of generality assume that $\Psi_\mathrm{max}=\Lambda$. More precisely, let $W_{BA'B'}$ be a unitary transformation bringing $\ket{\Psi_\mathrm{max}}$ to $\ket{\Lambda}$, we have (leaving the system label $BA'B'$ implicit)
\begin{align}
W^{\otimes n} \left(\int_{\sP\sC}\diff \nu(\ket{\Psi})  \, \ketbra{\Psi}^{\otimes n}\right) W^{\dagger\otimes n} = q_\mathrm{max}W^{\otimes n}\ketbra{\Psi_\mathrm{max}}^{\otimes n}W^{\dagger\otimes n} + \sum_{i\neq\mathrm{max}} q_i W^{\otimes n}\ketbra{\Psi_i}^{\otimes n}W^{\dagger\otimes n} \,.
\end{align}
Using linearity of integration and translational invariance of the integrating measure, this equation simplifies to
\begin{align}
\int_{\sP\sC}\diff \nu(\ket{\Psi})  \, \ketbra{\Psi}^{\otimes n} = q_\mathrm{max}\ketbra{\Lambda}^{\otimes n} + \sum_{i\neq\mathrm{max}} q_i \ketbra{\Psi_i'}^{\otimes n}\,,
\end{align}
where $\ket{\Psi_i'}$ is some other vector in $\sP\sC$.

Now since all operators in the convex combination are positive-semidefinite, we obtain
\begin{align}
\int_{\sP\sC}\diff \nu(\ket{\Psi})  \, \ketbra{\Psi}^{\otimes n} \geq q_\mathrm{max}\ketbra{\Lambda}^{\otimes n}\,.
\end{align}
By the property of the maximum weight $q_\mathrm{max}$, namely $q_\mathrm{max}\geq1/(D+1)$, we get
\begin{align}
\ketbra{\Lambda}^{\otimes n} \leq (D+1)\int_{\sP\sC}\diff \nu(\ket{\Psi})  \, \ketbra{\Psi}^{\otimes n}\,.
\end{align}

Finally, $\mathrm{span}\{\ketbra{\Psi}^{\otimes n}:\ket{\Psi}\in\sP\sC\} \subseteq \mathrm{span}\{\ketbra{\Psi}^{\otimes n}:\ket{\Psi}\in\cH_{ABA'B'}\}$ and the latter is identified as a subspace of $\End(\mathrm{Sym}^n(\cH_{ABA'B'}))$, the operator space on the symmetric subspace of $\cH_{ABA'B'}^{\otimes n}$. Together with the constraint that trace is $1$, we have $D\leq s_{n,d_{ABA'B'}}^2-1$ where the dimension of the symmetric subspace is $s_{n,d}:=\binom{n+d-1}{d-1}$. This completes the proof of the operator inequality.
\end{proof}

\begin{proof}[Proof of Theorem~\ref{thm:channel-space}] Our proof technique follows closely that of~\cite{CR12}, with the main technical difficulty being incorporating the \emph{a priori} constraint $\tr_B(\Lambda_{AB})=\id_A/d_A$. This allows the reduction of numerical sampling from biparite-state space to channel space.

For any region estimator, our construction $E\mapsto R_{\nu_E}^\delta$ in particular, the failure probability of the reconstruction typically depends on the underlying unknown channel
\begin{align}
P_{\textup{fail}}(\Lambda_{A\to B})=\Pr_E[\Lambda_{A\to B}\notin R_{\nu_E}^\delta] := \sum_E \Pr(E|\Lambda) \chi(\Lambda_{A\to B};\overline{R_{\nu_E}^\delta})\,,
\end{align}
where $\Pr(E|\Lambda)$ is the probability of obtaining dataset $E$, and $\chi(\Lambda_{A\to B};\overline{R_{\nu_E}^\delta})$ is the indicator function of the set $\overline{R_{\nu_E}^\delta}:=\ChannelH\setminus R_{\nu_E}^\delta$ (i.e.\@ the complement set). Recall that
\begin{align}
\Pr(E|\Lambda) = \begin{cases} d_A^n\tr\bigg( \Lambda_{AB}^{\otimes n} \; \bigotimes_{j,k,\ell}(\sigma_A^j)^\intercal \otimes E^{(\ell)}_k \bigg) \text{ in prepare-and-measure scheme} \\
 d_A^n\tr\bigg(\Lambda_{BP}^{\otimes n}\,\bigotimes_{k,\ell} \psi_P^{1/2}\,E_k^{(\ell)}\,\psi_P^{1/2}\bigg) \text{ in ancilla-assisted scheme}
\end{cases}
\end{align}
Our goal will be  bounding this failure probability independently of $\Lambda_{A\to B}$ by using the operator inequality we have just developed.

Before starting the actual calculations, observe that $\Pr(E|\Lambda)$ for both schemes are functions of the type $\tr(\Lambda^{\otimes n}\otimes\cdots)$ where $\otimes\cdots$ is the operator constructed from the observed dataset $E$ from information about the state preparation and measurement schemes. In the following, we do not utilise the exact form of $\otimes\cdots$ for each schemes and thus the calculation works for both schemes. We choose to put $\otimes\cdots$ as the operator corresponding to the prepare-and-measure scheme for concreteness.

Via the Choi-Jamiolkowski isomorphism, the failure probability reads
\begin{align}
P_{\textup{fail}}(\Lambda_{AB})=\Pr_E[\Lambda_{AB}\notin R_{\nu_E}^\delta] := \sum_E d_A^n\tr[\Lambda_{AB}^{\otimes n}\,\rho_{A^n}^\intercal\otimes E_{B^n}] \chi(\Lambda_{AB};\overline{R_{\nu_E}^\delta})\,,\label{eq:Pfail_start}
\end{align}
where we have abused the notation $R_{\nu_E}^\delta$ to mean both the set in channel space $\ChannelH$ and in Choi state space $\ChoiH$. This can be rewritten in terms of an arbitrary purification of the Choi state $\Lambda_{AB}$
\begin{align}
P_{\textup{fail}}(\Lambda_{AB}) = \sum_E d_A^n\tr[\ketbra{\Lambda}_{ABA'B'}^{\otimes n}\,\rho_{A^n}^\intercal\otimes E_{B^n}] \chi(\ket{\Lambda}_{ABA'B'};\overline{Q_{\nu_E}^\delta})\,,\label{eq:Pfail_start}
\end{align}
where $\overline{Q_{\nu_E}^\delta}:=\tr_{A'B'}^{-1}(\overline{R_{\nu_E}^\delta})$ contains all the purifications of matrices in $\overline{R_{\nu_E}^\delta}$. In the following, we will bound~\eqref{eq:Pfail_start} independent of $\ket{\Lambda}\in\sP\sC$.

We first analyze the indicator function of the set $\overline{Q_{\nu_E}^\delta}$, which is by definition
\begin{align}
\chi(\ket{\Lambda}_{ABA'B'};\overline{Q_{\nu_E}^\delta}) = \begin{cases}
 1 & \text{if } \ket{\Lambda}_{ABA'B'} \in \overline{Q_{\nu_E}^\delta} \\
 0 & \text{otherwise}\,.
 \end{cases}
\end{align}
Without the knowledge of $\ket{\Lambda}_{ABA'B'}$, the condition $\ket{\Lambda}_{ABA'B'} \in \overline{Q_{\nu_E}^\delta}$ can only be \emph{physically checked} by a measurement POVM with effects $T$ and $\id-T$ acting on the quantum state $\ket{\Lambda}_{ABA'B'}$. Upon the observation of the effect $T$, we decide that $\ket{\Lambda}_{ABA'B'} \in \overline{Q_{\nu_E}^\delta}$ and similarly for $\id-T$. In other words, we are approximating $\chi(\ket{\Lambda}_{ABA'B'};\overline{Q_{\nu_E}^\delta})$ by a quantum measurement. Here we construct such an approximation using Holevo's covariant measurement~\cite{Holevo_book}.

Let $k$ be the number of copies of $\ket{\Lambda}\in\cH_{ABA'B'}$ used in the approximation, i.e.\@ we are given $\ketbra{\Lambda}^{\otimes k}$. If we ignore the fact that $\ket{\Lambda}\in\sP\sC$, we can use the Holevo's continuous POVM $\{s_{k,d^2_{AB}}\ketbra{\phi}^{\otimes k}\diff\phi\}$ to distinguish $\ket{\Lambda}\in\cH_{ABA'B'}$ among the set of pure states. Here, $\diff\phi$ is the uniform spherical measure on the set of pure states of $\cH_{ABA'B'}$ and $s_{k,d^2_{AB}}$ is the dimension of the symmetric subspace of $\cH_{ABA'B'}^{\otimes k}$. Coarse graining this measurement, we can distinguish $\ket{\Lambda}_{ABA'B'} \in \overline{Q_{\nu_E}^\delta}$ versus $\ket{\Lambda}_{ABA'B'} \in Q_{\nu_E}^\delta$ by the following POVM with two effects (analogous to Ref.~\cite{CR12})
\begin{align}
T_{Q_{\nu_E}^{\delta/2}} := s_{k,d^2_{AB}} \int_{\overline{Q_{\nu_E}^{\delta/2}}} \ketbra{\phi}^{\otimes k} \diff\phi \,, \textrm{ and } \id - T_{Q_{\nu_E}^{\delta/2}}\,.
\end{align}
We now check that this POVM indeed approximates $\chi(\ket{\Lambda};\overline{Q_{\nu_E}^\delta})$. For all $\ket{\Lambda}\in\overline{Q_{\nu_E}^\delta}$, using the definition of $\chi(\ket{\Lambda};\overline{Q_{\nu_E}^\delta})$
\begin{align}
\chi(\ket{\Lambda};\overline{Q_{\nu_E}^\delta}) - \tr(\ketbra{\Lambda}^{\otimes k}T_{Q_{\nu_E}^{\delta/2}}) &= 1-s_{k,d^2_{AB}} \int_{\overline{Q_{\nu_E}^{\delta/2}}} \tr(\ketbra{\Lambda}^{\otimes k}\ketbra{\phi}^{\otimes k}) \diff\phi\,.
\end{align}
Since $\ketbra{\Lambda}^{\otimes k}$ is supported on the symmetric subspace, we reinterpret the constant $1$ above as
\begin{align}
1= \tr\left(\ketbra{\Lambda}^{\otimes k}s_{k,d^2_{AB}} \int\ketbra{\phi}^{\otimes k} \diff\phi\right)\,,
\end{align}
which implies for all $\ket{\Lambda}\in\overline{Q_{\nu_E}^\delta}$
\begin{align}
\chi(\ket{\Lambda};\overline{Q_{\nu_E}^\delta}) - \tr(\ketbra{\Lambda}^{\otimes k}T_{Q_{\nu_E}^{\delta/2}}) &= s_{k,d^2_{AB}} \left(\int \tr(\ketbra{\Lambda}^{\otimes k}\ketbra{\phi}^{\otimes k}) \diff\phi- \int_{\overline{Q_{\nu_E}^{\delta/2}}} \tr(\ketbra{\Lambda}^{\otimes k}\ketbra{\phi}^{\otimes k}) \diff\phi \right) \\
&= s_{k,d^2_{AB}} \int_{Q_{\nu_E}^{\delta/2}} \tr(\ketbra{\Lambda}\ketbra{\phi})^k \diff\phi \\
&\leq s_{k,d^2_{AB}} \max_{\ket{\phi}\in Q_{\nu_E}^{\delta/2}} F(\Lambda_{AB},\tr_{A'B'}\ketbra{\phi}))^k\,.
\end{align}
By the definition of the sets
\begin{align}
R_{\nu_E}^{\delta/2} := \{\Psi\in\ChoiH:\exists \Psi'\in R_{\nu_E} \textrm{ with } P(\Psi,\Psi')\leq\delta/2\}\,,
\end{align}
and
\begin{align}
\overline{R_{\nu_E}^{\delta}} := \ChoiH \setminus \{\Psi\in\ChoiH:\exists \Psi'\in R_{\nu_E} \textrm{ with } P(\Psi,\Psi')\leq\delta\}\,,
\end{align}
we have for $\Lambda_{AB}\in\overline{R_{\nu_E}^\delta}$ and $\phi_{AB}:=\tr_{A'B'}\ketbra{\phi}\in R_{\nu_E}^{\delta/2}$
\begin{align}
F(\Lambda_{AB},\phi_{AB})=\sqrt{1-P(\Lambda_{AB},\phi_{AB})^2} \leq \sqrt{1-(\delta/2)^2} \leq e^{-\delta^2/2}\,,
\end{align}
using the reverse triangle inequality for purified distance. In summary, we obtain the approximation
\begin{align}
\chi(\ket{\Lambda};\overline{Q_{\nu_E}^\delta}) - \tr(\ketbra{\Lambda}^{\otimes k}T_{Q_{\nu_E}^{\delta/2}}) \leq \epsilon_1 := s_{k,d^2_{AB}}e^{-k\delta^2/2}\,. \label{eq:indicator_approx}
\end{align}

Now we can start bounding the failure probability. Inserting \eqref{eq:indicator_approx} into \eqref{eq:Pfail_start}, we have an intermediate bound
\begin{align}
P_{\textup{fail}}(\Lambda_{AB}) &\leq \epsilon_1 + \sum_E d_A^n\tr[\ketbra{\Lambda}_{ABA'B'}^{\otimes n}\,\rho_{A^n}^\intercal\otimes E_{B^n}] \tr[\ketbra{\Lambda}^{\otimes k}T_{Q_{\nu_E}^{\delta/2}}] \\
&= \epsilon_1 + \sum_E d_A^n\tr[\ketbra{\Lambda}_{ABA'B'}^{\otimes (n+k)}\,\rho_{A^n}^\intercal\otimes E_{B^n}\otimes T_{Q_{\nu_E}^{\delta/2}}] \ .\label{eq:Pfail_quantum}
\end{align}
Using the operator inequality in the Proposition~\ref{pro:operatorinequality}, namely
\begin{align}
\ketbra{\Lambda}_{ABA'B'}^{\otimes(n+k)} \leq s_{n+k,d^2_{AB}}^2 \int_{\sP\sC}\diff \nu(\ket{\Psi})  \, \ketbra{\Psi}^{\otimes (n+k)}\,,
\end{align}
we can bound the right hand side \emph{independent of the unknown} $\Lambda_{AB}$ as follows
\begin{align}
P_{\textup{fail}}(\Lambda_{AB}) &\leq \epsilon_1 + s_{n+k,d^2_{AB}}^2 \sum_E  \int_{\sP\sC}\diff \nu(\ket{\Psi}) d_A^n\tr[\ketbra{\Psi}^{\otimes n} \rho_{A^n}^\intercal\otimes E_{B^n}] \tr[\ketbra{\Psi}^{\otimes k}T_{Q_{\nu_E}^{\delta/2}}]\\
&= \epsilon_1 + s_{n+k,d^2_{AB}}^2 \sum_E c'_E \int \diff \nu_E(\Psi) \tr[\ketbra{\Psi}^{\otimes k}T_{Q_{\nu_E}^{\delta/2}}]\,,
\end{align}
where the last equality follows from the definition of the \emph{a posteriori} measure $\diff \nu_E(\Psi)$. For each measurement outcome $E$, the integral can split into two parts based on the set $R_{\nu_E}$ from which the kernels are uniformly bounded as follows:
\begin{align}
\int_{R_{\nu_E}} \diff \nu_E(\Psi) \tr(\ketbra{\Psi}^{\otimes k}T_{Q_{\nu_E}^{\delta/2}}) &\leq s_{k,d_{AB}^2}(1-(\delta/2)^2)^{k/2} \leq s_{k,d_{AB}^2}e^{-k\delta^2/2}\,,
\end{align}
using the definition of $T_{Q_{\nu_E}^{\delta/2}}$ and the fidelity bound $F(\Psi_{AB}\in R_{\nu_E},\phi_{AB}\in\overline{R_{\nu_E}^{\delta/2}})\leq\sqrt{1-(\delta/2)^2}$, and
\begin{align}
\int_{\overline{R_{\nu_E}}} \diff \nu_E(\Psi) \tr(\ketbra{\Psi}^{\otimes k}T_{Q_{\nu_E}^{\delta/2}}) &\leq \int_{\overline{R_{\nu_E}}}\diff \nu_E(\Psi)\,,
\end{align}
since $\tr(\ketbra{\Psi}^{\otimes k}T_{Q_{\nu_E}^{\delta/2}})\leq1$.
Choose $k=n$, the fact that $\sum_E c'_E\leq 1$, and combine all the inequalities together we have
\begin{align}
P_{\textup{fail}}(\Lambda_{AB}) &\leq \epsilon_1 + s_{n+k,d^2_{AB}}^2\epsilon_1 + s_{n+k,d^2_{AB}}^2 \sum_E c'_E \int_{\overline{R_{\nu_E}}}\diff \nu_E(\Psi) \\
&= s_{n,d^2_{AB}}e^{-n\delta^2/2} + s_{2n,d^2_{AB}}^2s_{n,d^2_{AB}}e^{-n\delta^2/2} + s_{2n,d^2_{AB}}^2 \sum_E c'_E \int_{\overline{R_{\nu_E}}}\diff \nu_E(\Psi) \\
&\leq s_{2n,d^2_{AB}}^3e^{-n\delta^2/2} + s_{2n,d^2_{AB}}^2 \sum_E c'_E \int_{\overline{R_{\nu_E}}}\diff \nu_E(\Psi)\,. \label{eq:Pfail_final}
\end{align}
If we choose $R_{\nu_E}$ and $\delta$ such that
\begin{align}
\int_{R_{\nu_E}}\diff \nu_E(\Psi) \geq 1-\frac{\epsilon}{2}s_{2n,d^2_{AB}}^{-2}\, \textrm{ and } \, \delta^2 = \frac{2}{n}\left(\ln\frac{2}{\epsilon}+3\ln s_{2n,d^2_{AB}}\right)
\end{align}
then $P_{\textup{fail}}(\Lambda_{AB})\leq \epsilon/2 + \epsilon/2 = \epsilon$ as desired. The proof of the Proposition is complete.
\end{proof}

\subsection{Regions for figures-of-merit}
The construction of confidence region on channel-space can be pushed-forward to obtain confidence regions for any figure-of-merit of channels we are interested in. The idea is exactly the same as reference~\cite{FR16} and we include it here for completeness. Let $f_\mathrm{channel}:\ChannelH\rightarrow\mathbb{R}$ be an arbitrary figure-of-merit of channels. The measure $\diff \nu_E(\Lambda)$ can be pushed-forward by $f_\mathrm{channel}$ to a measure on $\mathbb{R}$, which can then be represented as a density function $h(v)$ with respect to the Lebesgue measure of $\mathbb{R}$. Concretely, we have
\begin{align}
h(v)=\int\diff \nu_E(\Lambda)\delta(f_\mathrm{channel}(\Lambda)-v),
\label{eq:channel-space-h-of-v}
\end{align}
where $\delta(f_\mathrm{channel}(\Lambda)-v)$ is the Dirac delta measure on $\mathbb{R}$ at the point mass $v\in\mathbb{R}$. And for some subset of values $V$, the measure of $V$ is given by
\begin{align}
\int_{f_\mathrm{channel}^{-1}(V)} \diff \nu_E(\Lambda) = \int_V h(v)\diff v
\end{align}
where $\diff v$ is the Lebesgue measure on $\mathbb{R}$. The density $h(v)$ allows us to construct confidence interval for the property we desired.

\begin{proposition}
Let $f_\mathrm{channel}$ be a figure-of-merit and choose a confidence level $1-\epsilon$. For each dataset $E$, let $V_{\nu_E}\subseteq \mathbb{R}$ be a region of values such that
\begin{align}
    \int_{V_{\nu_E}} h(v)\diff v\geq 1-\frac{\epsilon}{2}s_{2n,d^2_{AB}}^{-2}\,,
\end{align}
and let $V_{\nu_E}^\delta$ be defined as
\begin{align}
V_{\nu_E}^\delta:=\{v\in \mathbb{R}:\exists v'\in V_{\nu_E} \textrm{ with } |v-v'|\leq\omega_{f_\mathrm{channel}}(\delta)\}\ ,
\end{align}
where $\omega_f(\delta):=\sup_{P(\Lambda,\Lambda')\leq\delta}|f(\Lambda)-f(\Lambda')|$. Then the mapping $E\mapsto V_{\nu_E}^\delta$ is a confidence region estimator for the figure-of-merit $f_\mathrm{channel}$ with confidence level $1-\epsilon$ if
\begin{align}
\delta^2 = \frac{2}{n}\left(\ln\frac{2}{\epsilon}+3\ln s_{2n,d^2_{AB}}\right)
\end{align}
In other words, for all channel $\Lambda\in\ChannelH$
\begin{align}
       \Pr_E[f_\mathrm{channel}(\Lambda)\in V_{\nu_E}^\delta] \geq 1-\epsilon.
\end{align}
\end{proposition}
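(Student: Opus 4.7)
The plan is to reduce this proposition to Theorem~\ref{thm:channel-space} by a pullback argument along $f_\mathrm{channel}$. Concretely, given the value-space region $V_{\nu_E}\subseteq\mathbb{R}$ satisfying the weight condition, I would define the associated channel-space region $R_{\nu_E} := f_\mathrm{channel}^{-1}(V_{\nu_E})\subseteq\ChannelH$. By the definition of the pushforward density $h(v)$ in~\eqref{eq:channel-space-h-of-v}, the measure of $R_{\nu_E}$ under $\diff\nu_E$ equals the Lebesgue integral of $h$ over $V_{\nu_E}$, so the hypothesis immediately yields
\begin{align}
\int_{R_{\nu_E}}\diff\nu_E(\Lambda) \;=\; \int_{V_{\nu_E}} h(v)\,\diff v \;\geq\; 1-\frac{\epsilon}{2}s_{2n,d^2_{AB}}^{-2}\,.
\end{align}

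The second step is to apply Theorem~\ref{thm:channel-space} to this region with the stated $\delta$. This gives $\Pr_E[\Lambda\in R_{\nu_E}^\delta]\geq 1-\epsilon$, where $R_{\nu_E}^\delta$ is the purified-distance enlargement of $R_{\nu_E}$ in channel space.

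The third step, which is the crux of the argument, is to show the inclusion $f_\mathrm{channel}(R_{\nu_E}^\delta)\subseteq V_{\nu_E}^\delta$. Suppose $\Lambda\in R_{\nu_E}^\delta$; by definition there exists $\Lambda'\in R_{\nu_E}$ with $P(\Lambda,\Lambda')\leq\delta$. Setting $v := f_\mathrm{channel}(\Lambda)$ and $v' := f_\mathrm{channel}(\Lambda')$, the definition of the modulus of continuity yields $|v-v'|\leq \omega_{f_\mathrm{channel}}(\delta)$. Since $\Lambda'\in R_{\nu_E} = f_\mathrm{channel}^{-1}(V_{\nu_E})$, we have $v'\in V_{\nu_E}$, and hence $v\in V_{\nu_E}^\delta$ by the definition of the latter. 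Combining with the previous step gives
\begin{align}
\Pr_E\bigl[f_\mathrm{channel}(\Lambda)\in V_{\nu_E}^\delta\bigr] \;\geq\; \Pr_E\bigl[\Lambda\in R_{\nu_E}^\delta\bigr] \;\geq\; 1-\epsilon\,,
\end{align}
which is the desired confidence statement.

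There is no genuine obstacle here: the proposition is essentially a corollary obtained by transporting the channel-space confidence region of Theorem~\ref{thm:channel-space} through the continuous map $f_\mathrm{channel}$, and the modulus $\omega_{f_\mathrm{channel}}$ is defined precisely so that the enlargement commutes with pushforward. The only mild subtlety is measurability of $R_{\nu_E}$, which follows because $f_\mathrm{channel}$ is Borel (as the figures-of-merit considered are continuous in the purified distance) and $V_{\nu_E}$ may be chosen Borel without loss of generality. All the heavy lifting --- the operator inequality of Proposition~\ref{pro:operatorinequality}, the Holevo covariant POVM approximation of the indicator, and the calibration of $\delta$ and of the weight threshold against $s_{2n,d^2_{AB}}$ --- has already been discharged in the proof of Theorem~\ref{thm:channel-space}.
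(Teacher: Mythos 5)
Your argument is correct and is precisely the paper's own proof, which is stated in one line as the inclusion $V_{\nu_E}^\delta\supseteq f_\mathrm{channel}\bigl(f_\mathrm{channel}^{-1}(V_{\nu_E})^\delta\bigr)$ combined with an application of Theorem~\ref{thm:channel-space} to the pullback region $R_{\nu_E}=f_\mathrm{channel}^{-1}(V_{\nu_E})$. You have simply spelled out the pushforward identity and the modulus-of-continuity step in full detail; nothing differs in substance.
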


\begin{proof}
It is clear from the fact that as defined, $V_{\nu_E}^\delta\supseteq f_\mathrm{channel}(f_\mathrm{channel}^{-1}(V_{\nu_E})^\delta)$.
\end{proof}

For each figure-of-merit of interest, we can derive a bound on $\omega_f(\delta)$ by simple inequalities for distance measures. For diamond distance, we have the following result.
\begin{proposition}
\label{thm:channel-space-diamond}
For each dataset $E$, let $\gamma_E\in[0,1]$ be such that
\begin{align}
    \int_0^{\gamma_E} h(v)\diff v\geq 1-\frac{\epsilon}{2}s_{2n,d^2_{AB}}^{-2}\,,
\end{align}
Then the mapping $E\mapsto [0,\gamma_E+d_1\delta/2]$ is a confidence region estimator for the diamond distance to ideal with confidence level $1-\epsilon$ if
\begin{align}
\delta^2 = \frac{2}{n}\left(\ln\frac{2}{\epsilon}+3\ln s_{2n,d^2_{AB}}\right)
\end{align}
In other words, for all channel $\Lambda\in\ChannelH$
\begin{align}
       \Pr_E\left[\frac{1}{2}||\Lambda_{A\to B}-\Lambda^{\mathrm{ideal}}_{A\to B}||_\diamond \leq \gamma_E + d_A\delta/2 \right] \geq 1-\epsilon\,,
\end{align}
where the probability is over the random dataset $E$ with distribution $\Pr(E|\Lambda)=\mathcal{L}(\Lambda|E)$.
\end{proposition}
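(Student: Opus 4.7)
The plan is to specialize the preceding general proposition (confidence regions for arbitrary figures-of-merit) to $f_{\mathrm{channel}}(\Lambda) = \tfrac{1}{2}\|\Lambda - \Lambda^{\mathrm{ideal}}\|_\diamond$, taking the candidate high-weight region on values to be $V_{\nu_E} = [0, \gamma_E]$. The weight condition in Proposition~\ref{thm:channel-space-diamond} coincides with the one required by the general proposition, so the only genuine task is to identify $\omega_{f_\diamond}(\delta)$ explicitly. Two observations will then combine: an explicit Lipschitz bound $\omega_{f_\diamond}(\delta) \leq d_A\delta/2$, and the fact that $f_\diamond \geq 0$, so that the symmetric enlargement $V_{\nu_E}^\delta$ is contained in the one-sided interval $[0, \gamma_E + d_A\delta/2]$ (any extension into the negative half-line is irrelevant to whether $f_\diamond(\Lambda)$ lies in the region).

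For the Lipschitz bound I would chain three standard estimates. First, the reverse triangle inequality for the diamond norm yields $|f_\diamond(\Lambda) - f_\diamond(\Lambda')| \leq \tfrac{1}{2}\|\Lambda - \Lambda'\|_\diamond$. Second, to transfer from channel space to Choi-state space, I would parametrize any pure input $\ket{\psi}_{AR}$ as $(\id_A \otimes X_R)\ket{\tilde\Phi}_{AR}$ so that $XX^\dagger$ is the reduced state on $R$ and in particular $\|X\|_\infty \leq 1$. This allows one to factor
\begin{equation*}
(\Lambda - \Lambda')\otimes\id_R\,(\ketbra{\psi}_{AR}) = d_A\,(\id_B\otimes X)(\Lambda_{AB} - \Lambda'_{AB})(\id_B\otimes X^\dagger),
\end{equation*}
and the H\"older-type inequality $\|MNP\|_1 \leq \|M\|_\infty\|N\|_1\|P\|_\infty$ then yields $\|\Lambda - \Lambda'\|_\diamond \leq d_A\,\|\Lambda_{AB} - \Lambda'_{AB}\|_1$. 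Third, Fuchs--van de Graaf, $\tfrac{1}{2}\|\rho-\sigma\|_1 \leq P(\rho,\sigma)$, combined with the convention $P(\Lambda, \Lambda') := P(\Lambda_{AB}, \Lambda'_{AB})$, closes the chain.

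The main obstacle I anticipate is recovering the sharp constant. The naive concatenation of the three inequalities above produces a modulus of continuity of $d_A\delta$, a factor of $2$ larger than announced. I would try to shave this factor either by exploiting the traceless character of $\Lambda - \Lambda'$ (in addition to its Hermiticity preservation), or by arguing via Stinespring dilations so that the purified distance on channels controls the output trace distance directly, without an intermediate trace-distance reduction on the Choi states which wastes a factor of $2$ through Fuchs--van de Graaf. Once any such sharpening is in place, invoking the general figure-of-merit proposition with the prescribed $\delta = \sqrt{(2/n)(\ln(2/\epsilon)+3\ln s_{2n,d^2_{AB}})}$ and the high-weight region $[0,\gamma_E]$ immediately yields the stated confidence region $[0, \gamma_E + d_A\delta/2]$ at confidence level $1-\epsilon$.
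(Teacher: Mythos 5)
Your route is exactly the paper's: specialize the general figure-of-merit proposition to $V_{\nu_E}=[0,\gamma_E]$ and bound $\omega_{f_\diamond}(\delta)$ by chaining the reverse triangle inequality, the H\"older-type bound $\|\Lambda-\Lambda'\|_\diamond\leq d_A\|\Lambda_{AB}-\Lambda'_{AB}\|_1$ on the Choi difference, and Fuchs--van de Graaf. The substantive point is the factor of $2$ you flag at the end, and here your skepticism is vindicated rather than resolved: the paper's own proof does \emph{not} close that gap legitimately. It writes $\tfrac12\|\Lambda_{AB}-\Lambda'_{AB}\|_1\leq\tfrac12 P(\Lambda_{AB},\Lambda'_{AB})$, i.e.\ $\|\rho-\sigma\|_1\leq P(\rho,\sigma)$, which is Fuchs--van de Graaf with a spurious extra $\tfrac12$ and is false in general (orthogonal pure states give $2\leq 1$). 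The correct chain yields $\omega_{f_\diamond}(\delta)\leq d_A\,\delta$, exactly as you computed.

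Neither of your proposed repairs will recover the announced constant in general. Tracelessness of $\Lambda_{AB}-\Lambda'_{AB}$ buys nothing, since Fuchs--van de Graaf already concerns differences of density operators, which are traceless; and any argument establishing $\tfrac12\|\Lambda-\Lambda'\|_\diamond\leq \tfrac{d_A}{2}P(\Lambda_{AB},\Lambda'_{AB})$ uniformly must fail at $d_A=1$, where it reduces to the false $\tfrac12\|\rho-\sigma\|_1\leq\tfrac12 P(\rho,\sigma)$; so no dimension-independent sharpening of the three-step chain can work. The honest conclusion is that the proposition holds with the enlargement $\gamma_E+d_A\delta$ in place of $\gamma_E+d_A\delta/2$ (everything else in your argument, including the observation that $f_\diamond\geq 0$ lets you discard the left end of the symmetric enlargement, is fine), and this weakening propagates to the numerical example in the main text, roughly doubling the $\delta$-dependent part of the reported interval. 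If you want to keep $d_A\delta/2$, you would need a genuinely different continuity estimate relating the diamond norm directly to the purified distance of Choi states, not a patch of the trace-norm route.
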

\begin{proof}
Continuing from the previous Proposition, we set $V_E:=[0,\gamma_E]$; it remains for us to obtain a bound on $\omega_{f_\diamond}(\delta)$. Using the reverse triangle inequality and SDP reformulation of diamond norm, we have
\begin{align}
|f_\diamond(\Lambda)-f_\diamond(\Lambda')| &= \frac{1}{2}\left| ||\Lambda-\Lambda^{\mathrm{ideal}}||_\diamond - ||\Lambda'-\Lambda^{\mathrm{ideal}}||_\diamond \right| \nonumber \\
&\leq \frac{1}{2}||\Lambda_{A\to B}-\Lambda'_{A\to B}||_\diamond \\
&\leq \frac{1}{2}||d_A(\Lambda_{AB}-\Lambda'_{AB})||_1\,,
\end{align}
where the last inequality utilises the duality between Schatten $1$-norm and Schatten $\infty$-norm to bound the objective function of the diamond norm SDP. Since the purified distance dominates the trace distance, we obtain
\begin{align}
\frac{1}{2}||\Lambda_{AB}-\Lambda'_{AB}||_1 \leq \frac{1}{2}P(\Lambda_{AB},\Lambda'_{AB})\,,
\end{align}
which implies $\omega_{f_\diamond}(\delta)\leq d_A\delta/2$.
\end{proof}

For worst-case entanglement fidelity, we have the following result.
\begin{proposition}
\label{thm:channel-space-diamond}
For each dataset $E$, let $\gamma_E\in[0,1]$ be such that
\begin{align}
    \int_0^{\gamma_E} h(v)\diff v\geq 1-\frac{\epsilon}{2}s_{2n,d^2_{AB}}^{-2}\,,
\end{align}
Then the mapping $E\mapsto [0,\gamma_E-d_A\delta]$ is a confidence region estimator for the diamond distance to ideal with confidence level $1-\epsilon$ if
\begin{align}
\delta^2 = \frac{2}{n}\left(\ln\frac{2}{\epsilon}+3\ln s_{2n,d^2_{AB}}\right)
\end{align}
In other words, for all channel $\Lambda\in\ChannelH$
\begin{align}
       \Pr_E\left[F_{\mathrm{worst}}(\Lambda_{A\to B}) \geq \gamma_E - d_A\delta \right] \geq 1-\epsilon\,,
\end{align}
where the probability is over the random dataset $E$ with distribution $\Pr(E|\Lambda)=\mathcal{L}(\Lambda|E)$.
\end{proposition}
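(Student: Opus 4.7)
The plan is to apply the general figure-of-merit proposition from the previous subsection directly, choosing $f_{\mathrm{channel}} = F_{\mathrm{worst}}$ and the value region $V_{\nu_E}$ corresponding to the high-weight tail of the density $h(v)$ (so that its enlargement becomes $[\gamma_E - d_A\delta,\,1]$). Given this reduction, the entire content of the proposition reduces to establishing the modulus-of-continuity bound
\begin{equation*}
\omega_{F_{\mathrm{worst}}}(\delta) \;=\; \sup_{P(\Lambda,\Lambda')\leq \delta}\bigl|F_{\mathrm{worst}}(\Lambda) - F_{\mathrm{worst}}(\Lambda')\bigr| \;\leq\; d_A\,\delta\,.
\end{equation*}
Everything else, including the choice of $\delta$ and the failure probability accounting, is inherited verbatim from the general figure-of-merit proposition.

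To bound $\omega_{F_{\mathrm{worst}}}$, I would first use the standard $|\inf - \inf|\leq\sup|\cdot|$ argument to reduce the problem to controlling $|F^2(\Lambda_{\bar A\to B}(\phi_{A\bar A}),\phi_{AB}) - F^2(\Lambda'_{\bar A\to B}(\phi_{A\bar A}),\phi_{AB})|$ over pure bipartite inputs $\ket\phi_{A\bar A}$, invoking the noted fact from Appendix~\ref{sec:worst-case-ent-fidelity} that the worst-case input may be taken pure without loss of generality. For pure $\phi$ the target of the fidelity is a rank-one projector, so $F^2(\rho,\phi)=\tr(\phi\,\rho)$, and the difference of the two squared fidelities becomes
\begin{equation*}
\bigl|\tr\bigl(\phi_{AB}\,[\Lambda_{\bar A\to B}(\phi_{A\bar A}) - \Lambda'_{\bar A\to B}(\phi_{A\bar A})]\bigr)\bigr|\,.
\end{equation*}
Since $\phi_{AB}$ is a rank-one projector and the operator in brackets is traceless, this is at most $\tfrac12\|\Lambda(\phi)-\Lambda'(\phi)\|_1$ by splitting into the positive and negative parts of the Jordan decomposition.

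The remaining step is the standard chain of inequalities already used in the proof of Proposition~\ref{thm:channel-space-diamond} for the diamond distance: $\tfrac12\|\Lambda(\phi)-\Lambda'(\phi)\|_1 \leq \tfrac12\|\Lambda-\Lambda'\|_\diamond \leq \tfrac{d_A}{2}\|\Lambda_{AB}-\Lambda'_{AB}\|_1 \leq d_A\,P(\Lambda_{AB},\Lambda'_{AB}) = d_A\,P(\Lambda,\Lambda')\leq d_A\,\delta$, where the first step is the definition of the diamond norm, the second step is the known bound relating diamond norm to Choi trace distance, and the third step is Fuchs--van de Graaf. Combining yields $\omega_{F_{\mathrm{worst}}}(\delta) \leq d_A\,\delta$, after which plugging into the general figure-of-merit proposition immediately gives the confidence region $[\gamma_E - d_A\delta,\,1]$ at level $1-\epsilon$.

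The main obstacle is a conceptual one rather than a technical one: the quantity $F_{\mathrm{worst}}$ is an infimum over an auxiliary bipartite input state, so one must be careful that this inner optimization does not inflate the Lipschitz constant. The key observation unlocking the clean $d_A\delta$ bound is that at any fixed input $\phi$, the fidelity with the pure reference state $\phi_{AB}$ is an inner product against a rank-one projector, turning what could be a crude trace-norm estimate into the sharper $\tfrac12\|\cdot\|_1$ bound. Beyond this, the argument is essentially a repetition of the diamond-distance calculation and does not introduce new probabilistic difficulties.
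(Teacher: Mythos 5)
Your proposal is correct and follows the same overall skeleton as the paper: everything reduces, via the general figure-of-merit proposition, to showing the modulus-of-continuity bound $\omega_{F_{\mathrm{worst}}}(\delta)\leq d_A\delta$. Where you differ is in how that Lipschitz bound is obtained. The paper works from its SDP/quadratic-form reformulation $F_{\mathrm{worst}}(\Lambda)=\inf_{\rho_A}\bra{\tilde\Phi}\rho_A\,(d_A\Lambda_{AB})\,\rho_A\ket{\tilde\Phi}$, evaluates both channels at the optimizer $\rho_A$ of $\Lambda'$, and applies H\"older directly to $\ip{\rho_A\ketbra{\tilde\Phi}\rho_A}{\Lambda_{AB}-\Lambda'_{AB}}$, using $\norm{\rho_A\ketbra{\tilde\Phi}\rho_A}_\infty\leq 1$; you instead stay with the original definition over pure bipartite inputs, use $\lvert\inf-\inf\rvert\leq\sup\lvert\cdot\rvert$, exploit that the reference state is a rank-one projector to get the sharper $\tfrac12\norm{\Lambda(\phi)-\Lambda'(\phi)}_1$ (rather than a crude H\"older estimate), and then route through the diamond norm and the Choi trace distance. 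Both arrive at $d_A\delta$, but your chain is slightly more careful at the last step: you invoke the correct Fuchs--van de Graaf inequality $\tfrac12\norm{\Lambda_{AB}-\Lambda'_{AB}}_1\leq P(\Lambda_{AB},\Lambda'_{AB})$, whereas the paper's proof quietly uses $\norm{\Lambda_{AB}-\Lambda'_{AB}}_1\leq P$, which is off by a factor of $2$ from the standard inequality; the factor of $\tfrac12$ you gain from the traceless/rank-one-projector argument is exactly what absorbs this, so your route actually closes a small gap in the paper's own derivation. Your explicit two-sided $\lvert\inf-\inf\rvert$ argument is also cleaner than the paper's one-sided choice of optimizer. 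One cosmetic remark: the proposition as printed contains typos (the region should be $[\gamma_E-d_A\delta,1]$ with weight condition $\int_{\gamma_E}^1 h(v)\,\diff v$, and the phrase ``diamond distance to ideal'' is a copy-paste slip); your choice of $V_{\nu_E}$ as the upper tail matches the intended statement and the paper's own proof.
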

\begin{proof}
We set $V_E:=[\gamma_E,1]$. Let $\rho_A$ be an optimizer of $F_{\mathrm{worst}}(\Lambda')$, since $\rho_A$ will give an upper bound on $F_{\mathrm{worst}}(\Lambda)$ we have with $f=F_{\mathrm{worst}}$
\begin{align}
|f(\Lambda)-f(\Lambda')| &= |F_{\mathrm{worst}}(\Lambda)-F_{\mathrm{worst}}(\Lambda')| \leq |\bra{\tilde{\Phi}}\rho_A(d_A\Lambda_{AB})\rho_A\ket{\tilde{\Phi}} - \bra{\tilde{\Phi}}\rho_A(d_A\Lambda'_{AB})\rho_A\ket{\tilde{\Phi}}| \\
&=d_A\left|\ip{\rho_A\ketbra{\tilde{\Phi}}\rho_A}{\Lambda_{AB}-\Lambda'_{AB}}\right|  \leq d_A \norm{\rho_A\ketbra{\tilde{\Phi}}\rho_A}_{\infty} \norm{\Lambda_{AB}-\Lambda'_{AB}}_1 \leq d_A\delta\,,
\end{align}
using Holder inequality for Schatten norms and $\norm{\Lambda_{AB}-\Lambda'_{AB}}_1 \leq P(\Lambda_{AB},\Lambda'_{AB})$.
\end{proof}

\section{\label{sec:numerics}Metropolis-Hastings algorithm in channel space}
The previous two sections describe the construction of confidence region estimators for quantum processes, which utilize distributions $\diff\mu_E(\sigma)$ and $\diff\nu_E(\Lambda)$. We now describe how one can numerically estimate such distributions so that the densities $\mu(v)$ and $h(v)$ can be approximated.

The distribution $\diff\mu_E(\sigma)$ or the density $\mu(v)$ can be estimated by numerically producing a lot of samples. These can be generated by the Metropolis-Hastings random walk in (bipartite) state space, whose details can be found in Ref.~\cite{FR16}. Here we only discuss the Metropolis-Hastings random walk in channel space.

Recall that in the channel space method, we need to be able to compute the density $h(v)$ for the given figure-of-merit $f_\mathrm{channel}$. We do this numerically using Metropolis-Hastings algorithm. The output of this algorithm is a histogram of the figure-of-merit which approximates the continuous density.

Let us recall the Metropolis-Hasting algorithm for continuous sample space~\cite{MHalgorithm}. Let $p(x)\diff x$ be the target distribution from which we want to sample, and $q(x'|x)\diff x'$ be a proposal distribution, all displayed with respect to the same base measure $\diff x = \diff x'$. We assume that the proposal density function is symmetric $q(x'|x)=q(x|x')$. When the process is at point $x$, the distribution $q(x'|x)\diff x'$ proposes a new point $x'$. If $p(x')/p(x)\geq1$ then we jump unconditionally to the new point $x'$; otherwise, $p(x')/p(x)<1$ and we jump to $x'$ only with probability $p(x')/p(x)$. The points visited in this fashion, for a large number of iterations, are distributed according to the target distribution. Note that the algorithm only requires computing the ratio $p(x')/p(x)$ and thus does not require determining any normalization factor for $p(x)$.

We want to generate samples from the target distribution
\begin{align}
\diff\nu_E(\Lambda):=c'^{-1}_E\mathcal{L}(\Lambda|E)\diff \nu(\Lambda)
\end{align}
where $\mathcal{L}(\Lambda|E)$ is the prepare-and-measure or ancilla-assisted likelihood function and $\diff \nu(\Lambda)$ is the induced measure on channel space. Recalling the definition of $\diff \nu(\Lambda)$, we thus want to sample from
\begin{align}
\diff\nu_E(U_{BA'B'}) = c'^{-1}_E\mathcal{L}(U_{BA'B'}|E)\diff U_{BA'B'}\,,
\end{align}
with $\diff U_{BA'B'}$ the invariant Haar measure. Concretely, in the prepare-and-measure scheme we take
\begin{align}
\mathcal{L}_{\text{PM}}(U|E) = d_A^n\tr\bigg( (U\ketbra{\Psi_0}U^\dagger)^{\otimes n} \, \bigotimes_{j,k,\ell} (\sigma_A^j)^\intercal \otimes E^{(\ell)}_k \bigg) 
\end{align}
and in ancilla-assisted scheme we take
\begin{align}
\mathcal{L}_{\text{AA}}(U|E) = d_A^n\tr\bigg((U\ketbra{\Psi_0}U^\dagger)^{\otimes n} \,\bigotimes_{k,\ell} \psi_P^{1/2}\,E_k^{(\ell)}\,\psi_P^{1/2} \bigg)\,,
\end{align}
where $\ket{\Psi_0}$ is the fixed reference state in~\eqref{eq:fixedpurifiedChoi}. This can be done using the Metropolis-Hastings algorithm, by designing a symmetric proposal distribution over the space of all unitaries $U_{BA'B'}$ and setting $q(U'_{BA'B'}|U_{BA'B'})\propto \mathcal{L}(U'_{BA'B'}|E)$. To ensure $q(U'|U)=q(U|U')$, let $q(W)\diff W$ be a distribution on unitaries on $BA'B'$ such that $q(W)=q(W^\dagger)$. For each point $U$, if we define $U':=WU$, then we have a symmetric proposal distribution $q(U'|U)=q(WU|U)=q(W)=q(W^{-1})=q(W^{-1}U'|U')=q(U|U')$, namely $q(WU|U)\diff W$ where $\diff W$ is the Haar measure. It remains to fix a $q(W)\diff W$ with $q(W)=q(W^\dagger)$. We have implemented two choices:
\begin{itemize}
    \item ``$e^{iH}$-type jumps'': We pick a random $d_{BA'B'}\times d_{BA'B'}$ matrix $N$ with each entry independent and normally distributed complex numbers with standard deviation given by the step size. We then calculate $H=N+N^\dagger$ and set $W=e^{iH}$, inducing a measure $q(W)\,dW$. Denoting by $dN$ the measure induced on $N$ by this sampling procedure, observe that $dN=d(-N)$ as the normal distribution is symmetric.  Furthermore the Haar measure is invariant under the adjoint, $dW = d(W^\dagger)$, since $d(W^\dagger)$ is also unitarily invariant and is thus also the Haar measure.  Hence, $q(W)\,dW = dN = d(-N) = q(W^\dagger)\,d(W^\dagger) = q(W^\dagger)\,dW$ as required.
    \item ``elementary rotation jumps'': Choose $m\in\{x,y,z\}$ uniformly at random and choose two indices $i<j$ uniformly at random.  Choose $\sin(\alpha)$ at random (normally distributed number whose standard deviation is the step size; truncated to $[-1, 1]$). Define the unitary $W_1$ as the qubit rotation on the subspace spanned by $\{\ket i,\ket j\}$ defined by $e^{i\alpha\,(\vec{e}_m\cdot\vec{\sigma})} = \cos(\alpha)\, \Ident + i\sin(\alpha)\,(\vec{e}_m\cdot\vec{\sigma})$, where $\vec{e}_m$ is the $m$-th basis vector in 3D and where $\{\sigma_x,\sigma_y,\sigma_z\}$ are the Pauli matrices. We see that $-\alpha\,(\vec e_m\cdot\vec\sigma)$ is sampled with the same probability as $\alpha\,(\vec e_m\cdot\vec\sigma)$ and hence for the same reason as above, $q(W)=q(W^\dagger)$.  In order to keep the acceptance ratio at a reasonable rate, we sample $N_\text{inner-iter}$ different instances of $W_1$, and multiply them together to form the sampled $W$.  One should choose $N_\text{inner-iter}$ such that it is possible to keep the acceptance rate around $30\%$.  
\end{itemize}

\section{\label{sec:convergence}Convergence in number of samples $N\to\infty$}

We now turn to an example where we clearly observe the convergence of the distributions $h(v)$ and $\mu(v)$ around the known true figure-of-merit.  Consider a noisy identity process on a qutrit, of the form
\begin{align}
  \Lambda_{A\to B}(\rho) = p\,\rho + (1-p)\,d_B^{-1}\id_B\ ,
\end{align}
with $p=0.96$ and $d_B=3$. This gives us a diamond norm to the identity process of
\begin{align}
  \frac12\norm{\Lambda_{A\to B} - \mathcal{I}_{A\to B}}_\diamond
  = 0.03556\ .
\end{align}

We consider measurements on the input and output systems given by using the
Gell-Mann matrices as observables:
\begin{align*}
\lambda_{1} &= {\begin{pmatrix}0&1&0\\1&0&0\\0&0&0\end{pmatrix}}\,; &
\lambda_{2} &= {\begin{pmatrix}0&-i&0\\i&0&0\\0&0&0\end{pmatrix}}\,; &
\lambda_{3} &= {\begin{pmatrix}1&0&0\\0&-1&0\\0&0&0\end{pmatrix}}\,; \\
\lambda_{4} &= {\begin{pmatrix}0&0&1\\0&0&0\\1&0&0\end{pmatrix}}\,; &
\lambda_{5} &= {\begin{pmatrix}0&0&-i\\0&0&0\\i&0&0\end{pmatrix}}\,; &
\lambda_{6} &= {\begin{pmatrix}0&0&0\\0&0&1\\0&1&0\end{pmatrix}}\,; \\
\lambda_{7} &= {\begin{pmatrix}0&0&0\\0&0&-i\\0&i&0\end{pmatrix}}\,; &
\lambda_{8}
&= \frac{1}{\sqrt{3}}{\begin{pmatrix}1&0&0\\0&1&0\\0&0&-2\end{pmatrix}}\,.
\hspace*{-2em}
\end{align*}
Each single-system measurement setting has three possible outcomes.  For each
pair of measurement settings (for the input and the output system) we simulate
$N$ measurement outcomes.  We choose $N=10^6$ for our reference experiment,
yielding a total of $N_\mathrm{tot} = 8^2\times 10^6 = 6.4\times 10^7$
measurement outcomes.  We denote the corresponding frequency vector by
$(n^{\mathrm{Ref}}_{j_Aj_B,\ell_A\ell_B})$, where $j_i$ labels the measurement
setting on system $i$ and $\ell_i$ labels the corresponding measurement
outcome. We group together all the indices into a collective index $k$, such
that $n^{\mathrm{Ref}}_k$ denotes the number of times the joint POVM effect
$E^{(k)}_{AB}$ was observed.

The corresponding analysis is depicted in \autoref{fig:ConvergenceNInfty}, as
the curve labeled ``100\%''.  Thanks to the large number of measurements, the
distributions $h(v)$ and $\mu(v)$ peak sharply around the true value of
$f_\diamond$.

We now ask, how would these distributions look if fewer measurements had
been taken?  Instead of simulating new outcomes, which would cause the peak to
be displaced and would make a comparison more difficult, we artificially rescale
the frequency vector $n^{\mathrm{Ref}}_k$ by a factor $\alpha$, i.e., we define
$n^{\alpha}_k = \lfloor\alpha\,n^{\mathrm{Ref}}_k\rfloor$, where by
$\lfloor x\rfloor$ we denote the largest integer less than or equal to $x$.  For
instance, we may choose $\alpha=0.01=1\%$ to represent an experiment in which
only $n \approx \alpha\, N = 10^4$ measurements per setting were sampled,
instead of $N$.  While this rescaling of the frequency vector is artificial, the
resulting measurement counts are still representative of possible outcomes that
one could have sampled if we had simulated directly only $\alpha\,N$ outcomes
per setting; crucially, doing so facilitates comparisons between the different
settings.  The analysis for a selection of values for $\alpha$ (given as
percentages) is presented in \autoref{fig:ConvergenceNInfty}.  The corresponding
peaks are indeed seen to converge towards the true value of $f_\diamond$.
For each value of $\alpha$, we calculate the corresponding quantum error bars
$(v_0, \Delta, \gamma)$, and plot them against $\alpha$
(\autoref{fig:ConvergenceNInfty}, inset).  The quantum error bars become a
  better and tighter description of the true state as the number of measurements
  increase, as expected.

The quantum error bar $\Delta$ is the one which is most akin to a
  ``standard deviation,'' as in the limit $\gamma\to0$ the fit
  model~\eqref{eq:fit-model-1} becomes a Gaussian.  We may investigate the
  precise scaling of $\Delta$ as a function of the number of measurements by
  plotting the magnitude of this quantum error bar against the number of
  measurements in a log-log plot (\autoref{fig:convergence_plot_scaling}).  We
  indeed observe a scaling close to $1/\sqrt{n}$, where $n\approx\alpha N$ is
  the number of measurements, as expected from known results in usual quantum
  tomography.  We expect that by improving the measurement settings, for
  instance by using adaptive measurements, tighter error bars can be achieved
  with fewer
  measurements~\cite{Sugiyama2012PRA_AdaptiveTomo,Mahler2013PRL_adaptive,Granade2016arXiv_adaptive,Pogorelov2017PRA_adaptive}.

  This depiction allows us again to
appreciate the convergence to the true value of $f_\diamond$.

\begin{figure}
  \centering
  \includegraphics{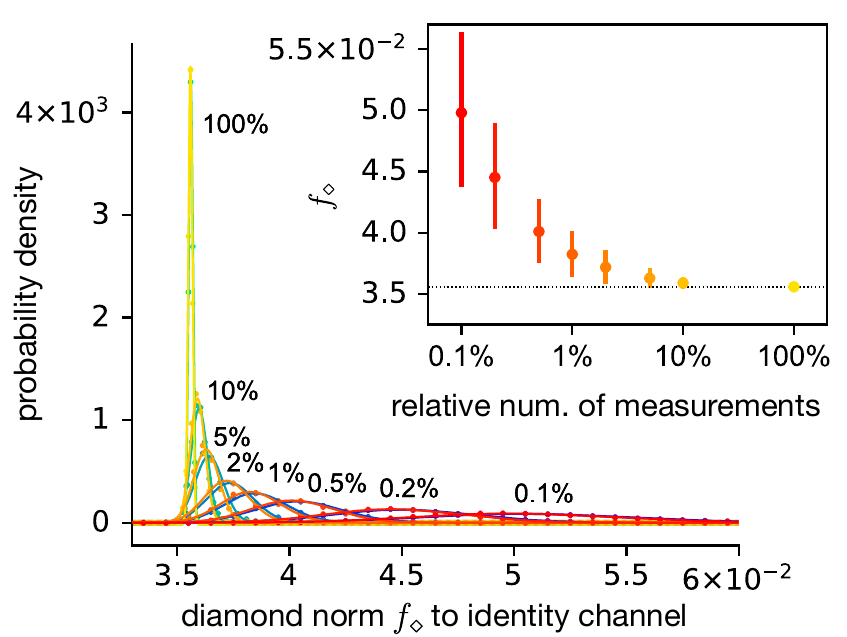}
  \caption{Convergence of quantum error bars to the true value of figure of
    merit in the limit of many measurements, for a noisy identity process on a
    qutrit.  Measurements using Gell-Mann matrices as observables on the input
    and the output systems were simulated with $10^6$ outcomes per setting,
    providing the reference experiment (labeled ``100\%'').  Analyses as in
    \autoref{fig:results-example-plots} were then carried out after artificially
    rescaling the measured frequency counts by various factors (percentage
    labels), allowing us to compare regimes with different number of
    measurements while still keeping the estimated expectation values of the
    measured observables constant to facilitate comparison.  As the number of
    measurements increases, the distribution of $f_\diamond$, the diamond norm
    to the identity channel, peaks to the known true value of
    $3.556\times 10^{-2}$.  Data points display the numerical histogram
    (biparite-state sampling method: blue--green; channel-space method:
    red--yellow) which are fit to our model \#1.  \textbf{Inset:} the quantum
    error bars ($v_0,\Delta,\gamma$) obtained from the fit~\protect\cite{FR16}
    (channel-space method only) are plotted against the number of
    measurements relative to the reference experiment; markers represent $v_0$
    with an error bar representing $[v_0-(\Delta-\gamma),v_0+\Delta+\gamma]$ for
    each analysis instance.  The dotted line indicates the known true value of
    $f_\diamond$.}
  \label{fig:ConvergenceNInfty}
\end{figure}

\begin{figure}
  \centering
  \includegraphics[width=86mm]{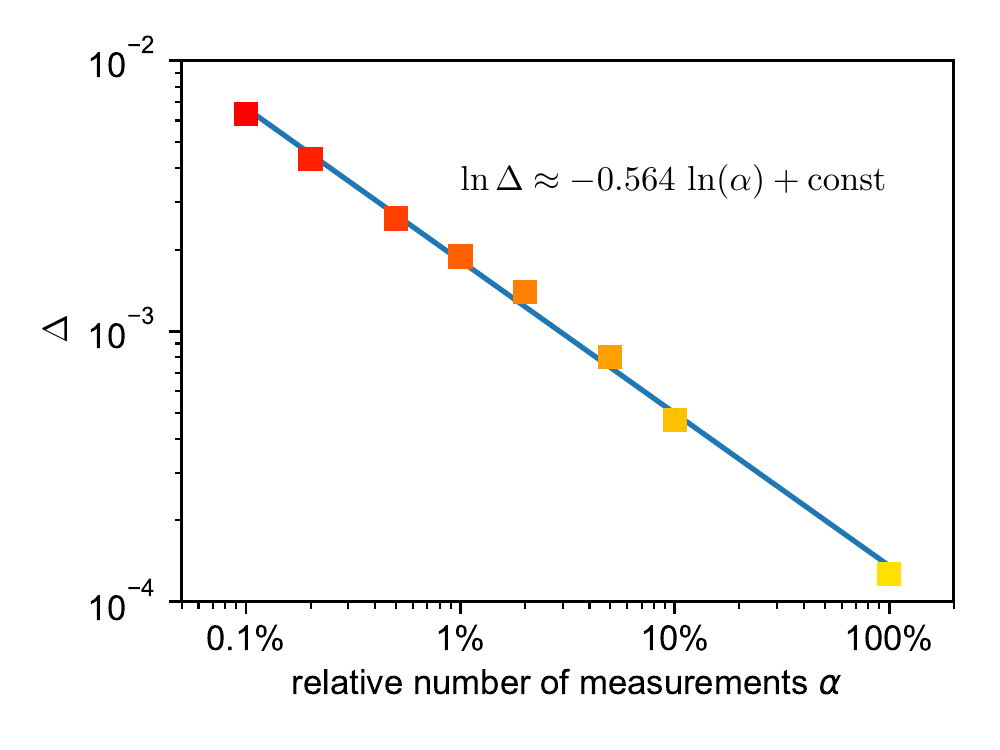}
  \caption{One of the quantum error bars, $\Delta$, is observed to scale
    approximately as $1/\sqrt{N}$, where $N$ indicates the number of
    measurements, as expected in standard (non-adaptive) quantum tomography.
    The setting is the same as in \autoref{fig:ConvergenceNInfty}.  By choosing
    more sophisticated measurement operators, for instance by adapting the
    measurement settings based on earlier outcomes, the scaling could be
    improved~\cite{Sugiyama2012PRA_AdaptiveTomo,Mahler2013PRL_adaptive,Granade2016arXiv_adaptive,Pogorelov2017PRA_adaptive}.}
  \label{fig:convergence_plot_scaling}
\end{figure}

\section{\label{sec:compare}Relations of two methods}
In this section, we discuss the theoretical connections between the two methods, specifically the relationship between the densities $\mu(v)$ and $h(v)$. We will use basic notions from measure theory which is available in any standard textbook.

Recall that we use the induced measure $\diff\sigma_{AB}$ on density matrices $\rmD(\cH_{AB})$ in the biparite-state sampling and the measure $\diff \nu(\Psi)$ on Choi state $\sC(\cH_{AB})$ in the channel space method. It is helpful for the reader to refresh the definition of these measures in Appendix~\ref{sec:measures}. The following result connect these probability measures; its proof is delayed till the end of this Appendix.

\begin{proposition}
\label{prop:relation_measures}
The measure $\diff \sigma_{AB}$ factors as $\diff\sigma_A \diff \nu(\Lambda_{A\to B})$ in the sense that for all measurable function $g(\sigma_{AB})$
\begin{align}
\int \diff\sigma_{AB}\, g(\sigma_{AB}) &= \int \diff \sigma_A\int\diff \nu(\Lambda_{A\to B})\, g(d_A \sigma_A^{1/2}J(\Lambda_{A\to B})\sigma_A^{1/2})\\
 &= \int \diff \sigma_A\int\diff \nu(\Lambda_{A\to B})\, g(d_A \sigma_P^{1/2}J(\Lambda_{A\to B})\sigma_P^{1/2})\label{eq:relation_measures}\,,
\end{align}
where $\diff\sigma_A$ is the reduced measure of $\diff\sigma_{AB}$ via partial tracing and $\diff\nu(\Lambda_{A\to B})$ is the uniform measure on channel space induced by $\diff U_{BA'B'}$ and $\sigma_P=\sigma_A^\intercal$.
\end{proposition}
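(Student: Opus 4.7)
The plan is to obtain both sides of~\eqref{eq:relation_measures} as pushforwards of the same Haar measure on a larger space, by parametrizing a Haar-random pure state in two compatible ways. I would first recall from Appendix~\ref{sec:measures} that $\diff\sigma_{AB}$ is defined as the pushforward along $\tr_{A'B'}$ of the Haar measure on pure states $\ket\Phi\in\cH_{ABA'B'}$; re-grouping $\cH_{ABA'B'}=\cH_A\otimes\cH_{BA'B'}$, this is simply a Haar-random pure state under the bipartition $A:BA'B'$. The key structural observation is that any such pure state admits a Schmidt-type factorization
\begin{align*}
  \ket{\Phi}_{ABA'B'} \;=\; (\sqrt{\sigma_A}\otimes V_{A_1\to BA'B'})\,\ket{\tilde\Phi}_{AA_1}\,,
\end{align*}
where $\ket{\tilde\Phi}_{AA_1}=\sum_i\ket{i}_A\ket{i}_{A_1}$ is the unnormalized maximally entangled state (with $A_1\cong A$) and $V$ is an isometry.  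Tracing out $A'B'$ and viewing $V$ as a Stinespring dilation of a channel $\Lambda^V:A_1\to B$ immediately gives $\sigma_{AB}= d_A\,(\sqrt{\sigma_A}\otimes\id_B)\,J(\Lambda^V)\,(\sqrt{\sigma_A}\otimes\id_B)$, exactly the integrand on the right-hand side of~\eqref{eq:relation_measures}.

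I would next show that under Haar on $\ket\Phi$, the joint law of $(\sigma_A,V)$ factorizes as $\diff\sigma_A\cdot\diff V$, where $\diff V$ is the uniform measure on isometries $\cH_{A_1}\to\cH_{BA'B'}$.  The $\sigma_A$-marginal is, by definition, the reduced measure of $\diff\sigma_{AB}$; the conditional law of $V$ at fixed $\sigma_A$ is uniform because Haar on $\ket\Phi$ is invariant under $\ket\Phi\mapsto(\id_A\otimes U_{BA'B'})\ket\Phi$ for any $U_{BA'B'}$, which (using uniqueness of the decomposition on the generic full-rank locus) forces the conditional law to be invariant under $V\mapsto U_{BA'B'}V$.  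Finally, the Stinespring identification $V\mapsto\Lambda^V$ pushes $\diff V$ forward to exactly $\diff \nu(\Lambda)$, since both measures arise from applying Haar on $\bU(\cH_{BA'B'})$ to a fixed reference isometry (respectively to the fixed reference state $\ket{\Psi_0}$ of~\eqref{eq:fixedpurifiedChoi}).  Combining these two steps yields the first equality of~\eqref{eq:relation_measures}.

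For the second equality I would invoke the ``transpose trick'' $(\sqrt{\sigma_A}\otimes\id)\ket{\hat\Phi} = (\id\otimes\sqrt{\sigma_P})\ket{\hat\Phi}$ with $\sigma_P=\sigma_A^\intercal$, which rewrites the integrand in terms of $\sigma_P$ at the cost of transporting the action from system $A$ to the purifying system, equivalent to replacing $J(\Lambda)$ by its complex conjugate $\overline{J(\Lambda)}$.  That conjugation can then be absorbed by the change of variables $J(\Lambda)\mapsto\overline{J(\Lambda)}$, under which $\diff\nu(\Lambda)$ is invariant because the Haar measure on $\bU(\cH_{BA'B'})$ is invariant under $U\mapsto\overline U$.

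The hardest step, I expect, will be the rigorous factorization of the Haar measure in the second paragraph: one must verify that the $(\sigma_A,V)$-parametrization is measurable and that the conditional distribution of $V$ given $\sigma_A$ really is uniform on isometries, rather than merely invariant under a transitive group action.  The non-uniqueness of the decomposition on the set of rank-deficient $\sigma_A$ is, however, harmless, since that set is null under $\diff\sigma_A$ and both sides of~\eqref{eq:relation_measures} are absolutely continuous with respect to the Hilbert--Schmidt measure.
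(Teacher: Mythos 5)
Your proposal is correct and follows essentially the same route as the paper's proof (of the more detailed Proposition~\ref{prop:relation_measures_detailed}): purify, push the Haar measure along the partial trace onto $\sigma_A$, and identify the conditional measure on each full-rank fiber with the Haar-induced measure on channels via unitary invariance and uniqueness of the invariant measure --- your $(\sigma_A,V)$ Stinespring parametrization is exactly the paper's identification of the fiber $\tr_{BA'B'}^{-1}(y)$ with $\sP\sC$ via $\phi\mapsto d_A^{-1}y^{-1/2}\phi\, y^{-1/2}$. The rigor gap you flag in the conditional-law step is precisely what the paper closes by invoking the disintegration theorem and the almost-everywhere uniqueness of the disintegration family.
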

We remark that intuitively this result is clear: the probability measure $\diff\sigma_{AB}$ can be ``conditioned'' on different values of $y=\tr_B(\sigma_{AB})$ giving rise to conditional probability measures $\diff\nu_y(\sigma_{AB})$ and these are recognised as $\diff \nu(\Lambda)$ by unitary invariance. However, the fact that these events which we are conditioning on has measure zero under $\diff\sigma_{AB}$ makes the proof more complicated.

Proposition~\ref{prop:relation_measures} tells us that integrating over all bipartite states according to the measure $\diff\sigma_{AB}$ can be done by separately integrating over all possible input states $\sigma_A$ and over all possible channels $\Lambda_{A\to B}$, by combining them as $\sigma_A^{1/2}\,\Lambda_{AB}\,\sigma_A^{1/2}$ where $\Lambda_{AB} = J(\Lambda_{A\to B})$. Equivalently, this can be done by separately integrating over all possible \emph{transposed} input states $\sigma_A$ and over all possible channels $\Lambda_{A\to B}$ as in~\eqref{eq:relation_measures}.  We can use this intuition to relate the two methods presented above.

\begin{figure} 
  \includegraphics[width=0.4\columnwidth]{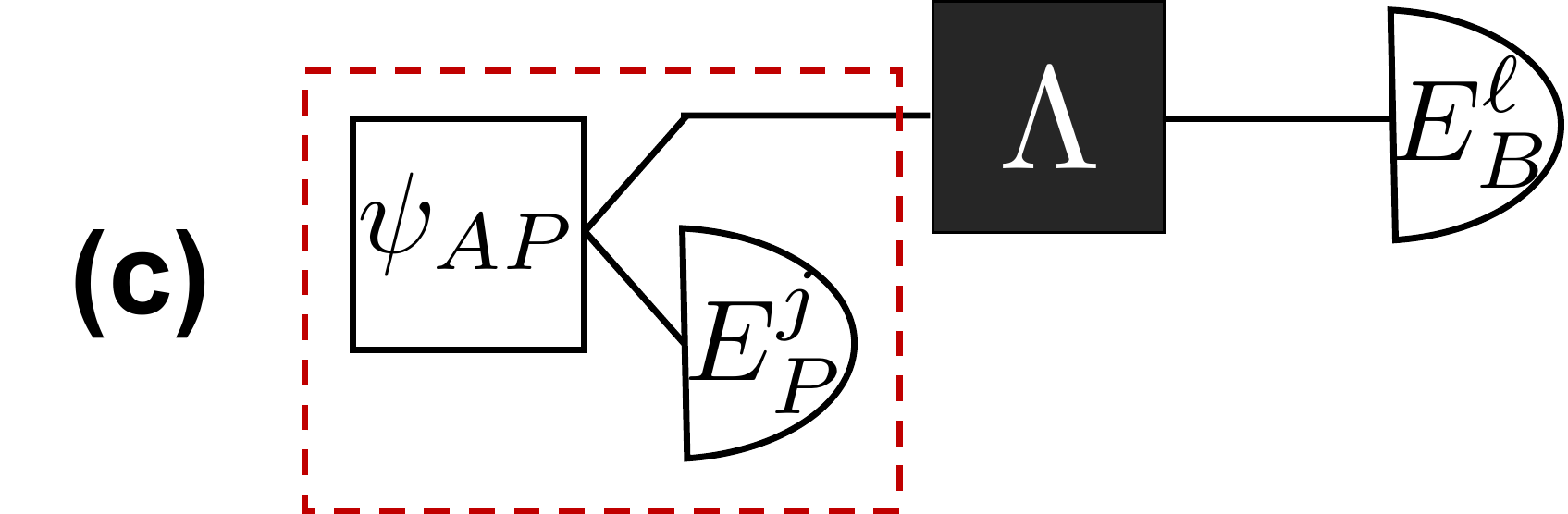}
  \caption{\label{fig:setups_relationship}An intermediate tomographic scheme. Scenario (c) comes from restricting $E^{\ell}_k$ acting on $BP$ of \autoref{fig:tomo}\textbf{(b)} to be a tensor product measurement. The measurement $E_P^j$ on half of an entangled state in (c) can be seen as a probabilistic state preparation similar to \autoref{fig:tomo}\textbf{(a)}.}
\end{figure}

In order to connect both quantities, we consider the situation depicted in
  \autoref{fig:setups_relationship}.  Assume that for each
  repetition $j=1\dots n$ the input $\rho_A^j$ is chosen by a measurement on the pure state
  $\ket\psi_{AP} = \sigma_A^{1/2}\,\ket{\hat\Phi}$ for some given state
  $\sigma_A$, and the outcome POVM effect $E_P^j$ was observed. The measurement on the output state of the channel is chosen from some collection of measurements acting only on system $B$ only. Assuming that the outcome POVM effect $E^j_B$ was observed, the dataset $E$ consists of the pairs $(E^j_P,E^j_B)$ for all $n$ repetitions.

 Viewing this scenario as an ancilla-assisted scheme (by moving the measurement on $P$ to the end), we can employ the biparite-state sampling method and
  calculate $\mu(v)$ by integrating our test function $\delta(f(\rho_{AB})-v)$
  over the \emph{full biparite-state space} according
  to~\eqref{eq:method-naive-mu-of-f} and~\eqref{eq:def-mu-E-sigmaAB}:
  \begin{align}
    \mu(v) &= c_E^{-1}\int \diff\sigma_{AB}\,\mathcal{L}_1(\sigma_{AB}|E)\,\delta(f(\rho_{AB})-v)\ ,
    \label{eq:comparison-mu-v}
  \end{align}
  where
  \begin{align}
    \mathcal{L}_1(\sigma_{AB}|E)= \tr(\sigma_{AB}^{\otimes n}\,E) \,,
    \label{eq:comparison-mu-E-sigmaAB}
  \end{align}
for $E = \otimes_{j=1}^n E^j_P\otimes E^j_B$.

  On the other hand, we can also view this as a prepare and measure scheme and use the channel space method to compute, the histogram by~\eqref{eq:def-nu-E-channels}
  and~\eqref{eq:channel-space-h-of-v} as an integration over the \emph{space of all
  channels} only,
  \begin{align}
    h(v) = c'^{-1}_E\int \diff\nu(\Lambda_{A\to B})\mathcal{L}_2(\Lambda|E)\delta(f_\mathrm{channel}(\Lambda)-v)\ ,
    \label{eq:comparison-h-v}
  \end{align}
  where
  \begin{align}
  \mathcal{L}_2(\Lambda|E) = \prod_{j=1}^n \tr(\Lambda_{A\to B}(\rho_A^j)\,E_B^j)\,.
  \end{align}
  We may rewrite each factor term using the Choi-Jamiolkowski state of the channel as
  \begin{align}
  \tr(\Lambda_{A\to{}B}(\rho_A^j)\,E_B^j) =  \tr(\sigma_P^{1/2}\,\Lambda_{PB}\,\sigma_P^{1/2}
  \,(E_B^j\otimes E_P^j))
  \end{align}
   (where $\sigma_P = \sigma_A^T$) and thus
  \begin{align}
    \mathcal{L}_2(\Lambda|E;\sigma_A) = \tr( (\sigma_P^{1/2}\,\Lambda_{PB}\,\sigma_P^{1/2})^{\otimes n}\,E)\ ,
    \label{eq:comparison-nu-E-sigmaP-Lambda_PB}
  \end{align}
  now defining the same operator $E= \otimes_{j=1}^n E^j_P\otimes E^j_B$ as before and where $\sigma_A$ is fixed.

  The similarity of~\eqref{eq:comparison-mu-v}
  and~\eqref{eq:comparison-mu-E-sigmaAB} with~\eqref{eq:comparison-h-v}
  and~\eqref{eq:comparison-nu-E-sigmaP-Lambda_PB} is now more evident.  It is
  worth giving a precise interpretation to both $\mathcal{L}_1(\sigma_{AB}|E)$ and
  $\mathcal{L}_2(\Lambda|E;\sigma_A)$.  The function $\mathcal{L}_1(\sigma_{AB}|E)$ is a
  probability density on the biparite-state space with respect to
  $\diff\sigma_{AB}$, describing the Bayesian posterior distribution after observing
  data $E$ for an agent using the uniform prior $\diff\sigma_{AB}$ (and thus
  ignoring any prior information about what the input state actually is).  On
  the other hand, $\mathcal{L}_2(\Lambda|E;\sigma_A)$ is the posterior
  distribution in the space of all channels, after observing data $E$ for an
  agent which is using the prior $\diff\nu(\Lambda_{A\to B})$.  Yet,
  \autoref{prop:relation_measures} tells us that the prior $\diff\nu(\Lambda_{A\to B})$
  is precisely the same as the prior in the biparite-state space corresponding
  to knowing with certainty that the input state is exactly $\sigma_A$.  Indeed,
  $\diff\nu(\Lambda_{A\to B})$ is precisely the measure induced by
  $d\sigma'_{AB}\delta(\tr_B(\sigma'_{AB})-\sigma_A)$ on
  $\Lambda_{A\to B} = J^{-1}(\sigma_A'^{-1/2}\sigma'_{AB}\sigma_A'^{-1/2})$, where $\delta(\tr_B(\sigma'_{AB})-\sigma_A)$ is a Dirac delta at the point $\sigma_{A}$. That is, with the shorthand $\sigma'_A=\tr_B(\sigma'_{AB})$, we may rewrite~\eqref{eq:comparison-h-v} as
  \begin{align}
    h(v) = c'^{-1}_E\int d\sigma'_{AB} \delta(\sigma'_{A} - \sigma_A)\int\diff\nu(\Lambda_{A\to B}) \cdot \mathcal{L}_2(\Lambda|E;\sigma'_A)\delta(f_\mathrm{channel}(\Lambda_{A\to B}) - v)\ .
    \label{eq:comparison-h-v-rewritten}
  \end{align}

  Hence, the difference between the bipartite sampling method and the
  channel-space method, at least in the current scenario, is exactly the prior information about the input state.
  In the former, nothing is assumed about the input state other than what can be
  inferred directly from the measurement data; in the latter, the exact input
  state is assumed with certainty as represented by the first Dirac delta
  function in~\eqref{eq:comparison-h-v-rewritten}.\\
  
Finally, we will prove the following result, which is easily seen to imply the Proposition~\ref{prop:relation_measures}.
\begin{proposition}
\label{prop:relation_measures_detailed}
There exists an essentially unique family of probability measures $\diff\nu_y(\sigma_{AB})$ on $\rmD(\cH_{AB})$ indexed by full rank $y\in\rmD(\cH_A)$ such that
\begin{align}
\int \diff\sigma_{AB} g(\sigma_{AB}) = \int \diff\sigma_A(y)\int_{\tr_B^{-1}(y)}\diff\nu_y(\sigma_{AB})g(\sigma_{AB})\,,
\end{align}
where $\diff\sigma_A(y)$ is the reduced measure of $\diff\sigma_{AB}$ and $\tr_B^{-1}(y)$ denotes the preimage of $y$ under partial tracing $B$. Moreover, each member $\diff\nu_y(\sigma_{AB})$ of the family is supported on $\tr_B^{-1}(y)$ and actually isomorphic to $\diff \nu(\Psi)$ on $\ChannelH$. These isomorphisms are given by
\begin{align}
J_y^{-1}:\tr_B^{-1}(y)&\rightarrow \ChannelH \\
\sigma_{AB} &\mapsto J^{-1}\left(d_A^{-1}y^{-1/2}\sigma_{AB}y^{-1/2}\right)\,.
\end{align}
\end{proposition}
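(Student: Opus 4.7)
The plan is to establish the proposition in two stages: first derive existence and essential uniqueness of the family $\{\diff\nu_y\}$ from a general measure-theoretic principle, then identify each member with the pushforward of $\diff\nu(\Lambda)$ along $J_y$ using Haar invariance.

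For existence and uniqueness, I would invoke the disintegration (Rohlin) theorem. The spaces $\rmD(\cH_{AB})$ and $\rmD(\cH_A)$ are compact Polish, the map $\tr_B$ is continuous, and by construction $\diff\sigma_A$ is the pushforward of $\diff\sigma_{AB}$ under $\tr_B$. The theorem then supplies a $\diff\sigma_A$-almost-everywhere unique family of probability measures $\diff\nu_y$ concentrated on the fibers $\tr_B^{-1}(y)$ such that the stated iterated integral holds. Since rank-deficient $y$ form a $\diff\sigma_A$-null set, we may restrict to full-rank $y$ at no cost.

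For the identification, the key ingredient is a parametrization of purifications compatible with the Choi construction. For any full-rank $y$, every purification $\ket\phi\in\cH_{ABA'B'}$ of $y$ can be written as $\ket\phi = d_A^{1/2}\,(y^{1/2}_A\otimes U_{BA'B'})\ket{\Psi_0}$ for some unitary $U_{BA'B'}$, where $\ket{\Psi_0}$ is the reference vector of~\eqref{eq:fixedpurifiedChoi}. A short computation, using that $U$ leaves $\tr_{BA'B'}(\ketbra{\Psi_0}) = d_A^{-1}\id_A$ invariant, confirms the $A$-marginal is $y$, while tracing out only $A'B'$ gives $\sigma_{AB}=d_A(y^{1/2}\otimes\id_B)\,\Lambda_{AB}(U)\,(y^{1/2}\otimes\id_B)$ with $\Lambda_{AB}(U)=\tr_{A'B'}((\id_A\otimes U)\ketbra{\Psi_0}(\id_A\otimes U^\dagger))$ exactly the Choi state whose distribution under Haar $U$ is $\diff\nu(\Lambda)$ by definition in Appendix~\ref{sec:measures}. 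Inverting this relation yields precisely the map $J_y^{-1}$ stated in the proposition, which is a smooth bijection between $\ChannelH$ and the full-rank part of $\tr_B^{-1}(y)$ up to the harmless stabilizer redundancy of $\ket{\Psi_0}$ which does not affect pushforward measures.

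The main obstacle is making the ``conditional on $\sigma_A=y$'' step fully rigorous, since $\{\sigma_A=y\}$ is a measure-zero event. I would sidestep pointwise conditioning by checking the full integral identity directly on continuous test functions $g$: decompose the Haar measure on $\bU(\cH_{ABA'B'})$ applied to a fixed purification along the orbits of the $\id_A\otimes V_{BA'B'}$-action, whose orbits coincide with the level sets of $\tr_B\circ\tr_{A'B'}$, and recognize the corresponding orbit integrals as Haar integrals on $\bU(\cH_{BA'B'})$ that produce $\diff\nu(\Lambda)$ after tracing out $A'B'$. Equality of the two sides on all test functions, combined with the uniqueness clause of the disintegration theorem, then forces $J_y^{-1}(\diff\nu_y)=\diff\nu(\Lambda)$ for $\diff\sigma_A$-almost every full-rank $y$, completing the claim.
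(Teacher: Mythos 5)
Your proposal is correct and follows essentially the same route as the paper: the paper likewise invokes the disintegration theorem for the conditional measures, identifies each fiber $\tr_B^{-1}(y)$ (lifted to the purified picture) with the purified Choi states via conjugation by $d_A^{-1/2}y^{\mp 1/2}$, and then pins down the fiber measures as $\diff\nu(\Psi)$ using invariance under the $\id_A\otimes U_{BA'B'}$ action together with uniqueness of the Haar measure and of the disintegration. Your explicit parametrization $\ket\phi = d_A^{1/2}(y^{1/2}\otimes U_{BA'B'})\ket{\Psi_0}$ is just the inverse of the paper's fiber isomorphism, so the two arguments coincide in substance.
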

\begin{proof}
Again, it will be convenient to work in the purified picture. By definition, $\diff\sigma_{AB}$ originates from the uniform spherical measure $\diff\ket{\phi}_{ABA'B'}$ induced by the Haar measure $\diff U_{ABA'B'}$ by the relation $\ket{\phi}_{ABA'B'}=U_{ABA'B'}\ket{\Psi_0}$. On the other hand, $\diff \nu(\Psi)$ comes from the Haar measure $\diff U_{BA'B'}$ via the relation $\ket{\Psi}=U_{BA'B'}\ket{\Psi_0}$.

Consider the partial trace $\tr_{BA'B'}:\End(\cH_{ABA'B'})\rightarrow\End(\cH_A)$. Two things happen under this mapping.

First, the measure $\diff\ket{\phi}_{ABA'B'}$ admits a pushforward along $\tr_{BA'B'}$ denoted as $\diff\sigma_A(y)$ living on space $\rmD(\cH_A)$. Note that this measure $\diff\sigma_A(y)$ no longer coincides with the Haar induced (or Hilbert-Schmidt induced) uniform measure on $\rmD(\cH_A)$ (since such measure arises uniquely from the Haar measure $\diff U_{AA'}$ acting on $\cH_{AA'}$).

Second, the space $\End(\cH_{ABA'B'})$ is partitioned into fibers $\tr_{BA'B'}^{-1}(y)$ over $y\in\End(\cH_A)$. Observe that one of such fibers corresponds to the set of purified Choi states $\sP\sC$: take $y=\id/d_A$. Moreover, if $y\in\rmD(\cH_A)$ is full rank, then the fiber over $y$ is isomorphic to $\sP\sC$. Indeed, the bijection is given by
\begin{align}
J_y^{-1}:\tr_{BA'B'}^{-1}(y)&\rightarrow \sP\sC \\
\phi_{ABA'B'} &\mapsto d_A^{-1}y^{-1/2}\phi_{ABA'B'}y^{-1/2}\,.
\end{align}
Note that partial tracing out $A'B'$ gives Choi-Jamiolkowski isomorphisms identifying $\tr_B^{-1}(y)\subseteq\rmD(\cH_{AB})$ with the space of all quantum processs: 
\begin{align}
J_y^{-1}:\tr_B^{-1}(y)&\rightarrow \ChannelH \\
\sigma_{AB} &\mapsto J^{-1}\left(d_A^{-1}y^{-1/2}\sigma_{AB}y^{-1/2}\right)\,,
\end{align}
where $J^{-1}$ is the standard Choi-Jamiolkowski isomophism identifying $\ChoiH$ with $\ChannelH$. We stress again that these are isomorphisms only for full rank $y\in\rmD(\cH_A)$.

The probability measure $\diff\phi_{ABA'B'}$ then disintegrates~\cite{Disintegration} into a family of conditional probability measures $\diff\nu_y(\phi_{ABA'B'})$ on each fiber (or preimage over $y$) $\tr_{BA'B'}^{-1}(y)$ such that
\begin{align}
\int \diff \phi_{ABA'B'} g(\phi_{ABA'B'}) = \int \diff\sigma_A(y)\int_{\tr_{BA'B'}^{-1}(y)}\diff\nu_y(\phi_{ABA'B'})g(\phi_{ABA'B'})
\end{align}
for all functions $g(\phi_{ABA'B'})$. Moreover, the family $\{\diff\nu_y(\phi_{ABA'B'}):y\in\rmD(\cH_{A})\}$ is $\diff\sigma_A(y)$-almost everywhere unique and each member $\diff\nu_y(\phi_{ABA'B'})$ is supported on $\tr_{BA'B'}^{-1}(y)$.

Without loss of generality, we only pay attention to full rank $y\in\rmD(\cH_A)$ because the set of rank-deficient density matrices $y$ has measure zero under $\diff\phi_{ABA'B'}$. Here, each fiber $\tr_{BA'B'}^{-1}(y)$ has been identified with the space $\sP\sC$. Under this identification, we will show that $\diff\nu_y(\phi_{ABA'B'})$ is almost everywhere equivalent to with the uniform measure on channel space $\diff \nu(\Psi)$. This follows from unitary invariance of $\diff\nu_y(\phi_{ABA'B'})$ and the uniqueness of the Haar measure $\diff U_{BA'B'}$. Specifically, since $\diff (U_{BA'B'}^\dagger\phi_{ABA'B'}U_{BA'B'}) = \diff \phi_{ABA'B'}$ for all $U_{BA'B'}$ we have by change of variables
\begin{align}
\int \diff \phi_{ABA'B'} g(\phi_{ABA'B'}) &= \int \diff (U_{BA'B'}^\dagger\phi_{ABA'B'}U_{BA'B'}) g(\phi_{ABA'B'}) \\
&= \int \diff \phi_{ABA'B'} g(U_{BA'B'}\phi_{ABA'B'}U_{BA'B'}^\dagger) \\
&= \int \diff\sigma_A(y)\int_{\tr_{BA'B'}^{-1}(y)}\diff\nu_y(\phi_{ABA'B'}) g(U_{BA'B'}\phi_{ABA'B'}U_{BA'B'}^\dagger) \\
&= \int \diff\sigma_A(y)\int_{\tr_{BA'B'}^{-1}(y)}\diff\nu_y(U_{BA'B'}^\dagger\phi_{ABA'B'}U_{BA'B'}) g(\phi_{ABA'B'}) \\
\end{align}
where the last equality follows from the fact that the fiber $\tr_{BA'B'}^{-1}(y)$ is invariant under all $U_{BA'B'}$. By uniqueness of the family, we have
\begin{align}
\diff\nu_y(U_{BA'B'}^\dagger\phi_{ABA'B'}U_{BA'B'}) = \diff\nu_y(\phi_{ABA'B'}) \textrm{ for all } U_{BA'B'}\,.
\end{align}
This says that each member $\diff\nu_y(\phi_{ABA'B'})$ of the disintegration family is unitary invariant. Due to the uniqueness of the normalized Haar measure we conclude $\diff\nu_y(\phi_{ABA'B'}) = \diff\nu(\Psi)$. In fact, we obtain correspondences between the objects
\begin{align}
\tr_{BA'B'}^{-1}(y) &\leftrightarrow \sP\sC \\
\diff\nu_y(\phi_{ABA'B'}) &\leftrightarrow \diff\nu(\Psi)
\end{align}
induced by $J_y^{-1}$.

Taking partial trace of system $A'B'$ yields the statement of the proposition and completes the proof.
\end{proof}

\bibliography{ChannelRegions,Tomography}

\begin{thebibliography}{44}%
\makeatletter
\providecommand \@ifxundefined [1]{%
 \@ifx{#1\undefined}
}%
\providecommand \@ifnum [1]{%
 \ifnum #1\expandafter \@firstoftwo
 \else \expandafter \@secondoftwo
 \fi
}%
\providecommand \@ifx [1]{%
 \ifx #1\expandafter \@firstoftwo
 \else \expandafter \@secondoftwo
 \fi
}%
\providecommand \natexlab [1]{#1}%
\providecommand \enquote  [1]{``#1''}%
\providecommand \bibnamefont  [1]{#1}%
\providecommand \bibfnamefont [1]{#1}%
\providecommand \citenamefont [1]{#1}%
\providecommand \href@noop [0]{\@secondoftwo}%
\providecommand \href [0]{\begingroup \@sanitize@url \@href}%
\providecommand \@href[1]{\@@startlink{#1}\@@href}%
\providecommand \@@href[1]{\endgroup#1\@@endlink}%
\providecommand \@sanitize@url [0]{\catcode `\\12\catcode `\$12\catcode
  `\&12\catcode `\#12\catcode `\^12\catcode `\_12\catcode `\%12\relax}%
\providecommand \@@startlink[1]{}%
\providecommand \@@endlink[0]{}%
\providecommand \url  [0]{\begingroup\@sanitize@url \@url }%
\providecommand \@url [1]{\endgroup\@href {#1}{\urlprefix }}%
\providecommand \urlprefix  [0]{URL }%
\providecommand \Eprint [0]{\href }%
\providecommand \doibase [0]{http://dx.doi.org/}%
\providecommand \selectlanguage [0]{\@gobble}%
\providecommand \bibinfo  [0]{\@secondoftwo}%
\providecommand \bibfield  [0]{\@secondoftwo}%
\providecommand \translation [1]{[#1]}%
\providecommand \BibitemOpen [0]{}%
\providecommand \bibitemStop [0]{}%
\providecommand \bibitemNoStop [0]{.\EOS\space}%
\providecommand \EOS [0]{\spacefactor3000\relax}%
\providecommand \BibitemShut  [1]{\csname bibitem#1\endcsname}%
\let\auto@bib@innerbib\@empty
\bibitem [{\citenamefont {Helstrom}(1969)}]{Helstrom1969}%
  \BibitemOpen
  \bibfield  {author} {\bibinfo {author} {\bibfnamefont {C.~W.}\ \bibnamefont
  {Helstrom}},\ }\href {\doibase 10.1007/BF01007479} {\bibfield  {journal}
  {\bibinfo  {journal} {Journal of Statistical Physics}\ }\textbf {\bibinfo
  {volume} {1}},\ \bibinfo {pages} {231} (\bibinfo {year} {1969})}\BibitemShut
  {NoStop}%
\bibitem [{\citenamefont {Paris}\ and\ \citenamefont
  {Rehacek}(2010)}]{Paris_book}%
  \BibitemOpen
  \bibfield  {author} {\bibinfo {author} {\bibfnamefont {M.}~\bibnamefont
  {Paris}}\ and\ \bibinfo {author} {\bibfnamefont {J.}~\bibnamefont
  {Rehacek}},\ }\href@noop {} {\emph {\bibinfo {title} {Quantum State
  Estimation}}},\ \bibinfo {edition} {1st}\ ed.\ (\bibinfo  {publisher}
  {Springer Publishing Company, Incorporated},\ \bibinfo {year}
  {2010})\BibitemShut {NoStop}%
\bibitem [{\citenamefont {Hradil}(1997)}]{Hradil97}%
  \BibitemOpen
  \bibfield  {author} {\bibinfo {author} {\bibfnamefont {Z.}~\bibnamefont
  {Hradil}},\ }\href {\doibase 10.1103/PhysRevA.55.R1561} {\bibfield  {journal}
  {\bibinfo  {journal} {Phys. Rev. A}\ }\textbf {\bibinfo {volume} {55}},\
  \bibinfo {pages} {R1561} (\bibinfo {year} {1997})}\BibitemShut {NoStop}%
\bibitem [{\citenamefont {{Scholten}}\ and\ \citenamefont
  {{Blume-Kohout}}(2016)}]{ML16}%
  \BibitemOpen
  \bibfield  {author} {\bibinfo {author} {\bibfnamefont {T.~L.}\ \bibnamefont
  {{Scholten}}}\ and\ \bibinfo {author} {\bibfnamefont {R.}~\bibnamefont
  {{Blume-Kohout}}},\ }\href@noop {} {\bibfield  {journal} {\bibinfo  {journal}
  {ArXiv e-prints}\ } (\bibinfo {year} {2016})},\ \Eprint
  {http://arxiv.org/abs/1609.04385} {arXiv:1609.04385 [quant-ph]} \BibitemShut
  {NoStop}%
\bibitem [{\citenamefont {Shang}\ \emph {et~al.}(2017)\citenamefont {Shang},
  \citenamefont {Zhang},\ and\ \citenamefont {Ng}}]{ML17}%
  \BibitemOpen
  \bibfield  {author} {\bibinfo {author} {\bibfnamefont {J.}~\bibnamefont
  {Shang}}, \bibinfo {author} {\bibfnamefont {Z.}~\bibnamefont {Zhang}}, \ and\
  \bibinfo {author} {\bibfnamefont {H.~K.}\ \bibnamefont {Ng}},\ }\href
  {\doibase 10.1103/PhysRevA.95.062336} {\bibfield  {journal} {\bibinfo
  {journal} {Phys. Rev. A}\ }\textbf {\bibinfo {volume} {95}},\ \bibinfo
  {pages} {062336} (\bibinfo {year} {2017})}\BibitemShut {NoStop}%
\bibitem [{\citenamefont {Christandl}\ and\ \citenamefont
  {Renner}(2012)}]{CR12}%
  \BibitemOpen
  \bibfield  {author} {\bibinfo {author} {\bibfnamefont {M.}~\bibnamefont
  {Christandl}}\ and\ \bibinfo {author} {\bibfnamefont {R.}~\bibnamefont
  {Renner}},\ }\href {\doibase 10.1103/PhysRevLett.109.120403} {\bibfield
  {journal} {\bibinfo  {journal} {Phys. Rev. Lett.}\ }\textbf {\bibinfo
  {volume} {109}},\ \bibinfo {pages} {120403} (\bibinfo {year}
  {2012})}\BibitemShut {NoStop}%
\bibitem [{\citenamefont {Faist}\ and\ \citenamefont {Renner}(2016)}]{FR16}%
  \BibitemOpen
  \bibfield  {author} {\bibinfo {author} {\bibfnamefont {P.}~\bibnamefont
  {Faist}}\ and\ \bibinfo {author} {\bibfnamefont {R.}~\bibnamefont {Renner}},\
  }\href {\doibase 10.1103/PhysRevLett.117.010404} {\bibfield  {journal}
  {\bibinfo  {journal} {Phys. Rev. Lett.}\ }\textbf {\bibinfo {volume} {117}},\
  \bibinfo {pages} {010404} (\bibinfo {year} {2016})}\BibitemShut {NoStop}%
\bibitem [{\citenamefont {{Blume-Kohout}}(2012)}]{BK12_regions}%
  \BibitemOpen
  \bibfield  {author} {\bibinfo {author} {\bibfnamefont {R.}~\bibnamefont
  {{Blume-Kohout}}},\ }\href@noop {} {\bibfield  {journal} {\bibinfo  {journal}
  {ArXiv e-prints}\ } (\bibinfo {year} {2012})},\ \Eprint
  {http://arxiv.org/abs/1202.5270} {arXiv:1202.5270 [quant-ph]} \BibitemShut
  {NoStop}%
\bibitem [{\citenamefont {Shang}\ \emph {et~al.}(2013)\citenamefont {Shang},
  \citenamefont {Ng}, \citenamefont {Sehrawat}, \citenamefont {Li},\ and\
  \citenamefont {Englert}}]{Bayesian_regions1}%
  \BibitemOpen
  \bibfield  {author} {\bibinfo {author} {\bibfnamefont {J.}~\bibnamefont
  {Shang}}, \bibinfo {author} {\bibfnamefont {H.~K.}\ \bibnamefont {Ng}},
  \bibinfo {author} {\bibfnamefont {A.}~\bibnamefont {Sehrawat}}, \bibinfo
  {author} {\bibfnamefont {X.}~\bibnamefont {Li}}, \ and\ \bibinfo {author}
  {\bibfnamefont {B.-G.}\ \bibnamefont {Englert}},\ }\href
  {http://stacks.iop.org/1367-2630/15/i=12/a=123026} {\bibfield  {journal}
  {\bibinfo  {journal} {New Journal of Physics}\ }\textbf {\bibinfo {volume}
  {15}},\ \bibinfo {pages} {123026} (\bibinfo {year} {2013})}\BibitemShut
  {NoStop}%
\bibitem [{\citenamefont {Li}\ \emph {et~al.}(2016)\citenamefont {Li},
  \citenamefont {Shang}, \citenamefont {Ng},\ and\ \citenamefont
  {Englert}}]{Bayesian_regions2}%
  \BibitemOpen
  \bibfield  {author} {\bibinfo {author} {\bibfnamefont {X.}~\bibnamefont
  {Li}}, \bibinfo {author} {\bibfnamefont {J.}~\bibnamefont {Shang}}, \bibinfo
  {author} {\bibfnamefont {H.~K.}\ \bibnamefont {Ng}}, \ and\ \bibinfo {author}
  {\bibfnamefont {B.-G.}\ \bibnamefont {Englert}},\ }\href {\doibase
  10.1103/PhysRevA.94.062112} {\bibfield  {journal} {\bibinfo  {journal} {Phys.
  Rev. A}\ }\textbf {\bibinfo {volume} {94}},\ \bibinfo {pages} {062112}
  (\bibinfo {year} {2016})}\BibitemShut {NoStop}%
\bibitem [{\citenamefont {Mohseni}\ \emph {et~al.}(2008)\citenamefont
  {Mohseni}, \citenamefont {Rezakhani},\ and\ \citenamefont
  {Lidar}}]{processtomo}%
  \BibitemOpen
  \bibfield  {author} {\bibinfo {author} {\bibfnamefont {M.}~\bibnamefont
  {Mohseni}}, \bibinfo {author} {\bibfnamefont {A.~T.}\ \bibnamefont
  {Rezakhani}}, \ and\ \bibinfo {author} {\bibfnamefont {D.~A.}\ \bibnamefont
  {Lidar}},\ }\href {\doibase 10.1103/PhysRevA.77.032322} {\bibfield  {journal}
  {\bibinfo  {journal} {Phys. Rev. A}\ }\textbf {\bibinfo {volume} {77}},\
  \bibinfo {pages} {032322} (\bibinfo {year} {2008})}\BibitemShut {NoStop}%
\bibitem [{\citenamefont {Chuang}\ and\ \citenamefont
  {Nielsen}(1997)}]{Chuang1997JMO}%
  \BibitemOpen
  \bibfield  {author} {\bibinfo {author} {\bibfnamefont {I.~L.}\ \bibnamefont
  {Chuang}}\ and\ \bibinfo {author} {\bibfnamefont {M.~A.}\ \bibnamefont
  {Nielsen}},\ }\href {\doibase 10.1080/09500349708231894} {\bibfield
  {journal} {\bibinfo  {journal} {Journal of Modern Optics}\ }\textbf {\bibinfo
  {volume} {44}},\ \bibinfo {pages} {2455} (\bibinfo {year}
  {1997})}\BibitemShut {NoStop}%
\bibitem [{\citenamefont {Knill}\ \emph {et~al.}(2008)\citenamefont {Knill},
  \citenamefont {Leibfried}, \citenamefont {Reichle}, \citenamefont {Britton},
  \citenamefont {Blakestad}, \citenamefont {Jost}, \citenamefont {Langer},
  \citenamefont {Ozeri}, \citenamefont {Seidelin},\ and\ \citenamefont
  {Wineland}}]{Knill2008PRA_randbench}%
  \BibitemOpen
  \bibfield  {author} {\bibinfo {author} {\bibfnamefont {E.}~\bibnamefont
  {Knill}}, \bibinfo {author} {\bibfnamefont {D.}~\bibnamefont {Leibfried}},
  \bibinfo {author} {\bibfnamefont {R.}~\bibnamefont {Reichle}}, \bibinfo
  {author} {\bibfnamefont {J.}~\bibnamefont {Britton}}, \bibinfo {author}
  {\bibfnamefont {R.~B.}\ \bibnamefont {Blakestad}}, \bibinfo {author}
  {\bibfnamefont {J.~D.}\ \bibnamefont {Jost}}, \bibinfo {author}
  {\bibfnamefont {C.}~\bibnamefont {Langer}}, \bibinfo {author} {\bibfnamefont
  {R.}~\bibnamefont {Ozeri}}, \bibinfo {author} {\bibfnamefont
  {S.}~\bibnamefont {Seidelin}}, \ and\ \bibinfo {author} {\bibfnamefont
  {D.~J.}\ \bibnamefont {Wineland}},\ }\href {\doibase
  10.1103/PhysRevA.77.012307} {\bibfield  {journal} {\bibinfo  {journal}
  {Physical Review A}\ }\textbf {\bibinfo {volume} {77}},\ \bibinfo {pages}
  {012307} (\bibinfo {year} {2008})},\ \Eprint {http://arxiv.org/abs/0707.0963}
  {arXiv:0707.0963} \BibitemShut {NoStop}%
\bibitem [{\citenamefont {Chow}\ \emph {et~al.}(2009)\citenamefont {Chow},
  \citenamefont {Gambetta}, \citenamefont {Tornberg}, \citenamefont {Koch},
  \citenamefont {Bishop}, \citenamefont {Houck}, \citenamefont {Johnson},
  \citenamefont {Frunzio}, \citenamefont {Girvin},\ and\ \citenamefont
  {Schoelkopf}}]{Chow2009}%
  \BibitemOpen
  \bibfield  {author} {\bibinfo {author} {\bibfnamefont {J.~M.}\ \bibnamefont
  {Chow}}, \bibinfo {author} {\bibfnamefont {J.~M.}\ \bibnamefont {Gambetta}},
  \bibinfo {author} {\bibfnamefont {L.}~\bibnamefont {Tornberg}}, \bibinfo
  {author} {\bibfnamefont {J.}~\bibnamefont {Koch}}, \bibinfo {author}
  {\bibfnamefont {L.~S.}\ \bibnamefont {Bishop}}, \bibinfo {author}
  {\bibfnamefont {A.~A.}\ \bibnamefont {Houck}}, \bibinfo {author}
  {\bibfnamefont {B.~R.}\ \bibnamefont {Johnson}}, \bibinfo {author}
  {\bibfnamefont {L.}~\bibnamefont {Frunzio}}, \bibinfo {author} {\bibfnamefont
  {S.~M.}\ \bibnamefont {Girvin}}, \ and\ \bibinfo {author} {\bibfnamefont
  {R.~J.}\ \bibnamefont {Schoelkopf}},\ }\href {\doibase
  10.1103/PhysRevLett.102.090502} {\bibfield  {journal} {\bibinfo  {journal}
  {Phys. Rev. Lett.}\ }\textbf {\bibinfo {volume} {102}},\ \bibinfo {pages}
  {090502} (\bibinfo {year} {2009})}\BibitemShut {NoStop}%
\bibitem [{\citenamefont {Kimmel}\ \emph {et~al.}(2014)\citenamefont {Kimmel},
  \citenamefont {da~Silva}, \citenamefont {Ryan}, \citenamefont {Johnson},\
  and\ \citenamefont {Ohki}}]{Kimmel2014}%
  \BibitemOpen
  \bibfield  {author} {\bibinfo {author} {\bibfnamefont {S.}~\bibnamefont
  {Kimmel}}, \bibinfo {author} {\bibfnamefont {M.~P.}\ \bibnamefont
  {da~Silva}}, \bibinfo {author} {\bibfnamefont {C.~A.}\ \bibnamefont {Ryan}},
  \bibinfo {author} {\bibfnamefont {B.~R.}\ \bibnamefont {Johnson}}, \ and\
  \bibinfo {author} {\bibfnamefont {T.}~\bibnamefont {Ohki}},\ }\href {\doibase
  10.1103/PhysRevX.4.011050} {\bibfield  {journal} {\bibinfo  {journal} {Phys.
  Rev. X}\ }\textbf {\bibinfo {volume} {4}},\ \bibinfo {pages} {011050}
  (\bibinfo {year} {2014})},\ \Eprint {http://arxiv.org/abs/1306.2348}
  {arXiv:1306.2348} \BibitemShut {NoStop}%
\bibitem [{\citenamefont {{Helsen}}\ \emph {et~al.}(2017)\citenamefont
  {{Helsen}}, \citenamefont {{Wallman}}, \citenamefont {{Flammia}},\ and\
  \citenamefont {{Wehner}}}]{Jonas}%
  \BibitemOpen
  \bibfield  {author} {\bibinfo {author} {\bibfnamefont {J.}~\bibnamefont
  {{Helsen}}}, \bibinfo {author} {\bibfnamefont {J.~J.}\ \bibnamefont
  {{Wallman}}}, \bibinfo {author} {\bibfnamefont {S.~T.}\ \bibnamefont
  {{Flammia}}}, \ and\ \bibinfo {author} {\bibfnamefont {S.}~\bibnamefont
  {{Wehner}}},\ }\href@noop {} {\bibfield  {journal} {\bibinfo  {journal}
  {ArXiv e-prints}\ } (\bibinfo {year} {2017})},\ \Eprint
  {http://arxiv.org/abs/1701.04299} {arXiv:1701.04299 [quant-ph]} \BibitemShut
  {NoStop}%
\bibitem [{\citenamefont {Blume-Kohout}\ \emph {et~al.}(2017)\citenamefont
  {Blume-Kohout}, \citenamefont {Gamble}, \citenamefont {Nielsen},
  \citenamefont {Rudinger}, \citenamefont {Mizrahi}, \citenamefont {Fortier},\
  and\ \citenamefont {Maunz}}]{BlumeKohout2017NComm_threshold}%
  \BibitemOpen
  \bibfield  {author} {\bibinfo {author} {\bibfnamefont {R.}~\bibnamefont
  {Blume-Kohout}}, \bibinfo {author} {\bibfnamefont {J.~K.}\ \bibnamefont
  {Gamble}}, \bibinfo {author} {\bibfnamefont {E.}~\bibnamefont {Nielsen}},
  \bibinfo {author} {\bibfnamefont {K.}~\bibnamefont {Rudinger}}, \bibinfo
  {author} {\bibfnamefont {J.}~\bibnamefont {Mizrahi}}, \bibinfo {author}
  {\bibfnamefont {K.}~\bibnamefont {Fortier}}, \ and\ \bibinfo {author}
  {\bibfnamefont {P.}~\bibnamefont {Maunz}},\ }\href {\doibase
  10.1038/ncomms14485} {\bibfield  {journal} {\bibinfo  {journal} {Nature
  Communications}\ }\textbf {\bibinfo {volume} {8}},\ \bibinfo {pages} {1}
  (\bibinfo {year} {2017})},\ \Eprint {http://arxiv.org/abs/1605.07674}
  {arXiv:1605.07674} \BibitemShut {NoStop}%
\bibitem [{\citenamefont {Kliesch}\ \emph {et~al.}(2017)\citenamefont
  {Kliesch}, \citenamefont {Kueng}, \citenamefont {Eisert},\ and\ \citenamefont
  {Gross}}]{Kliesch2017arXiv_processes}%
  \BibitemOpen
  \bibfield  {author} {\bibinfo {author} {\bibfnamefont {M.}~\bibnamefont
  {Kliesch}}, \bibinfo {author} {\bibfnamefont {R.}~\bibnamefont {Kueng}},
  \bibinfo {author} {\bibfnamefont {J.}~\bibnamefont {Eisert}}, \ and\ \bibinfo
  {author} {\bibfnamefont {D.}~\bibnamefont {Gross}},\ }\href
  {http://arxiv.org/abs/1701.03135} {\  (\bibinfo {year} {2017})},\ \Eprint
  {http://arxiv.org/abs/1701.03135} {arXiv:1701.03135} \BibitemShut {NoStop}%
\bibitem [{\citenamefont {{Portmann}}\ and\ \citenamefont
  {{Renner}}(2014)}]{QKDSecurity}%
  \BibitemOpen
  \bibfield  {author} {\bibinfo {author} {\bibfnamefont {C.}~\bibnamefont
  {{Portmann}}}\ and\ \bibinfo {author} {\bibfnamefont {R.}~\bibnamefont
  {{Renner}}},\ }\href@noop {} {\bibfield  {journal} {\bibinfo  {journal}
  {ArXiv e-prints}\ } (\bibinfo {year} {2014})},\ \Eprint
  {http://arxiv.org/abs/1409.3525} {arXiv:1409.3525 [quant-ph]} \BibitemShut
  {NoStop}%
\bibitem [{\citenamefont {Aharonov}\ and\ \citenamefont
  {Ben-Or}(1997)}]{FaultTolerant}%
  \BibitemOpen
  \bibfield  {author} {\bibinfo {author} {\bibfnamefont {D.}~\bibnamefont
  {Aharonov}}\ and\ \bibinfo {author} {\bibfnamefont {M.}~\bibnamefont
  {Ben-Or}},\ }in\ \href {\doibase 10.1145/258533.258579} {\emph {\bibinfo
  {booktitle} {Proceedings of the Twenty-ninth Annual ACM Symposium on Theory
  of Computing}}},\ \bibinfo {series and number} {STOC '97}\ (\bibinfo
  {publisher} {ACM},\ \bibinfo {address} {New York, NY, USA},\ \bibinfo {year}
  {1997})\ pp.\ \bibinfo {pages} {176--188}\BibitemShut {NoStop}%
\bibitem [{\citenamefont {Faist}\ \emph {et~al.}(2018)\citenamefont {Faist},
  \citenamefont {Thinh}, \citenamefont {Helsen}, \citenamefont {Elkouss},\ and\
  \citenamefont {Wehner}}]{Our-Code-Github}%
  \BibitemOpen
  \bibfield  {author} {\bibinfo {author} {\bibfnamefont {P.}~\bibnamefont
  {Faist}}, \bibinfo {author} {\bibfnamefont {L.~P.}\ \bibnamefont {Thinh}},
  \bibinfo {author} {\bibfnamefont {J.}~\bibnamefont {Helsen}}, \bibinfo
  {author} {\bibfnamefont {D.}~\bibnamefont {Elkouss}}, \ and\ \bibinfo
  {author} {\bibfnamefont {S.}~\bibnamefont {Wehner}},\ }\href@noop {}
  {\enquote {\bibinfo {title} {{The QPTomographer project}},}\ }\bibinfo
  {howpublished} {\url{https://github.com/Tomographer/QPtomographer}} (\bibinfo
  {year} {2018})\BibitemShut {NoStop}%
\bibitem [{\citenamefont {{Montanaro}}\ and\ \citenamefont {{de
  Wolf}}(2013)}]{propertytesting}%
  \BibitemOpen
  \bibfield  {author} {\bibinfo {author} {\bibfnamefont {A.}~\bibnamefont
  {{Montanaro}}}\ and\ \bibinfo {author} {\bibfnamefont {R.}~\bibnamefont {{de
  Wolf}}},\ }\href@noop {} {\bibfield  {journal} {\bibinfo  {journal} {ArXiv
  e-prints}\ } (\bibinfo {year} {2013})},\ \Eprint
  {http://arxiv.org/abs/1310.2035} {arXiv:1310.2035 [quant-ph]} \BibitemShut
  {NoStop}%
\bibitem [{\citenamefont {Schumacher}(1996)}]{Schumacher1996PRA_sending}%
  \BibitemOpen
  \bibfield  {author} {\bibinfo {author} {\bibfnamefont {B.}~\bibnamefont
  {Schumacher}},\ }\href {\doibase 10.1103/PhysRevA.54.2614} {\bibfield
  {journal} {\bibinfo  {journal} {Physical Review A}\ }\textbf {\bibinfo
  {volume} {54}},\ \bibinfo {pages} {2614} (\bibinfo {year} {1996})},\ \Eprint
  {http://arxiv.org/abs/9604023} {arXiv:9604023 [quant-ph]} \BibitemShut
  {NoStop}%
\bibitem [{\citenamefont {Gilchrist}\ \emph {et~al.}(2005)\citenamefont
  {Gilchrist}, \citenamefont {Langford},\ and\ \citenamefont
  {Nielsen}}]{Gilchrist2005PRA_processes}%
  \BibitemOpen
  \bibfield  {author} {\bibinfo {author} {\bibfnamefont {A.}~\bibnamefont
  {Gilchrist}}, \bibinfo {author} {\bibfnamefont {N.~K.}\ \bibnamefont
  {Langford}}, \ and\ \bibinfo {author} {\bibfnamefont {M.~A.}\ \bibnamefont
  {Nielsen}},\ }\href {\doibase 10.1103/PhysRevA.71.062310} {\bibfield
  {journal} {\bibinfo  {journal} {Physical Review A}\ }\textbf {\bibinfo
  {volume} {71}},\ \bibinfo {pages} {062310} (\bibinfo {year} {2005})},\
  \Eprint {http://arxiv.org/abs/0408063} {arXiv:0408063 [quant-ph]}
  \BibitemShut {NoStop}%
\bibitem [{\citenamefont {Pfister}\ \emph {et~al.}(2018)\citenamefont
  {Pfister}, \citenamefont {Rol}, \citenamefont {Mantri}, \citenamefont
  {Tomamichel},\ and\ \citenamefont {Wehner}}]{Pfister2018}%
  \BibitemOpen
  \bibfield  {author} {\bibinfo {author} {\bibfnamefont {C.}~\bibnamefont
  {Pfister}}, \bibinfo {author} {\bibfnamefont {M.~A.}\ \bibnamefont {Rol}},
  \bibinfo {author} {\bibfnamefont {A.}~\bibnamefont {Mantri}}, \bibinfo
  {author} {\bibfnamefont {M.}~\bibnamefont {Tomamichel}}, \ and\ \bibinfo
  {author} {\bibfnamefont {S.}~\bibnamefont {Wehner}},\ }\href {\doibase
  10.1038/s41467-017-00961-2} {\bibfield  {journal} {\bibinfo  {journal}
  {Nature Communications}\ }\textbf {\bibinfo {volume} {9}},\ \bibinfo {pages}
  {27} (\bibinfo {year} {2018})}\BibitemShut {NoStop}%
\bibitem [{\citenamefont {Chib}\ and\ \citenamefont
  {Greenberg}(1995)}]{MHalgorithm}%
  \BibitemOpen
  \bibfield  {author} {\bibinfo {author} {\bibfnamefont {S.}~\bibnamefont
  {Chib}}\ and\ \bibinfo {author} {\bibfnamefont {E.}~\bibnamefont
  {Greenberg}},\ }\href {http://www.jstor.org/stable/2684568} {\bibfield
  {journal} {\bibinfo  {journal} {The American Statistician}\ }\textbf
  {\bibinfo {volume} {49}},\ \bibinfo {pages} {327} (\bibinfo {year}
  {1995})}\BibitemShut {NoStop}%
\bibitem [{\citenamefont {O'Donoghue}\ \emph
  {et~al.}(2016{\natexlab{a}})\citenamefont {O'Donoghue}, \citenamefont {Chu},
  \citenamefont {Parikh},\ and\ \citenamefont {Boyd}}]{SCS1}%
  \BibitemOpen
  \bibfield  {author} {\bibinfo {author} {\bibfnamefont {B.}~\bibnamefont
  {O'Donoghue}}, \bibinfo {author} {\bibfnamefont {E.}~\bibnamefont {Chu}},
  \bibinfo {author} {\bibfnamefont {N.}~\bibnamefont {Parikh}}, \ and\ \bibinfo
  {author} {\bibfnamefont {S.}~\bibnamefont {Boyd}},\ }\href@noop {} {\enquote
  {\bibinfo {title} {{SCS}: Splitting conic solver, version 1.2.6},}\ }\bibinfo
  {howpublished} {\url{https://github.com/cvxgrp/scs}} (\bibinfo {year}
  {2016}{\natexlab{a}})\BibitemShut {NoStop}%
\bibitem [{\citenamefont {O'Donoghue}\ \emph
  {et~al.}(2016{\natexlab{b}})\citenamefont {O'Donoghue}, \citenamefont {Chu},
  \citenamefont {Parikh},\ and\ \citenamefont {Boyd}}]{SCS2}%
  \BibitemOpen
  \bibfield  {author} {\bibinfo {author} {\bibfnamefont {B.}~\bibnamefont
  {O'Donoghue}}, \bibinfo {author} {\bibfnamefont {E.}~\bibnamefont {Chu}},
  \bibinfo {author} {\bibfnamefont {N.}~\bibnamefont {Parikh}}, \ and\ \bibinfo
  {author} {\bibfnamefont {S.}~\bibnamefont {Boyd}},\ }\href
  {http://stanford.edu/~boyd/papers/scs.html} {\bibfield  {journal} {\bibinfo
  {journal} {Journal of Optimization Theory and Applications}\ }\textbf
  {\bibinfo {volume} {169}},\ \bibinfo {pages} {1042} (\bibinfo {year}
  {2016}{\natexlab{b}})}\BibitemShut {NoStop}%
\bibitem [{\citenamefont {Ambegaokar}\ and\ \citenamefont
  {Troyer}(2010)}]{Ambegaokar-Troyer-MH-methods}%
  \BibitemOpen
  \bibfield  {author} {\bibinfo {author} {\bibfnamefont {V.}~\bibnamefont
  {Ambegaokar}}\ and\ \bibinfo {author} {\bibfnamefont {M.}~\bibnamefont
  {Troyer}},\ }\href {\doibase 10.1119/1.3247985} {\bibfield  {journal}
  {\bibinfo  {journal} {American Journal of Physics}\ }\textbf {\bibinfo
  {volume} {78}},\ \bibinfo {pages} {150} (\bibinfo {year} {2010})},\ \Eprint
  {http://arxiv.org/abs/0906.0943} {arXiv:0906.0943 [physics.comp-ph]}
  \BibitemShut {NoStop}%
\bibitem [{\citenamefont {Wolf}(2012)}]{Wolf_book}%
  \BibitemOpen
  \bibfield  {author} {\bibinfo {author} {\bibfnamefont {M.}~\bibnamefont
  {Wolf}},\ }\href@noop {} {\bibfield  {journal} {\bibinfo  {journal} {Lecture
  Notes Available at
  http://www-m5.ma.tum.de/foswiki/pub/M5/Allgemeines/MichaelWolf/QChannelLecture.pdf.}\
  } (\bibinfo {year} {2012})}\BibitemShut {NoStop}%
\bibitem [{\citenamefont {Nielsen}\ and\ \citenamefont
  {Chuang}(2011)}]{Chuang_book}%
  \BibitemOpen
  \bibfield  {author} {\bibinfo {author} {\bibfnamefont {M.~A.}\ \bibnamefont
  {Nielsen}}\ and\ \bibinfo {author} {\bibfnamefont {I.~L.}\ \bibnamefont
  {Chuang}},\ }\href@noop {} {\emph {\bibinfo {title} {Quantum Computation and
  Quantum Information: 10th Anniversary Edition}}},\ \bibinfo {edition} {10th}\
  ed.\ (\bibinfo  {publisher} {Cambridge University Press},\ \bibinfo {address}
  {New York, NY, USA},\ \bibinfo {year} {2011})\BibitemShut {NoStop}%
\bibitem [{\citenamefont {Watrous}(2016)}]{Watrous_book}%
  \BibitemOpen
  \bibfield  {author} {\bibinfo {author} {\bibfnamefont {J.}~\bibnamefont
  {Watrous}},\ }\href@noop {} {\bibfield  {journal} {\bibinfo  {journal} {Draft
  Book Available at https://cs.uwaterloo.ca/~watrous/TQI/}\ } (\bibinfo {year}
  {2016})}\BibitemShut {NoStop}%
\bibitem [{\citenamefont {{Watrous}}(2009)}]{W09}%
  \BibitemOpen
  \bibfield  {author} {\bibinfo {author} {\bibfnamefont {J.}~\bibnamefont
  {{Watrous}}},\ }\href@noop {} {\bibfield  {journal} {\bibinfo  {journal}
  {ArXiv e-prints}\ } (\bibinfo {year} {2009})},\ \Eprint
  {http://arxiv.org/abs/0901.4709} {arXiv:0901.4709 [quant-ph]} \BibitemShut
  {NoStop}%
\bibitem [{\citenamefont {Yamamoto}\ \emph {et~al.}(2005)\citenamefont
  {Yamamoto}, \citenamefont {Hara},\ and\ \citenamefont
  {Tsumura}}]{Yamamoto2005PRA_suboptimal}%
  \BibitemOpen
  \bibfield  {author} {\bibinfo {author} {\bibfnamefont {N.}~\bibnamefont
  {Yamamoto}}, \bibinfo {author} {\bibfnamefont {S.}~\bibnamefont {Hara}}, \
  and\ \bibinfo {author} {\bibfnamefont {K.}~\bibnamefont {Tsumura}},\ }\href
  {\doibase 10.1103/PhysRevA.71.022322} {\bibfield  {journal} {\bibinfo
  {journal} {Physical Review A}\ }\textbf {\bibinfo {volume} {71}},\ \bibinfo
  {pages} {022322} (\bibinfo {year} {2005})},\ \Eprint
  {http://arxiv.org/abs/0606105} {arXiv:0606105 [quant-ph]} \BibitemShut
  {NoStop}%
\bibitem [{\citenamefont {Fletcher}\ \emph {et~al.}(2007)\citenamefont
  {Fletcher}, \citenamefont {Shor},\ and\ \citenamefont
  {Win}}]{Fletcher2007PRA_optimum}%
  \BibitemOpen
  \bibfield  {author} {\bibinfo {author} {\bibfnamefont {A.~S.}\ \bibnamefont
  {Fletcher}}, \bibinfo {author} {\bibfnamefont {P.~W.}\ \bibnamefont {Shor}},
  \ and\ \bibinfo {author} {\bibfnamefont {M.~Z.}\ \bibnamefont {Win}},\ }\href
  {\doibase 10.1103/PhysRevA.75.012338} {\bibfield  {journal} {\bibinfo
  {journal} {Physical Review A}\ }\textbf {\bibinfo {volume} {75}},\ \bibinfo
  {pages} {012338} (\bibinfo {year} {2007})},\ \Eprint
  {http://arxiv.org/abs/0606035} {0606035 [quant-ph]} \BibitemShut {NoStop}%
\bibitem [{\citenamefont {Zyczkowski}\ and\ \citenamefont
  {Sommers}(2001)}]{Zyczkowski2001_Induced}%
  \BibitemOpen
  \bibfield  {author} {\bibinfo {author} {\bibfnamefont {K.}~\bibnamefont
  {Zyczkowski}}\ and\ \bibinfo {author} {\bibfnamefont {H.-J.}\ \bibnamefont
  {Sommers}},\ }\href {\doibase 10.1088/0305-4470/34/35/335} {\bibfield
  {journal} {\bibinfo  {journal} {Journal of Physics A}\ }\textbf {\bibinfo
  {volume} {34}},\ \bibinfo {pages} {7111} (\bibinfo {year}
  {2001})}\BibitemShut {NoStop}%
\bibitem [{\citenamefont {Chiribella}(2011)}]{deFinettiChannels}%
  \BibitemOpen
  \bibfield  {author} {\bibinfo {author} {\bibfnamefont {G.}~\bibnamefont
  {Chiribella}},\ }\enquote {\bibinfo {title} {On quantum estimation, quantum
  cloning and finite quantum de finetti theorems},}\ in\ \href {\doibase
  10.1007/978-3-642-18073-6_2} {\emph {\bibinfo {booktitle} {Theory of Quantum
  Computation, Communication, and Cryptography: 5th Conference, TQC 2010,
  Leeds, UK, April 13-15, 2010, Revised Selected Papers}}},\ \bibinfo {editor}
  {edited by\ \bibinfo {editor} {\bibfnamefont {W.}~\bibnamefont {van Dam}},
  \bibinfo {editor} {\bibfnamefont {V.~M.}\ \bibnamefont {Kendon}}, \ and\
  \bibinfo {editor} {\bibfnamefont {S.}~\bibnamefont {Severini}}}\ (\bibinfo
  {publisher} {Springer Berlin Heidelberg},\ \bibinfo {address} {Berlin,
  Heidelberg},\ \bibinfo {year} {2011})\ pp.\ \bibinfo {pages}
  {9--25}\BibitemShut {NoStop}%
\bibitem [{Note1()}]{Note1}%
  \BibitemOpen
  \bibinfo {note} {In fact, in addition to reconstructing the channel $\Lambda
  _{A\to B}$, we may use this procedure to confirm the correct preparation of
  the input state $\sigma _A$.}\BibitemShut {Stop}%
\bibitem [{\citenamefont {A}(1982)}]{Holevo_book}%
  \BibitemOpen
  \bibfield  {author} {\bibinfo {author} {\bibfnamefont {H.}~\bibnamefont
  {A}},\ }\href@noop {} {\emph {\bibinfo {title} {Probabilistic and Statistical
  Aspects of Quantum Theory}}},\ \bibinfo {edition} {1st}\ ed.\ (\bibinfo
  {publisher} {North-Holland, Amsterdam},\ \bibinfo {year} {1982})\BibitemShut
  {NoStop}%
\bibitem [{\citenamefont {Sugiyama}\ \emph {et~al.}(2012)\citenamefont
  {Sugiyama}, \citenamefont {Turner},\ and\ \citenamefont
  {Murao}}]{Sugiyama2012PRA_AdaptiveTomo}%
  \BibitemOpen
  \bibfield  {author} {\bibinfo {author} {\bibfnamefont {T.}~\bibnamefont
  {Sugiyama}}, \bibinfo {author} {\bibfnamefont {P.~S.}\ \bibnamefont
  {Turner}}, \ and\ \bibinfo {author} {\bibfnamefont {M.}~\bibnamefont
  {Murao}},\ }\href {\doibase 10.1103/PhysRevA.85.052107} {\bibfield  {journal}
  {\bibinfo  {journal} {Physical Review A}\ }\textbf {\bibinfo {volume} {85}},\
  \bibinfo {pages} {52107} (\bibinfo {year} {2012})}\BibitemShut {NoStop}%
\bibitem [{\citenamefont {Mahler}\ \emph {et~al.}(2013)\citenamefont {Mahler},
  \citenamefont {Rozema}, \citenamefont {Darabi}, \citenamefont {Ferrie},
  \citenamefont {Blume-Kohout},\ and\ \citenamefont
  {Steinberg}}]{Mahler2013PRL_adaptive}%
  \BibitemOpen
  \bibfield  {author} {\bibinfo {author} {\bibfnamefont {D.~H.}\ \bibnamefont
  {Mahler}}, \bibinfo {author} {\bibfnamefont {L.~A.}\ \bibnamefont {Rozema}},
  \bibinfo {author} {\bibfnamefont {A.}~\bibnamefont {Darabi}}, \bibinfo
  {author} {\bibfnamefont {C.}~\bibnamefont {Ferrie}}, \bibinfo {author}
  {\bibfnamefont {R.}~\bibnamefont {Blume-Kohout}}, \ and\ \bibinfo {author}
  {\bibfnamefont {A.~M.}\ \bibnamefont {Steinberg}},\ }\href {\doibase
  10.1103/PhysRevLett.111.183601} {\bibfield  {journal} {\bibinfo  {journal}
  {Physical Review Letters}\ }\textbf {\bibinfo {volume} {111}},\ \bibinfo
  {pages} {183601} (\bibinfo {year} {2013})}\BibitemShut {NoStop}%
\bibitem [{\citenamefont {{Granade}}\ \emph {et~al.}(2016)\citenamefont
  {{Granade}}, \citenamefont {{Ferrie}},\ and\ \citenamefont
  {{Flammia}}}]{Granade2016arXiv_adaptive}%
  \BibitemOpen
  \bibfield  {author} {\bibinfo {author} {\bibfnamefont {C.}~\bibnamefont
  {{Granade}}}, \bibinfo {author} {\bibfnamefont {C.}~\bibnamefont {{Ferrie}}},
  \ and\ \bibinfo {author} {\bibfnamefont {S.~T.}\ \bibnamefont {{Flammia}}},\
  }\href@noop {} {\bibfield  {journal} {\bibinfo  {journal} {ArXiv e-prints}\ }
  (\bibinfo {year} {2016})},\ \Eprint {http://arxiv.org/abs/1605.05039}
  {arXiv:1605.05039 [quant-ph]} \BibitemShut {NoStop}%
\bibitem [{\citenamefont {Pogorelov}\ \emph {et~al.}(2017)\citenamefont
  {Pogorelov}, \citenamefont {Struchalin}, \citenamefont {Straupe},
  \citenamefont {Radchenko}, \citenamefont {Kravtsov},\ and\ \citenamefont
  {Kulik}}]{Pogorelov2017PRA_adaptive}%
  \BibitemOpen
  \bibfield  {author} {\bibinfo {author} {\bibfnamefont {I.~A.}\ \bibnamefont
  {Pogorelov}}, \bibinfo {author} {\bibfnamefont {G.~I.}\ \bibnamefont
  {Struchalin}}, \bibinfo {author} {\bibfnamefont {S.~S.}\ \bibnamefont
  {Straupe}}, \bibinfo {author} {\bibfnamefont {I.~V.}\ \bibnamefont
  {Radchenko}}, \bibinfo {author} {\bibfnamefont {K.~S.}\ \bibnamefont
  {Kravtsov}}, \ and\ \bibinfo {author} {\bibfnamefont {S.~P.}\ \bibnamefont
  {Kulik}},\ }\href {\doibase 10.1103/PhysRevA.95.012302} {\bibfield  {journal}
  {\bibinfo  {journal} {Phys. Rev. A}\ }\textbf {\bibinfo {volume} {95}},\
  \bibinfo {pages} {012302} (\bibinfo {year} {2017})}\BibitemShut {NoStop}%
\bibitem [{\citenamefont {Claude~Dellacherie}(1978)}]{Disintegration}%
  \BibitemOpen
  \bibfield  {author} {\bibinfo {author} {\bibfnamefont {P.-A.~M.}\
  \bibnamefont {Claude~Dellacherie}},\ }\href@noop {} {\emph {\bibinfo {title}
  {Probabilities and Potential}}},\ Vol.~\bibinfo {volume} {29}\ (\bibinfo
  {publisher} {North-Holland Mathematics Studies},\ \bibinfo {year}
  {1978})\BibitemShut {NoStop}%
\end{thebibliography}%

\end{document}